\newcommand\bell{\bm{\ell}} 
\theoremstyle{definition}
\newtheorem{assumption}{Assumption}		
\newtheorem*{definition*}{Definition}		
\newtheorem*{assumption*}{Assumptions}		
\theoremstyle{plain}
\newtheorem{theorem}{Theorem}		
\newtheorem{lemma}{Lemma}		
\newtheorem{proposition}{Proposition}		
\newtheorem*{corollary*}{Corollary}		
\theoremstyle{remark}
\newtheorem*{remark*}{Remark}
\DeclareMathOperator*{\argmin}{arg\,min}		
\DeclareMathOperator{\trace}{trace}
\newcommand{\R}{\mathbb{R}}		
\newcommand{\debug}[1]{#1}		
\newcommand{\blue}[1]{{#1}}
\newcommand{\newmacro}[2]{\newcommand{#1}{\debug{#2}}}		
\newmacro{\ball}{\mathbb{B}}		
\newcommand{\acks}[1]{\section*{Acknowledgments}#1}
\newmacro{\play}{i}		
\newmacro{\game}{\mathcal{G}}		
\newmacro{\stratbeqci}{\stratb^{\ci*}}
\newmacro{\strateqci}{\strat^{\ci*}}
\newmacro{\potentci}{\potent^{\ci}}
\newcommand{\Cestim}{C_{\mathrm{estim}}} 
\newcommand{\Cestimols}{C^{\mathrm{OLS}}_{\mathrm{estim}}} 
\newcommand{\Cpublic}[1]{\Cestim(#1)} 
\newcommand\esp[1]{\mathbb{E}\left[#1\right]}     
\newcommand\p[1]{\left(#1\right)}                 
\newmacro{\ls}{\mathtt{LS}}
\newmacro{\gls}{\mathtt{GLS}}
\newmacro{\subplayset}{S_N}
\newmacro{\ols}{\mathtt{OLS}}
\newmacro{\mse}{\mathtt{MSE}}
\newmacro{\var}{\mathtt{Var}}
\newmacro{\bias}{\mathtt{Bias}}
\newmacro{\opt}{\ensuremath{\mathtt{opt}}}
\newmacro{\NE}{\mathtt{NE}}
\newmacro{\pos}{\mathtt{PoS}}
\newmacro{\poa}{\mathtt{PoA}}
\newmacro{\B}{\mathcal B}
\newmacro{\E}{\mathbb E}
\newmacro{\T}{T}
\newmacro{\prcs}{A}
\newmacro{\cov}{V}
\newmacro{\ndim}{d}
\newmacro{\dv}{{\boldsymbol d}}
\newmacro{\ud}{\textrm{d}}
\newmacro{\bigzero}{\makebox(0,0){\text{\huge0}}}
\newmacro{\SC}{\ensuremath{C_{\text{social}}}}
\newmacro{\Util}{J}
\newmacro{\ci}{\mathrm{ci}}
\newmacro{\Vm}{\bar{V}}
\newmacro{\epsr}{\varepsilon^{(r)}}
\newmacro{\etar}{\eta^{(r)}}
\newmacro{\Vr}{V^{(r)}}
\newmacro{\xb}{\boldsymbol{x}}
\newmacro{\yb}{\boldsymbol{y}}
\newmacro{\db}{\boldsymbol{d}}
\newmacro{\betab}{\boldsymbol{\beta}}
\newmacro{\betabh}{\hat{\boldsymbol{\beta}}}
\newmacro{\betah}{\hat{\beta}}
\newmacro{\psib}{\boldsymbol{\psi}}
\newmacro{\phib}{\boldsymbol{\phi}}
\newmacro{\epsilonb}{\boldsymbol{\varepsilon}}
\newmacro{\lambdab}{\boldsymbol{\lambda}}
\newmacro{\lambdamin}{\lambda_{\textrm{min}}}
\newmacro{\lambdamax}{\lambda_{\textrm{max}}}
\newmacro{\ellmax}{\ell_{\textrm{max}}}
\newmacro{\sigmamax}{\sigma_{\textrm{max}}}
\newmacro{\Vmax}{V_{\textrm{max}}}
\newmacro{\lambdaeq}{\lambda^{*}}
\newmacro{\lambdabeq}{\lambdab^{*}}
\newmacro{\Lambdaeq}{\Lambda^{*}}
\newmacro{\cp}{c^{\prime}}
\newmacro{\cpp}{c^{\prime\prime}}
\newmacro{\deltab}{\boldsymbol{\delta}}
\newmacro{\lambdabmax}{\lambdab_{max}}
\newmacro{\Exp}{E}
\newmacro{\Xset}{\mathcal{X}}
\newmacro{\yt}{\Tilde{y}}
\newmacro{\ybt}{\tilde{\yb}}
\newmacro{\info}{M}
\newmacro{\intermediate}{g}
\newmacro{\majorprec}{\nbplay^{-\frac{1 - \pmin}{1 - \pmax}\frac{\qhomo + 1}{\pmin + \qhomo}}}
\newmacro{\playsum}{j}
\newmacro{\nbplay}{n}
\newmacro{\playset}{N}
\newmacro{\types}{\mathcal{T}}
\newmacro{\phomo}{p}
\newmacro{\qhomo}{q}
\newmacro{\pmax}{{\phomo_{\max}}}
\newmacro{\pmin}{{\phomo_{\min}}}
\newmacro{\amax}{{a_{\max}}}
\newmacro{\amin}{{a_{\min}}}
\newmacro{\cmax}{{c_{\max}}}
\newmacro{\homfactor}{a}
\newmacro{\scalar}{F}
\newmacro{\potent}{\phi}
\newmacro{\potentc}{\tilde{\phi}}
\newmacro{\avgutil}{J}
\newmacro{\proba}{\mu}
\newmacro{\estimcost}{f}
\newmacro{\costprivacy}{cp}
\newmacro{\cost}{c}		
\newmacro{\payoffns}{J^{ns}}
\newmacro{\payoffci}{J^{ci}}
\newmacro{\potentns}{\potent^{ns}}
\newmacro{\muj}{\mu_{\mathrm{joint}}}
\newcommand\espj[1]{\mathbb{E}_{\muj}\left[#1\right]} 
\newmacro{\strat}{\lambda}
\newmacro{\strateq}{\strat^*}
\newmacro{\strateqt}{\tilde{\strat}^*}
\newmacro{\stratb}{\boldsymbol{\strat}}
\newmacro{\stratset}{\mathcal{F}}
\newmacro{\stratbeq}{\stratb^*}
\newmacro{\stratbeqt}{\tilde{\stratb}^*}
\newmacro{\stratbt}{\tilde{\stratb}}
\newmacro{\stratt}{\tilde{\strat}}
\newmacro{\lambdans}{\lambda_{\textrm{ns}}}
\newmacro{\lambdabns}{\lambdab_{\textrm{ns}}}
\newmacro{\lns}{\ell_{\textrm{ns}}}
\newmacro{\stratsingle}{\lambda_{\textrm{single}}}
\newmacro{\nonstrat}{M}
\newmacro{\des}{\nu}
\newmacro{\optdes}{\des^*}
\newmacro{\optdest}{\tilde{\des}^*}
\newmacro{\supportsize}{k}
\newmacro{\degree}{{d}}
\newmacro{\abscisse}{x}
\newmacro{\num}{k}
\newmacro{\const}{C}
\newmacro{\constmax}{D}
\newmacro{\constmin}{d}
\newmacro{\exponent}{{\alpha}}
\newmacro{\secexponent}{{\beta}}
\newcommand{\eps}{\varepsilon}
\title{Asymptotic Degradation of Linear Regression Estimates with Strategic Data Sources}
\author{
  Benjamin Roussillon\thanks{Add info} \\
Univ. Grenoble Alpes, Inria, CNRS, Grenoble INP, LIG\\
  \texttt{benjamin.roussillon@inria.fr} \\
    \And
Nicolas Gast \\
Univ. Grenoble Alpes, Inria, CNRS, Grenoble INP, LIG\\
\texttt{nicolas.gast@inria.fr}
   \And
 Patrick Loiseau\\
  Univ. Grenoble Alpes, Inria, CNRS, Grenoble INP, LIG\\
  \texttt{patrick.loiseau@inria.fr} \\
  \And
 Panayotis Mertikopoulos\\
  Univ. Grenoble Alpes, Inria, CNRS, Grenoble INP, LIG, Criteo AI Lab\\
  \texttt{panayotis.mertikopoulos@inria.fr} 
}
\begin{document}
\maketitle

\begin{abstract}
%
%
We consider the problem of linear regression from strategic data sources with a public good component, i.e., when data is provided by strategic agents who seek to minimize an individual provision cost for increasing their data's precision while benefiting from the model's overall precision. In contrast to previous works, our model tackles the case where there is uncertainty on the attributes characterizing the agents' data---a critical aspect of the problem when the number of agents is large.  We provide a characterization of the game's equilibrium, which reveals an interesting connection with optimal design. Subsequently, we focus on the asymptotic behavior of the covariance of the linear regression parameters estimated via generalized least squares as the number of data sources becomes large. We provide upper and lower bounds for this covariance matrix and we show that, when the agents' provision costs are superlinear, the model's covariance converges to zero but at a slower rate relative to virtually all learning problems with exogenous data. On the other hand, if the agents' provision costs are linear, this covariance fails to converge. This shows that even the basic property of consistency of generalized least squares estimators is compromised when the data sources are strategic.

\end{abstract}



\section{Introduction}
\label{sec:introduction}

Consider a linear regression problem consisting of $n$ data points $(x_i, y_i)_{i\in \{1, \cdots, n\}}$, where $x_i$ is a vector of independent variables and $y_i \in \R$ is the corresponding response variable.
Assuming that these variables are linked by a linear model
\abovedisplayskip=1mm
\begin{equation}
\label{eq:linear}
y_i
	= \betab^{\top} x_i + \eps_i
\end{equation}
\belowdisplayskip=1mm
where the $\eps_i$ are mutually independent noise variables, an analyst aims at estimating the parameter vector $\betab$.
If the variance of each $\eps_{i}$ is known and uniformly bounded in $n$ (but not necessarily identical across $i$), the most widespread algorithm to estimate $\betab$ is the famous generalized least squares (GLS) estimator, which is well-known to enjoy important statistical properties.
In particular, for any fixed $n$, Aitken's theorem \cite{Aitken35a} shows that GLS is \emph{BLUE}, i.e., best among linear unbiased estimators.
GLS is also \emph{consistent} (i.e., it converges in probability towards $\betab$ as $n \!\to\! \infty$), and its covariance matrix decreases to zero at a rate $\Theta(1/n)$\footnote{The notation $g(n)=\Theta(f(n))$ indicates that functions $f$ and $g$ grow at the same rate as $n$ goes to infinity.} \cite{Gyorfi02a,hastie_09_elements-of.statistical-learning}.

In a number of recent applications, however, the assumptions underlying those statistical properties do not hold
because the data is provided by strategic agents who incur a cost for providing high-precision (i.e., low-variance) data.
There can be multiple reasons.
If the data is sensitive personal information (as in medical applications),
revealing it with high precision entails a privacy cost that might incentivize individuals to decrease the disclosure precision
\cite{Warner65a,Duchi13a}.
Also, producing high-precision data may require a certain amount of effort (possibly monetary):
this is the case in crowdsourcing \cite{Dasgupta13a} or recommender systems \cite{Ekstrand11a,Avery97a,Harper05a} where providing content or feedback requires effort, or in applications where the data is produced by costly computations.

Considerations of this kind have become central in
an emerging literature on learning with \emph{strategic data sources}, i.e., when the precision of the provided data incurs a cost for the agent providing it.
This literature mainly examines the design of monetary incentive mechanisms to optimize the model's error assuming that agents maximize their rewards minus their individual provision costs, see e.g.,  \cite{Cai15a,Liu16a,Westenbroek20a} and references therein.
In many applications, however, the underlying model also has a \emph{public good component}---i.e., the agents \emph{also} benefit from the model's precision.
This is the case in recommender systems (where users benefit from the overall service quality),
medical applications (where individuals benefit from the data analysis through improved treatments or better healthcare advice),
federated learning \cite{yang2019federated,konevcny2016federated,geyer2017differentially}, etc.
An additional issue in such applications is that the number of participating agents is typically large, so there is a commensurate degree of uncertainty regarding the state or incentives of other agents.
In this context, the validity of standard results on linear regression are longer guaranteed:
in particular, recent work has shown that the GLS estimator is no longer BLUE in the presence of strategic data sources \cite{Gast20a}.
Going deeper, this leads to the following open questions:
\emph{Does GLS remain consistent in the present of strategic agents?
And, if so, does it still enjoy a $\Theta(1/n)$ convergence rate as in the non-strategic case?}

\paragraph{Our contributions.}

In this paper, we provide answers to these questions by means of a data provision model that accounts for both factors identified above: a public good component and uncertainty regarding the agents' types (their private data).
Specifically, we propose a \emph{linear regression game} in which the $i$-th agent's type is characterized by a $d$-dimensional \emph{attribute vector} $x_i \in \R^{d}$ which is drawn i.i.d. across agents
(but is otherwise assumed to be private information).
This attribute vector is the primitive data associated to each agent and forms the basis of the linear model \eqref{eq:linear}: each agent decides
the precision of the response variable $y_i\in\R$ that is revealed to the analyst as a function of $x_{i}$.
This choice is intended to balance the agent's data provision cost against a public good benefit that depends on the precision of the overall model;
importantly, this choice is also made under uncertainty, because the players' attribute vectors are not assumed a priori known.

In this setting, we obtain the following general results:
\vspace{-1mm}
\begin{enumerate}[1.,leftmargin=1.2em]
\item
We provide an explicit characterization of the game's equilibria for different families of data provision cost functions.
Specifically, if the data provision costs are linear in the precision of the disclosed response, the equilibrium distribution of precisions over the space of attribute vectors corresponds to the solution of an optimal design problem.
By contrast, this characterization is no longer valid if the data provision costs are superlinear. \vspace{-1mm}
\item
Subsequently, we analyze the precision of the estimated model in the limit $n\to\infty$.
In this asymptotic regime, our main result is that, for superlinear costs, the GLS estimator remains consistent, but its covariance decreases to zero at a rate \emph{slower} than the standard $\Theta(1/n)$ rate.
Surprisingly, as the data provision costs become approximately linear, this rate becomes progressively slower, to the point that the GLS estimator \emph{fails to be consistent} if the data provision costs are linear.
\vspace{-1mm}
\end{enumerate}

Our analysis reveals that the key reason behind this asymptotic degradation of the GLS estimator is the following:
as $n\to\infty$, the response provided by each agent at equilibrium tends to deteriorate because of the increasing free-riding effect inherent to public good games.
When the data provision costs are linear, this decrease cannot be compensated by the increase in the number of data points, so the GLS estimator becomes inconsistent.
In this regard, our results illustrate clearly
how free-riding can disrupt even the most fundamental statistical properties of GLS estimators.

\paragraph{Related work.}

There is a growing body of works on scenarios where one wants to learn from data provided by sources that choose their effort when generating data \cite{Cai15a,Luo15a,Liu16a,Westenbroek20a}.
These works assume that the data sources maximize a monetary incentive minus effort exerted and look for mechanisms that minimize the model's error under the assumption that the analyst collecting data cannot see the effort exerted by the data sources.
The data elicitation and crowdsourcing literatures contain similar mechanism design problems for cases where either the effort exerted or the data report (or both) are unverifiable \cite{Frongillo15a,Dasgupta13a,shah2016double, Kong_Schoenebeck_Tao_Yu_2020}.
More broadly, there is an important literature on mechanism design for statistical estimation problems that assumes that the data sources are strategic in some way, notably for cases where agents may lie on their cost for revealing data \cite{Abernethy15a,Chen18b,Chen19a} (see also related problems of mechanism design in the context of differential privacy \cite{ghosh-roth:privacy-auction,Dwork14a}). \looseness=-1

Several works analyze mechanism design problems related to linear regression with strategic data sources, where the agents directly report their response variable $y_i$ (or their input variable $x_i$) and may lie about it or strategically optimize it \cite{perote2004,dekel2010incentive,Caragiannis16a,Chen18a,Ben-Porat19a,Hossain20a,Shavit20a, chen2020truthful, golrezaei2021dynamic, amin2014repeated, amin2013learning} (see also similar problems in the context of classification \cite{Meir12a,Hardt16a,Dong18a,Milli19a,Kleinberg19a,Zhang19a,Miller20a,Tsirtsis20a,Braverman20a}).
In contrast, we assume that the agents choose the precision of the data provided.
More importantly, the fundamental difference is that those works all assume that the agents are motivated by the accuracy or decision of the learned model in their own direction while we assume that agents equally benefit from the public good component.

Games with a public good component have been the subject of many studies in economics (see \cite{Morgan00a} and references therein);
however, in all these works the public good is simply the sum of contributions from all agents.
To the best of our knowledge, the only work that considers a public good component in the context of learning from strategic data sources is \cite{Gast20a} (see also an earlier version of the same in \cite{Ioannidis13a} and \cite{Chessa15c} in the simple case of estimating a population average), which is perhaps the closest antecedent to our paper. 
%
The authors of \cite{Gast20a} propose a game-theoretic model with a public good component and \emph{common knowledge} of player types (i.e., the agents' attribute vectors $x_i$ are public). We build on their model, but we introduce uncertainty on the agents' attributes (i.e., they are considered private)---which is crucial to frame our main questions (link to optimal design and asymptotic precision) rigorously. More importantly though, our analysis is of a different nature than that of \cite{Gast20a}, both technically and conceptually. The analysis of \cite{Gast20a} only concerns the game's \emph{price of stability} (PoS) and the validity of Aitken's theorem.
Specifically, \cite{Gast20a} shows that, for any given $n$, GLS may fail to be BLUE (i.e., Aitken's theorem fails), but it is otherwise ``uniformly close'' to optimal: the improvement ratio relative to any other linear unbiased estimator is independent of $n$, so GLS is still ``order optimal'' in their model.
In contrast, our results show that GLS (and other linear unbiased estimator by proxy) does not even produce the same convergence \emph{rate} as non-strategic regression. This means that characterizing this convergence rate is ultimately more important to the analyst than choosing an ad-hoc linear unbiased estimator which can only lead to constant-term multiplicative improvement. To ease the comparison of our work to \cite{Gast20a}, we include in Section~\ref{sec:preliminary} an adaptation of their results for our model with uncertainty, we then provide a detailed technical comparison in further sections that contain our new results. Note finally that our model with uncertainty enables uncovering an interesting connection to optimal design, which \cite{Gast20a} does not touch upon. 
We also note that our game has the structure of an aggregative game in the sense of \cite{cornes2012fully} which, however, does not offer any further insights.

\section{Problem setup and assumptions}
\label{sec:setup}


\paragraph{The linear regression game}

We consider a model of strategic data sharing in which a group of $n$ agents wants to collectively learn a linear model $y \approx \betab^{\top} x$.
Here, $x$ is a $d$-dimensional vector, $y$ is a scalar and the vector $\betab$ represents the weights of the linear model that agents want to estimate. Each agent $i\in N:=\{1, \cdots, n\}$ is associated to an \emph{attribute vector} $x_i$ which is drawn i.i.d. across agents from a set of possible attribute profiles $\Xset \subseteq \R^{d}$.
Then, based on these attributes, each agent can select
a \emph{precision level} $\lambda_i(x_i)\in\R_+$ and produce an unbiased estimate $\hat{y}_i$ of $\betab^{\top}x_i$ with this precision.%
\footnote{We can impose an upper bound $\lambdamax$ on the precision that an agent can choose; our results would still hold for large-enough $n$ as the constraint is never binding.
In the sequel, we assume $\lambdamax = \infty$ to simplify the exposition.\looseness=-1}
More explicitly, the response variable reported by the $i$-th agent is
\abovedisplayskip=1mm
\begin{equation}
\hat{y}_i
	= \betab^{\top}x_i+\eps_i,
\end{equation}
\belowdisplayskip=1mm
where $\eps_i$ is an error term of mean $0$ and variance $1/\lambda_i(x_i)$.
%
Agents send this estimate $\hat{y}_i$, along with their values of $x_i$ and of the precision $\lambda_i(x_i)$ to an aggregator that publicly discloses an estimate $\betabh$.
The errors terms $\eps_i$ are assumed to be independent, but we do not make any further assumption on their distribution.
We then posit that each agent tries to balance a trade-off between two components:
\vspace{-1mm}
\begin{enumerate}[leftmargin=1em]
\item
\emph{Idiosyncratic cost:}
The value $\hat{y}_i$ may be either sensitive or costly to produce.
It is sensitive for example when it represents a disease likelihood, a total debt or any attribute that might hurt the agent if it is disclosed with full precision (e.g., by a potential increase in cost of health insurance):
here, the agent possesses a value $y_i$ but only reveals a noisy version of it $\hat{y}_i$.
It is costly to produce when it is the result of a simulation involving heavy computations, or when it requires human work.
We represent all these scenarios by assuming that releasing an estimator $\hat{y}_i$ with precision $\lambda_i(x_i)$ induces a cost $c_i(\lambda_i(x_i))$ to agent $i$. We refer to it as the \emph{\textbf{(data) provision cost}}.

\item
\emph{Public good benefit:}
A key feature of our model is that all agents benefit from the learned model $\betabh$.
For example, in a medical context, agents would be interested to know that a given disease is correlated to their weight or cholesterol level;
in recommender systems, agents might be interested to know what affects the good rating of a restaurant;
etc.
We model this benefit as a \emph{public good}, that is, we assume that each agent benefits equally from the estimated model's precision---which, in turn, depends on each agent's prescribed precision.
As it is easier to maintain a cost-oriented perspective, we represent this by considering that each agents incurs a cost $\Cpublic{\lambdab}$, where $\lambdab = [\lambda_{\play}]_{i\in N}$.
We refer to it as the \emph{\textbf{estimation cost}}. \blue{It should be noted that this public good aspect is central to the situations we model and to the results we obtain. Indeed, our proofs rely on the fact that the game is a potential game (as stated in Proposition~\ref{thm.uniqueness}). This property is implied by the public good aspect.}
\end{enumerate}

\begin{remark*}
This model is particularly relevant in the context of federated learning \cite{yang2019federated}.
There, each agent performs a local estimation and the estimations are combined to get a model.
This paradigm can be used for reasons of efficiency (many agents, perform a local optimization \cite{konevcny2016federated}) or privacy (agents want to compute a joint representative model without explicitly having to share their personal data \cite{geyer2017differentially}).
Our model can be viewed as an instance of both cases. \looseness=-1
\end{remark*}

To proceed, we model the collective behavior of agents by considering a game in which each agent $i\in N$ chooses their strategy $\lambda_i:\Xset\to \R_+$ to minimize their cost $J_i(\lambda_i,\lambdab_{-i})$, defined here as
\begin{align}
\label{eq:individual_cost}
J_i(\lambda_i,\lambdab_{-i})
	= \esp{c_i(\lambda_i(x_i))} + \Cpublic{\lambdab},
\end{align}
where the expectation is taken
with respect to the law $\mu$ of the attribute vectors $x_{i}$.
Here, $\lambdab_{-i}$ denotes the collection of strategies of all agents except the $i$-th one,
and $\lambdab = (\lambda_i, \lambdab_{-i})$ will be called a \emph{precision profile}.
Note that, given that agent $i$ chooses the function $\lambda_i:\Xset\to \R_+$, minimizing the expected provision cost $\esp{c_i(\lambda_i(x_i))}$ as in \eqref{eq:individual_cost} is equivalent to minimizing the cost for each value of $x_i$ separately. 
On the other hand, the definition of $\Cestim$ is given below and involves an expectation on all agents' attributes, which models the uncertainty of an agent about other agents' attributes.

The setting described above defines a game that we refer to as the \emph{linear regression game}.
We emphasize that the strategy of each agent is a function $\lambda_i:\Xset\to \R_+$, i.e., each player's strategy space is the $|\Xset|$-dimensional orthant $\R_{+}^{\Xset}$.
Throughout the paper, to avoid confusion, we will denote such functions with the greek letter $\lambda$ and we will use the latin letter $\ell$ for scalar values such as $\lambda_i(x_i)$.
In the sequel, we analyze the Nash equilibrium of this linear regression game.
A precision profile $\lambdabeq$ is a Nash equilibrium of the game if for all $i \in N$, $\lambdaeq_i$ minimizes $J_i(\lambda_i,\lambdabeq_{-i})$.

\paragraph{Generalized least squares, definition of $\Cestim$}

The analyst receives the $\nbplay$ triplets $(x_i,\hat{y}_i,\lambda_i(x_i))$ and uses them to produce an estimate $\betabh$ that is then sent to the agents.
We assume that the analyst computes this estimate using \emph{generalized least squares} (GLS) and denote it $\betabh_\gls$.
GLS is a generalization of the least squares regression to heteroscedastic data, that is, when the precisions $\lambda_i(x_i)$ of the different data points are different.
It is one of the most widespread estimators for this scenario, in particular because by Aitken's theorem, GLS is optimal in that, for given precisions, it has the smallest covariance (in the positive semi definite sense) among all linear unbiased estimators \cite{Aitken35a}.
The covariance of GLS is independent of $\hat{y}_i$ and equal to $\p{\sum_{i} \lambda_i(x_i) x_ix_i^{\top}}^{-1}$.
Note that this quantity is well defined only if the matrix $\sum_i \lambda_i(x_i) x_ix_i^{\top}$, called the information matrix, is invertible; if not, the estimator $\betabh_\gls$ is not unique as the generalized least squares problem has infinitely many solutions.

Recall that the values $x_i$ are generated
randomly according to a common distribution $\mu$ on $\Xset$.
In what follows, we
consider that the estimation cost is a function of
the expected information matrix, that is:\looseness=-1
\begin{align}
  \label{eq:Upublic}
  \Cpublic{\lambdab} &= \scalar\Bigg(\p{\esp{\sum_{i\in N} \lambda_i(x_i) x_ix_i^{\top}}}^{-1}\Bigg),
\end{align}
where $\scalar: S_+^d \rightarrow \R_+$ is a so-called \emph{scalarization function} that maps the covariance of the estimator to a
cost (we denote by $S^d_+$ the set of positive semidefinite
matrices of size $d\times d$).
Scalarizations are standard in optimal design (see Section~\ref{sec:equilibrium}), and they include standard metrics of a model's quality (such as the mean squared error) as special cases---see details in Appendix~\ref{sec:scalarizations}. 

The public good component \eqref{eq:Upublic} is a function of the inverse of the expected information matrix.
In particular, agent $i$ is included in this expectation, so they minimize a function that includes their individual contribution $x_i$. In this regard, \eqref{eq:Upublic} can be seen as a ``middle ground'' between the approach of \cite{Gast20a} (which assumes complete information of the $x_i$ of each agent), and a Bayesian model where agents would optimize over $\esp{F((\sum_{i\in N} \lambda_i(x_i) x_ix_i^{\top})^{-1})}$. The former is impractical in asymptotic settings while the latter introduces a series of modeling artifacts due to the nonzero probability of encountering an ill-defined linear regression problem when drawing vectors from a finite set.\looseness=-1

Compared to the above, our model requires the same information as the Bayesian framework, but it does not face the same issues. 
In addition, it is possible to establish a precise equivalence between our game and the complete information game when the number of players is large---see Appendix~\ref{sec:equiv} for the details.
Note also that our model relies on GLS, which requires the analyst to know the precision of each data point (i.e., to know the heteroscedastic noise levels). We stress here, however, that our main result also applies to the \emph{ordinary least squares (OLS)} estimator, which does not require this information. We discuss this in detail in Section~\ref{sec:asymptotic}. Note that our results can also be extended to the wider class of regression procedures whose covariance satisfy the convexity and homogeneity assumptions detailed below as these are the only properties of GLS and OLS we use in our proofs.
\looseness=-1

\paragraph{Technical assumptions}

Through our analysis, we make the following assumptions:
\begin{assumption}
  \label{structure.information}
  The set $\Xset$ is finite, $\mu$ has full support on $\Xset$, and $\esp{x_ix_i^{\top}}$ is positive definite.
\end{assumption}
\begin{assumption}
  \label{scalarization.type}
  The scalarization $\scalar: S_{+}^\ndim \rightarrow \R_+$ is
  non-negative, increasing in the positive semidefinite order, and
  convex. $\scalar$ is homogeneous of degree $q$: 
  $\forall a>0, \forall V\in S_+^\ndim$, $\scalar(a\cov) = a^{\qhomo} \scalar(\cov)$.
\end{assumption}
\begin{assumption}
  \label{privacy.type}
  The provision costs $\cost_\play: \R_+ \rightarrow \R_+$ are
  non-negative, increasing, and convex.
\end{assumption} 

Assumption~\ref{structure.information} is a technical assumption that guarantees that the game is well defined and non-trivial.
Specifically, the condition $\esp{x_ix_i^{\top}} \succ 0$ simply implies that the information matrix is invertible for some $\lambdab$, while the finiteness of $\Xset$ simplifies the mathematical formalism and proofs by avoiding subtle compactness issues in the definition of an equilibrium.
Assumption \ref{scalarization.type} ensures that the estimation cost is non-decreasing and convex, which is a very standard setting in game theory.
Together with the homogeneity assumption, these conditions are satisfied by all the commonly used scalarizations in statistics and optimal design such as the trace ($\qhomo \!=\! 1$, A-optimal design), squared Frobenius norm ($\qhomo \!=\! 2$), or mean squared error ($\qhomo \!=\! 1$, I/V-optimal design).

Assumption \ref{privacy.type} is also common. The convexity of the provision costs implies that there is a non-decreasing marginal cost to increase the precision of the data provided. This is reasonable and includes both the linear and superlinear cases, each being relevant. For instance, if data points represent an average over multiple measurements, the precision depends linearly on the number of measurements. If the precision depends on simulations (e.g., involving a discrete search of a continous space), 
obtaining a higher precision might require a polynomial increase in computation time.
\looseness=-1




\section{Preliminaries and First Results}
\label{sec:preliminary}
%

In this section, we discuss structural results for the linear regression game.
In the spirit of \cite{Gast20a},
we show that our game is a potential game and provide a bound on its price of anarchy.\looseness=-1




For a given precision profile, we define $\potent(\lambda_i,\lambdab_{-i})$ as
\begin{equation}
  \label{avg.potential}
  \potent(\stratb) = \sum_{j = 1}^\nbplay
  \esp{\cost_j(\strat_j(x))} + \Cpublic{\stratb}.
\end{equation}
Using the form of $J_i(\lambda_{i},\lambdab_{-i})$ in \eqref{eq:individual_cost}, a strategy $\lambda_i$ minimizes $J_i(\lambda_i,\lambdab_{-i})$ over all possible strategies $\lambda_i$ (for a fixed $\lambdab_{-i}$) if and only if it minimizes \eqref{avg.potential}.\footnote{It is easy to verify that for every $\lambda_{i}$, $\lambdab_{-i}$ and $\lambda^{\prime}_{i}$, we have $J_i(\lambda_{i},\lambdab_{-i}) - J_i(\lambda^{\prime}_{i},\lambdab_{-i}) = \potent(\lambda_{i},\lambdab_{-i}) - \potent(\lambda^{\prime}_{i},\lambdab_{-i})$.}
Since function $\phi$ is independent of $i$, this shows that the game is a potential game \cite{Neyman97a} and $\potent$ is the potential of the game.
As stated in the next proposition (whose proof is in Appendix~\ref{sec:proofs}), expressing the game as a potential game simplifies the study of its Nash equilibria by transforming it into the easier problem of studying the minima of a convex function.\looseness=-1
\begin{proposition}
  \label{thm.uniqueness}
  Under Assumptions~\ref{structure.information}, \ref{scalarization.type}, and \ref{privacy.type}, a precision profile $\lambdabeq$ is a Nash equilibrium of the linear regression game if and only if it minimizes $\potent$.
 Such an equilibrium exists.
It is unique if all provision cost functions $c_i$ are strictly convex.
 When there are multiple equilibria, the estimation cost $\Cpublic{\lambdabeq}$ does not depend on the equilibrium.
\end{proposition}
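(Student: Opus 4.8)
The plan is to turn the study of Nash equilibria into a finite‑dimensional convex optimization problem. First I would identify each strategy $\lambda_i$ with the vector $(\lambda_i(x))_{x\in\Xset}\in\R_+^{\Xset}$, so that a precision profile $\lambdab$ is a point of $\R_+^{N\times\Xset}$, and record that the information matrix $\cov(\lambdab):=\esp{\sum_{i\in N}\lambda_i(x_i)x_ix_i^{\top}}=\sum_{x\in\Xset}\mu(x)\bigl(\sum_{i\in N}\lambda_i(x)\bigr)xx^{\top}$ is \emph{linear} in $\lambdab$. The key structural fact is that the potential $\potent$ is convex (and proper, lower semicontinuous): the provision part $\sum_{j}\esp{\cost_j(\strat_j(x))}=\sum_{j,x}\mu(x)\cost_j(\lambda_j(x))$ is a nonnegative combination of convex one‑dimensional functions by Assumption~\ref{privacy.type}, while $\Cpublic{\lambdab}=\scalar(\cov(\lambdab)^{-1})$ is the composition of the linear map $\lambdab\mapsto\cov(\lambdab)$, matrix inversion $\cov\mapsto\cov^{-1}$ (operator convex on the positive definite cone), and $\scalar$ (convex and increasing in the Loewner order by Assumption~\ref{scalarization.type}); it is precisely the monotonicity of $\scalar$ that makes this composition convex. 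Note $\Cpublic{\cdot}$ is finite exactly on the convex set $\{\lambdab\ge 0:\cov(\lambdab)\succ 0\}$, nonempty by Assumption~\ref{structure.information}.

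Next I would prove ``Nash equilibrium $\iff$ minimizer of $\potent$''. One direction is free: by the potential identity in the footnote, $\lambda_i$ is a best response to $\lambdab_{-i}$ iff it minimizes $\potent(\cdot,\lambdab_{-i})$, so any global minimizer of $\potent$ is a Nash equilibrium. For the converse, I would first note that any equilibrium $\lambdabeq$ has $\cov(\lambdabeq)\succ 0$ (otherwise $\Cpublic{\lambdabeq}=+\infty$ and, since $\cov(\lambdabeq)+\sum_{x}\mu(x)xx^{\top}\succeq\sum_{x}\mu(x)xx^{\top}\succ 0$, a single agent raising all her precisions by a constant strictly lowers her cost). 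On $\{\cov(\lambdab)\succ 0\}$, agent $i$'s best‑response problem $\min_{\lambda_i\in\R_+^{\Xset}}\potent(\lambda_i,\lambdabeq_{-i})$ is convex, and its first‑order conditions read, for each $x$, $\cost_i'(\lambdaeq_i(x))+h(x)\ge 0$ with equality when $\lambdaeq_i(x)>0$, where $h(x)=-\langle\nabla\scalar(\cov(\lambdabeq)^{-1}),\,\cov(\lambdabeq)^{-1}xx^{\top}\cov(\lambdabeq)^{-1}\rangle$. The crucial point is that $h(x)$ depends on the profile only through the aggregate matrix $\cov(\lambdabeq)$ and \emph{not on $i$} (because $\partial\cov(\lambdab)/\partial\lambda_i(x)=\mu(x)xx^{\top}$ irrespective of $i$), so collecting these conditions over all $i$ gives exactly the KKT system of the single convex program $\min_{\lambdab\ge 0}\potent(\lambdab)$, which by convexity characterizes its global minimizers. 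I expect the main obstacle to be the non‑smooth case: if $\scalar$ (or some $\cost_i$) is not differentiable one must argue with subgradients, and a priori different agents could ``use'' different subgradients of $\scalar$ at $\cov(\lambdabeq)^{-1}$, so one has to exhibit a single common subgradient valid for all $(i,x)$; I would settle this via the subdifferential chain rule for $\scalar\circ(\cdot)^{-1}\circ\cov(\cdot)$ (whose subdifferential inherits the same $i$‑independent structure) or by approximating $\scalar$ with smooth scalarizations and passing to the limit.

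Existence then follows from Weierstrass' theorem: $\potent$ is convex, lower semicontinuous and coercive on $\R_+^{N\times\Xset}$ — the $\cost_i$ are increasing and convex, hence tend to $+\infty$, and $\Cpublic{\cdot}\ge 0$, so every sublevel set of $\potent$ is bounded, hence compact, and $\potent$ attains its minimum, which is a Nash equilibrium by the previous step. For uniqueness, if every $\cost_i$ is strictly convex then the provision part of $\potent$ is strictly convex in $\lambdab$ (the variables $\lambda_i(x)$ are distinct and $\mu$ has full support), so $\potent$ is strictly convex and has at most one minimizer.

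Finally, for the invariance of $\Cpublic{\lambdabeq}$: let $\lambdab^{(1)},\lambdab^{(2)}$ be two minimizers of $\potent$ and $\bar\lambdab$ their midpoint, again a minimizer by convexity. Then equality holds in $\potent(\bar\lambdab)\le\tfrac12\potent(\lambdab^{(1)})+\tfrac12\potent(\lambdab^{(2)})$, and since the provision part and $\Cpublic{\cdot}$ are each convex, equality must hold in each separately, so $\Cpublic{\bar\lambdab}=\tfrac12\Cpublic{\lambdab^{(1)}}+\tfrac12\Cpublic{\lambdab^{(2)}}$. Writing $V_j=\cov(\lambdab^{(j)})$ and using linearity of $\cov$, we have $\Cpublic{\bar\lambdab}=\scalar\bigl((\tfrac12 V_1+\tfrac12 V_2)^{-1}\bigr)$; since $(\tfrac12 V_1+\tfrac12 V_2)^{-1}\preceq\tfrac12 V_1^{-1}+\tfrac12 V_2^{-1}$ with equality \emph{as matrices} only when $V_1=V_2$ (operator convexity of inversion, i.e.\ the matrix AM--HM inequality), the chain $\scalar\bigl((\tfrac12 V_1+\tfrac12 V_2)^{-1}\bigr)\le\scalar\bigl(\tfrac12 V_1^{-1}+\tfrac12 V_2^{-1}\bigr)\le\tfrac12\scalar(V_1^{-1})+\tfrac12\scalar(V_2^{-1})$ must collapse to equalities, which — using that $\scalar$ is (strictly) increasing on the relevant pair, as holds for all the standard scalarizations — forces $V_1=V_2$; hence $\Cpublic{\lambdab^{(1)}}=\scalar(V_1^{-1})=\scalar(V_2^{-1})=\Cpublic{\lambdab^{(2)}}$. (In fact this shows the stronger statement that $\cov(\lambdabeq)$ itself is equilibrium‑independent.)
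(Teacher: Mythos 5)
Your proposal is correct (to the same standard as the paper) and follows essentially the same route: exact potential game, convexity of $\potent$ via the composition of $\scalar$, matrix inversion and the linear map $\lambdab\mapsto\esp{\sum_i\lambda_i(x_i)x_ix_i^{\top}}$, Nash equilibria $=$ minimizers of $\potent$, existence by coercivity/compactness (full support of $\mu$ plus divergence of the increasing convex $c_i$), uniqueness from strict convexity, and invariance of the estimation cost by showing the information matrix agrees at all minimizers through a midpoint argument. The differences are in how explicitly the two delicate steps are treated. For ``equilibrium $\Rightarrow$ minimizer'' the paper simply asserts the equivalence from convexity of $\potent$ (blockwise minimality is identified with global minimality), whereas you write the per-player KKT conditions and note that they assemble into the KKT system of $\min_{\lambdab\ge0}\potent(\lambdab)$ because the derivative of the aggregate term is player-independent; this is the right argument when $\scalar$ and the $c_i$ are differentiable (trace, squared Frobenius, MSE-type scalarizations on positive definite matrices), and your concern about nonsmooth $\scalar$ is legitimate — different players could certify optimality with different subgradients, and neither of your proposed patches (the chain-rule structure of the subdifferential, or smoothing) by itself forces a common subgradient — but the paper's proof makes the same implicit restriction, so you lose nothing relative to it. For the last claim, your equality case of the operator AM--HM inequality plus ``$\scalar$ strictly increasing on the relevant pair'' plays exactly the role of the paper's asserted strict convexity of $M\mapsto\scalar(M^{-1})$; both arguments need this mild strengthening of Assumption~\ref{scalarization.type} (satisfied by the scalarizations the paper lists) and both deliver the stronger conclusion that the equilibrium information matrix, hence $\Cpublic{\lambdabeq}$, is the same at every equilibrium.
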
 
%




The price of anarchy (PoA) is a standard concept in game theory that characterizes the degradation of performance due to players' selfish behavior.
It is the ratio between the total cost of the worst Nash equilibrium and the minimal achievable total cost. In our setting the total cost is $\SC(\lambdab) = \sum_{i} \esp{c_i(\lambda_i(x))}+n\Cpublic{\lambdab}$. Hence, denoting as $\NE\subseteq \{\lambda:\Xset\to \R_+\}^n$ the set of Nash equilibria,  
\begin{align*}
    \poa = \frac{\max_{\lambdabeq\in\NE} \SC(\lambdabeq)}{\min_{\lambdab\in \{\lambda:\Xset\to \R_+\}^n} \SC(\lambdab)}.
\end{align*}


Our linear regression game has the same PoA bound as that of \cite{Gast20a} with a similar proof (see App.~\ref{sec:proofs}):
\begin{theorem}\label{generalizedBound}
  In addition to Assumptions \ref{structure.information}, \ref{scalarization.type} and \ref{privacy.type}, assume that there exist $\pmin\ge1$ such that for all $i\in N, a>1, \ell>0$: $c_i(a\ell) \ge a^{\pmin}c_i(\ell)$.
Then, the price of anarchy satisfies $\poa \leq {n}^{\frac{q}{\pmin + q}}$.
 Additionally, for all $\varepsilon > 0$, there exists a game such that $\poa \geq {n}^{\frac{q}{\pmin+q}}(1-\varepsilon)$.
\end{theorem}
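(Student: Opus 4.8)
The plan is to exploit two homogeneity properties together with the fact that every Nash equilibrium minimizes the potential $\potent$ (Proposition~\ref{thm.uniqueness}). Write $C(\lambdab)=\sum_{j}\esp{\cost_j(\strat_j(x))}$ for the aggregate provision cost and $E(\lambdab)=\Cpublic{\lambdab}$ for the estimation cost, so that $\potent(\lambdab)=C(\lambdab)+E(\lambdab)$ while $\SC(\lambdab)=C(\lambdab)+\nbplay E(\lambdab)$. Since the expected information matrix $\esp{\sum_{j}\strat_j(x_j)x_jx_j^{\top}}$ is linear in $\lambdab$, its inverse is homogeneous of degree $-1$, and Assumption~\ref{scalarization.type} then yields $E(a\lambdab)=a^{-\qhomo}E(\lambdab)$ for every $a>0$. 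Likewise, the added hypothesis $\cost_i(a\ell)\ge a^{\pmin}\cost_i(\ell)$ for $a>1$ is equivalent to $\cost_i(a\ell)\le a^{\pmin}\cost_i(\ell)$ for $0<a\le1$ (apply it with $1/a$ in place of $a$ and $a\ell$ in place of $\ell$), hence $C(a\lambdab)\le a^{\pmin}C(\lambdab)$ for $0<a\le1$.

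For the upper bound, fix any equilibrium $\lambdabeq$ and any minimizer $\lambdab^{\opt}$ of $\SC$. By Proposition~\ref{thm.uniqueness} all equilibria minimize $\potent$ and share the same estimation cost, hence the same provision cost and the same social cost, so it does not matter which one we pick; a minimizer of $\SC$ exists by the same reasoning that gives existence of an equilibrium (and if it fails one argues along a minimizing sequence). Using $C\ge0$ and $\nbplay\ge1$, then that $\lambdabeq$ minimizes $\potent$, then the two homogeneity facts, for every $a\in(0,1]$:
\[
\SC(\lambdabeq)=C(\lambdabeq)+\nbplay E(\lambdabeq)\le \nbplay\,\potent(\lambdabeq)\le \nbplay\,\potent(a\lambdab^{\opt})\le \nbplay\bigl(a^{\pmin}C(\lambdab^{\opt})+a^{-\qhomo}E(\lambdab^{\opt})\bigr).
\]
Taking $a=\nbplay^{-1/(\pmin+\qhomo)}\le1$ makes $\nbplay a^{\pmin}=\nbplay^{\qhomo/(\pmin+\qhomo)}$ and $\nbplay a^{-\qhomo}=\nbplay\cdot\nbplay^{\qhomo/(\pmin+\qhomo)}$, so the right-hand side equals $\nbplay^{\qhomo/(\pmin+\qhomo)}\bigl(C(\lambdab^{\opt})+\nbplay E(\lambdab^{\opt})\bigr)=\nbplay^{\qhomo/(\pmin+\qhomo)}\SC(\lambdab^{\opt})$, which is the claimed bound.

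For the matching lower bound I would exhibit a one-parameter family of explicitly solvable games: take $\ndim=1$, $\Xset=\{1\}$ (so the expected information matrix reduces to $\sum_{j}\strat_j(1)$), scalarization $\scalar(v)=v^{\qhomo}$, and identical power-law provision costs $\cost_i(\ell)=\ell^{\pmin}$. These satisfy Assumptions~\ref{structure.information}--\ref{privacy.type} and meet the extra hypothesis \emph{with equality}, so $\pmin$ is tight for them. Both $\potent$ and $\SC$ are symmetric and convex, hence the equilibrium and the social optimum are the symmetric profiles with common precisions $\ell^{*}$ and $\ell^{\opt}$; their first-order conditions give $\ell^{\opt}=\nbplay^{1/(\pmin+\qhomo)}\ell^{*}$, and substituting back gives
\[
\poa=\frac{\qhomo/(\nbplay\pmin)+1}{\qhomo/\pmin+1}\,\nbplay^{\qhomo/(\pmin+\qhomo)}.
\]
Given $\eps>0$, fixing $\qhomo=1$, then choosing $\pmin$ large enough that $\qhomo/\pmin$ is small, and then $\nbplay$ large enough, makes the prefactor at least $1-\eps$, which proves the second claim.

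The upper bound is essentially a one-line scaling computation once the bookkeeping is in place; the only subtle points are the existence of a minimizer of $\SC$ (or passing to a minimizing sequence) and the degenerate case $\cost_i(0)>0$, which the extra hypothesis in fact excludes. The genuine work lies in the lower bound — solving the two optimization problems and tracking the constant. It is also instructive to note \emph{why} the prefactor above tends to $\pmin/(\pmin+\qhomo)<1$ for fixed $\pmin,\qhomo$: at the social optimum a fixed fraction $\qhomo/(\pmin+\qhomo)$ of the total cost is provision cost, and this is exactly the slack in the chain of inequalities above, which is why closing the gap to $(1-\eps)$ forces $\pmin/\qhomo$ to be large.
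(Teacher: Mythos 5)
Your upper bound is correct and is in substance the paper's own argument: the paper likewise compares $\potent(\lambdabeq)$ with $\potent\bigl(\lambdab^{\opt}/\nbplay^{1/(\pmin+\qhomo)}\bigr)$ using the homogeneity of $\scalar$ and the growth hypothesis on the $c_i$, merely phrased as a proof by contradiction instead of your direct chain $\SC(\lambdabeq)\le \nbplay\,\potent(\lambdabeq)\le \nbplay\,\potent(a\lambdab^{\opt})\le \nbplay^{\qhomo/(\pmin+\qhomo)}\SC(\lambdab^{\opt})$; your rescaling of the hypothesis to $a\le 1$ and your use of the fact that every equilibrium minimizes $\potent$ (so the bound covers the worst one) are exactly the ingredients used there. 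The only cosmetic gaps are the existence of a minimizer of $\SC$ (your coercivity/minimizing-sequence remark suffices) and the unnecessary ``then $\nbplay$ large enough'' later on.

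For the lower bound you take the same example family as the paper ($\Xset=\{1\}$, $c_i(\ell)=\ell^{p}$, $\scalar(v)=v^{\qhomo}$), but the routes then diverge. The paper fixes arbitrary $p,\qhomo\ge1$ and appeals to the computation in \cite{Gast20a} to claim that, for that fixed pair, choosing $\nbplay$ large enough gives $\poa\ge \nbplay^{\qhomo/(p+\qhomo)}(1-\eps)$. You instead solve both optimization problems explicitly and obtain the exact prefactor $\frac{\qhomo/(\nbplay p)+1}{\qhomo/p+1}$ (which is correct); since this prefactor decreases in $\nbplay$ towards $p/(p+\qhomo)<1$, you reach $1-\eps$ by fixing $\qhomo=1$ and letting the cost exponent $\pmin=p$ grow, i.e.\ by varying the game's parameters rather than $\nbplay$. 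This does prove the statement under its literal quantification (``there exists a game'', with that game's own $\pmin$ and $\qhomo$), and being self-contained it is arguably more informative than the paper's citation. But note the mismatch with the paper's intended per-$(p,\qhomo)$ reading: your own formula shows that for a \emph{fixed} pair $(p,\qhomo)$ the monomial game never gets closer than roughly $p/(p+\qhomo)$ to the bound once $\nbplay\ge2$, so the paper's claimed tightness-in-$\nbplay$ for fixed exponents cannot be extracted from this example; your closing observation that a fraction $\qhomo/(p+\qhomo)$ of the optimal social cost is provision cost pinpoints exactly where that constant slack originates. If the intended statement is the per-$(p,\qhomo)$ one, your construction does not deliver it and you should say so explicitly; as written in the theorem, your proof is complete.
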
 
%
%
\begin{remark*}
We should note here that the above result bounds the game's price of anarchy whereas the study of \cite{Gast20a} concerns the price of stability ($\pos$) of a suitable variant of our game without uncertainty.
In contrast to the price of anarchy, the price of stability compares the social optimum cost to that of the \emph{best} Nash equilibrium.
In general, these two measures of selfishness can vary wildly, but in the linear regression game under study, they conincide;
this is due to the fact that although we may have multiple equilibria, all equilibria have the same cost (from Proposition~\ref{thm.uniqueness}). This differs from \cite{Gast20a} where there exists a unique non-trivial equilibrium, but there also exist trivial equilibria with infinite costs.\looseness=-1
\end{remark*}


\section{Characterization of the equilibrium}
\label{sec:equilibrium}


We now characterize how the attribute distribution $\mu$ affects the precision given by agents at equilibrium. We show that when provision costs are linear, the precision given by each agent can be mapped to the solution of an optimal design. This is no longer true when provision costs are not linear.



In optimal design \cite{pukelsheim2006optimal,atkinson2007optimum,boyd2004convex}, an analyst chooses the $x_i$'s of the set of (non-strategic) data sources in order to maximize the quality of the linear model estimated via a scalarization of the covariance matrix. Formally, the optimal design problem for the scalarization $\scalar$ (see Appendix~\ref{sec:scalarizations} for details) and the design space $\Xset$ is to find a probability measure $\optdes$ that minimizes:\looseness=-1
\begin{equation}
 \label{eq.optdes}
 \optdes \in \argmin_{\des} \scalar\left(\left(\sum_{x\in\Xset} x x^{\top} \des(x)\right)^{-1}\right).
\end{equation}

%

In our linear regression game, the agents have an incentive to produce a useful information matrix to minimize the estimation cost but they are limited by the inherent allocation $\mu$ of attribute vectors and by the provision costs $c_i$.
An equilibrium is a minimum of the potential \eqref{avg.potential} that contains the estimation cost $\Cpublic{\lambdab}$, which can be rewritten as:
\begin{align}
 \Cpublic{\lambdab}&=\scalar\p{\p{\sum_{x\in\Xset}
 xx^{\top}\sum_{i\in N}\lambda_i(x)\mu(x)}^{-1}}.
 \label{eq:Upublic_integral}
\end{align}

The similarity of \eqref{eq.optdes} and \eqref{eq:Upublic_integral} suggests a link between the Nash equilibria of the linear regression game and the solutions of the optimal design problem on $\Xset$ by interpreting $\sum_i\strat_i(x)\mu(x)$ as a design $\des(x)$:\looseness=-1
%
\begin{theorem}
 \label{thm.optim.design}
 Consider a linear regression game that satisfies Assumptions~\ref{structure.information} and \ref{scalarization.type} and such that all provision costs are linear (\emph{i.e.}, $c_i(\ell)=a_i\ell$ for all $i\in N$ and $\ell \in \R_{+}$, where $a_i$ is a constant). Let $\lambdabeq$ be a Nash equilibrium and let $\nu_{\lambdabeq}$ be the measure such that $\nu_{\lambdabeq}(x)=\sum_{i\in N}\lambda^*_i(x)\mu(x)$ for all $x\in \Xset$. Then, the probability measure defined by $\nu_{\lambdabeq}(x)/\sum_{y\in\Xset}\nu_{\lambdabeq}(y)$ is an optimal design of \eqref{eq.optdes}.
\end{theorem}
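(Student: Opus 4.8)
The plan is to invoke Proposition~\ref{thm.uniqueness} — which identifies the Nash equilibria of the linear regression game with the minimizers of the potential $\potent$ — and then show that, for linear provision costs, this minimization reduces to an optimization over the \emph{aggregate} design $\nu_{\stratbeq}$ alone, which is a strictly monotone transform of the optimal design objective in \eqref{eq.optdes}. Substituting $\cost_i(\ell)=a_i\ell$ into \eqref{avg.potential} and using \eqref{eq:Upublic_integral}, one has for every precision profile $\stratb$
\begin{equation*}
  \potent(\stratb)=\sum_{i\in N}\sum_{x\in\Xset}a_i\strat_i(x)\mu(x)+\scalar\bigl(M_{\nu_{\stratb}}^{-1}\bigr),\qquad M_\nu:=\sum_{x\in\Xset}xx^{\top}\nu(x),
\end{equation*}
where $\nu_{\stratb}(x):=\sum_{i\in N}\strat_i(x)\mu(x)$ and, by convention, $\scalar(M_\nu^{-1})=+\infty$ when $M_\nu$ is singular. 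The estimation term depends on $\stratb$ only through $\nu_{\stratb}$, whereas the provision term satisfies $\sum_i\sum_x a_i\strat_i(x)\mu(x)\ge\amin\sum_{x\in\Xset}\nu_{\stratb}(x)$ with $\amin:=\min_i a_i>0$, a bound met with equality by assigning all precision to one agent attaining $\amin$ (possible since $\mu$ has full support and precisions are unbounded). Hence $\min_\stratb\potent(\stratb)=\min_{\nu\in\R_+^{\Xset}}\Psi(\nu)$ with $\Psi(\nu):=\amin\sum_{x\in\Xset}\nu(x)+\scalar(M_\nu^{-1})$, and for \emph{every} equilibrium $\stratbeq$ the aggregate $\nu^{*}:=\nu_{\stratbeq}$ minimizes $\Psi$ over $\R_+^{\Xset}$ (because $\Psi(\nu^{*})\le\potent(\stratbeq)=\min_\stratb\potent=\min_\nu\Psi$). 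Since $\esp{x_ix_i^{\top}}=M_\mu\succ 0$ by Assumption~\ref{structure.information}, $\Psi(\mu)<\infty$, so $\min\Psi<\infty$ and therefore $\nu^{*}\neq 0$ with $M_{\nu^{*}}$ invertible.

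Next I would write any nonzero $\nu\in\R_+^{\Xset}$ as $\nu=t\,\bar\nu$ with $t:=\sum_{x\in\Xset}\nu(x)>0$ and $\bar\nu$ a probability measure on $\Xset$. Because $M_{t\bar\nu}=tM_{\bar\nu}$ and $\scalar$ is homogeneous of degree $\qhomo>0$ (Assumption~\ref{scalarization.type}),
\begin{equation*}
  \Psi(t\bar\nu)=\amin\,t+t^{-\qhomo}\,\scalar\bigl(M_{\bar\nu}^{-1}\bigr).
\end{equation*}
Fix $\bar\nu$ with $M_{\bar\nu}$ invertible and put $c:=\scalar(M_{\bar\nu}^{-1})$; then $c>0$, since otherwise monotonicity and homogeneity of $\scalar$ would force $\scalar\equiv0$. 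The map $t\mapsto\amin t+ct^{-\qhomo}$ is strictly convex and coercive on $(0,\infty)$, hence attains its infimum at a unique point $t^{\star}=(\qhomo c/\amin)^{1/(\qhomo+1)}$, with value $h(c):=(\qhomo+1)\,\qhomo^{-\qhomo/(\qhomo+1)}\,\amin^{\qhomo/(\qhomo+1)}\,c^{1/(\qhomo+1)}$, and $h$ is strictly increasing on $(0,\infty)$ (which also follows directly from $\amin t+c_1t^{-\qhomo}<\amin t+c_2t^{-\qhomo}$ for $c_1<c_2$ and $0<t<\infty$). When $M_{\bar\nu}$ is singular, $\scalar(M_{\bar\nu}^{-1})=+\infty$ and $\inf_{t>0}\Psi(t\bar\nu)=+\infty$, so such $\bar\nu$ are irrelevant.

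Finally I would combine the two reductions. Minimizing $\Psi$ over $\R_+^{\Xset}$ gives
\begin{equation*}
  \min_{\nu}\Psi(\nu)=\min_{\bar\nu}\,\min_{t>0}\Psi(t\bar\nu)=\min_{\bar\nu}h\bigl(\scalar(M_{\bar\nu}^{-1})\bigr)=h\Bigl(\min_{\bar\nu}\scalar(M_{\bar\nu}^{-1})\Bigr),
\end{equation*}
the last equality by monotonicity of $h$, the minima over $\bar\nu$ running over probability measures on $\Xset$. Setting $\bar\nu^{*}:=\nu^{*}/\sum_{y\in\Xset}\nu^{*}(y)$, the chain $\min\Psi=\Psi(\nu^{*})\ge h\bigl(\scalar(M_{\bar\nu^{*}}^{-1})\bigr)\ge h\bigl(\min_{\bar\nu}\scalar(M_{\bar\nu}^{-1})\bigr)=\min\Psi$ shows all inequalities are equalities; since $h$ is strictly increasing, $\scalar(M_{\bar\nu^{*}}^{-1})=\min_{\bar\nu}\scalar(M_{\bar\nu}^{-1})$, which is exactly the statement that $\bar\nu^{*}=\nu_{\stratbeq}/\sum_{y\in\Xset}\nu_{\stratbeq}(y)$ solves \eqref{eq.optdes}. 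I expect the only real difficulty to be the careful handling of degeneracies rather than any substantive step: one must rule out $\nu^{*}=0$ and singular information matrices (which is where $\esp{x_ix_i^{\top}}\succ0$ enters), use $\amin>0$, $\qhomo>0$ and $\scalar(V)>0$ for $V\succ0$ so the inner minimization over $t$ is well posed, and phrase the $\stratb\mapsto\nu_{\stratb}$ reduction so that it covers \emph{every} equilibrium — since linear costs are not strictly convex, Proposition~\ref{thm.uniqueness} gives neither uniqueness of $\stratbeq$ nor of $\nu_{\stratbeq}$, only of the estimation cost. Conceptually the argument is short: the provision cost is linear in the total precision mass $t$ while the estimation cost is $\qhomo$-homogeneous in it, so optimizing $t$ out of the potential leaves a strictly monotone transform of the pure optimal-design objective, which forces the equilibrium's normalized aggregate to be an optimal design.
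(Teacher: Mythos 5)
Your proof is correct and takes essentially the same route as the paper's: both reduce the potential to a function of the aggregate measure by lower-bounding the provision cost with the cheapest linear rate $a_{\min}$ (attainable by concentrating all precision on one such agent, thanks to $\mu$ having full support), and then use the $q$-homogeneity of $F$ to separate the total precision mass from the normalized design. The only difference is presentational: you optimize the scale $t$ out explicitly and invoke strict monotonicity of the resulting value function $h$, whereas the paper compares $\potent(\lambdabeq)$ directly with the profile obtained by scaling an optimal design to the same total mass $b$ and concentrating it on the cheapest agent, which yields the same equality/contradiction conclusion.
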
 
\begin{proof}[Sketch of proof]
 A detailed proof is given in Appendix~\ref{sec:proofs}. The main idea is to see the minimization problem \eqref{eq.optdes} as an optimization problem with constraint $\sum_{x\in\Xset}\nu(x)=1$. When the provision costs are linear, the potential $\phi$ is a Lagrangian of this optimization problem with a dual variable $\min_{i\in N}a_i$. The fact that $\nu_{\lambdabeq}$ is proportional to an optimal design is then a consequence of the homogeneity of the scalarization (Assumption~\ref{scalarization.type}).
\end{proof} \vspace{-3mm}
%
While the shape of $\nu_{\lambdabeq}$ for an equilibrium $\lambdabeq$ is that of an optimal design, the total expected precision $\sum_{x\in\Xset}\nu_{\lambdabeq}(x)$ depends on the provision costs. Theorem~\ref{thm.optim.design} merely states that agents contribute proportionally to an optimal design but does not characterize how the total precision depends on the number of agents or on the agents' costs. We leave this discussion to Section~\ref{sec:asymptotic} (in particular Theorem~\ref{thm.conv}). This theorem also implies that with linear costs, agents that have data points which do not belong to an optimal design are pure free-riders. On the contrary, this is no longer the case with superlinear provision costs.  We illustrate this in Figure~\ref{fig.precs} where we observe that the difference between the maximum and minimum precision given depending on the data point shrinks as the exponent of the cost grows.


The particular connection between optimal design and Nash equilibria exhibited in Theorem~\ref{thm.optim.design} is tightly connected to the linearity of provision costs. When costs are strictly convex, the allocation of precision across $\Xset$ at equilibrium is in general suboptimal. For instance, if an agent has a provision cost $c_i(\ell)=\ell^p$ with $p>1$, then the derivative of this provision costs at $0$ is zero, $c_i'(0)=0$. In such a case, this agent will provide a positive precision, $\lambda_i(x)>0$, for all attribute vectors $x\in\Xset$ even though the support of an optimal design might be smaller than $\Xset$. 
We illustrate the case of nonlinear costs in a polynomial regression setting that is an instance of our linear regression game as follows.
Let $\Xset = [1, \abscisse, \cdots, \abscisse^{\degree-1}]^{\top}$ be the set of attribute vectors with $x\in\{-10\dots 10\}$.
We compare in Figure~\ref{fig.precs} the allocation of precision at equilibrium $\nu_{\lambdabeq}$ as defined in Theorem~\ref{thm.optim.design} to the optimal design for different monomial provision costs ($\cost(\ell) = \ell^\phomo$). We set $\mu$ to the uniform distribution on $\Xset$, $d=4$, $n=10$ and the scalarization $F$ is the trace. Other parameters give similar results (see App.~\ref{sec:more_experiments}). We observe that when the provision costs are near-linear ($p\!=\!1.01$), the precision function is similar to the optimal design yet different. When $p\!=\!1.2$ or $p\! =\! 3$, however, the precision for the vector $[1,0,\dots,0]$ is maximal whereas the optimal design sets a weight $0$ to it. Intuitively, the convexity of provision costs yields a more spread-out allocation of precision than the optimal design. This shows that equilibrium can be different from optimal design, even when costs are close to linear.

\begin{figure}
 \centering
 \begin{subfigure}{.245\textwidth}
 \includegraphics[width=\textwidth]{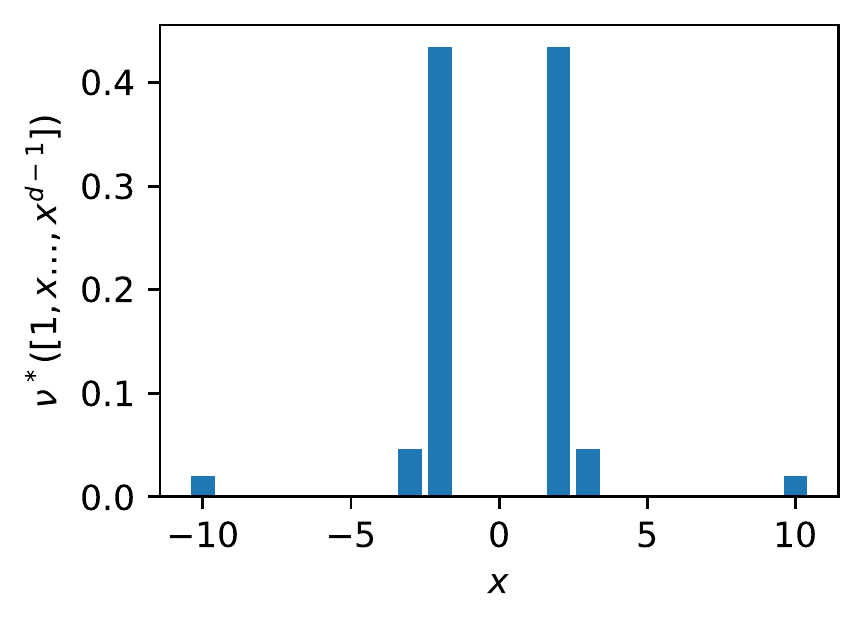}%
\vspace{-3mm}
 \caption{Optimal design $\nu^*$}
 \label{fig.unif.a}
 \end{subfigure}
 \begin{subfigure}{.245\textwidth}
 \includegraphics[width=\textwidth]{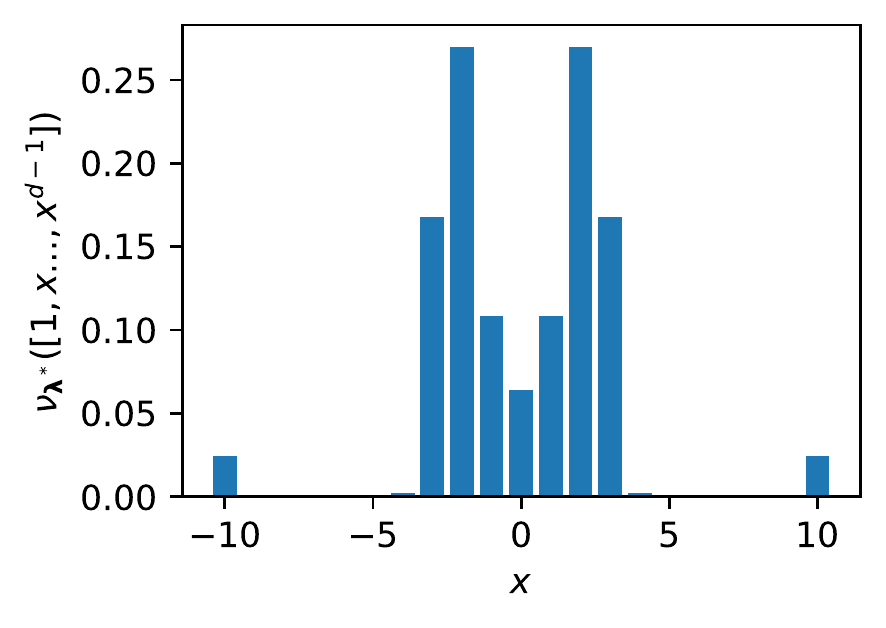}%
\vspace{-3mm}
 \caption{$\nu_{\lambdabeq}$ when $c_i(\ell)\!=\!\ell^{1.01}$}
 \label{fig.unif.b}
 \end{subfigure}
 \begin{subfigure}{.245\textwidth}
 \includegraphics[width=\textwidth]{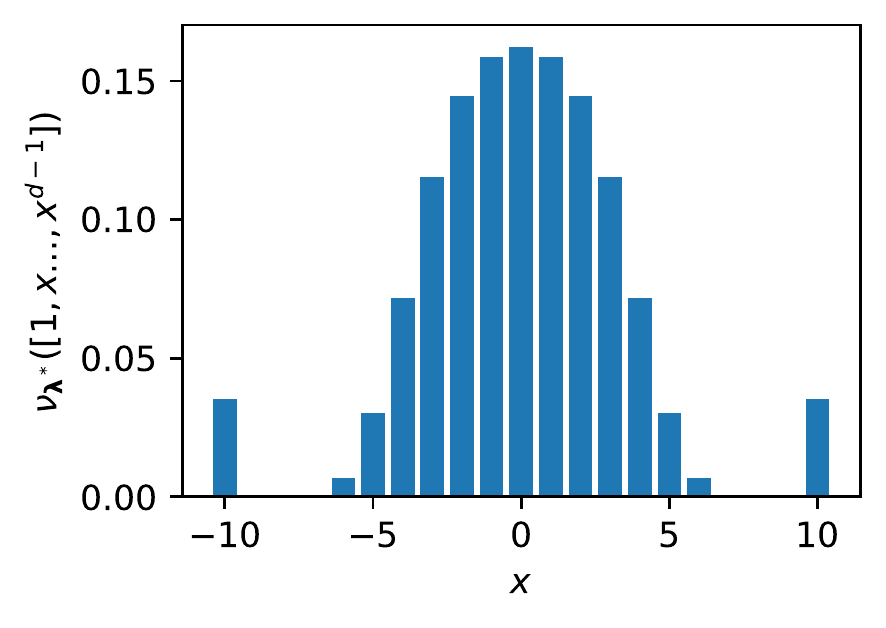}%
\vspace{-3mm}
 \caption{$\nu_{\lambdabeq}$ when $c_i(\ell)=\ell^{1.2}$}
 \label{fig.unif.c}
 \end{subfigure}
 \begin{subfigure}{.245\textwidth}
 \includegraphics[width=\textwidth]{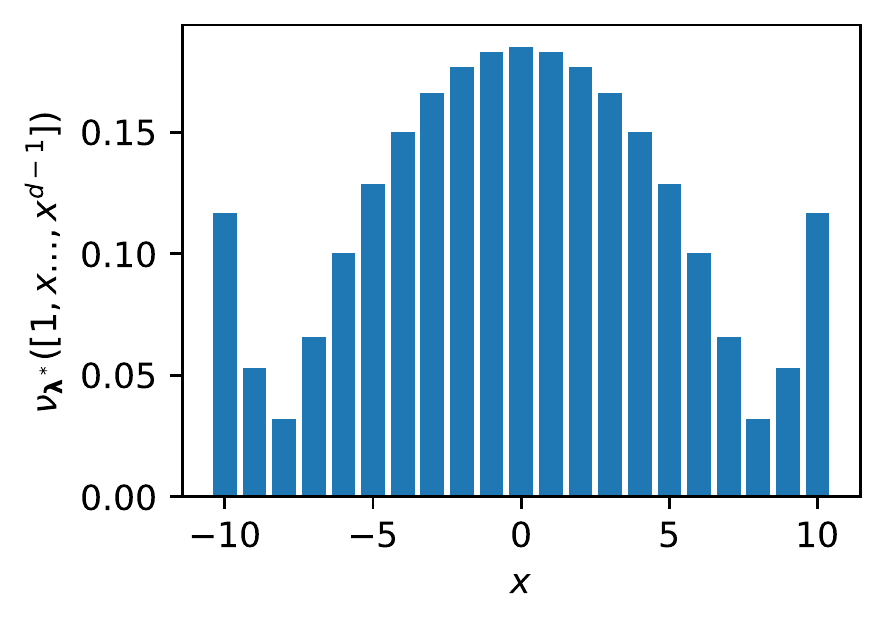}%
\vspace{-3mm}
 \caption{$\nu_{\lambdabeq}$ when $c_i(\ell)=\ell^{3}$}
 \label{fig.unif.d}
 \end{subfigure}
 
 \caption{Optimal design $\nu^*$ and allocation of precision at equilibrium $\nu_{\lambdabeq}$.\vspace{-4mm}}
 \label{fig.precs}
\end{figure}
\vspace{-4mm}



\section{Asymptotic results}
\label{sec:asymptotic}
The previous section shows that linear provision costs drive agents to allocate their precision proportionally to an optimal design, while non-linear costs lead to a non-optimal allocation. In this section, we show that the situation is
\emph{radically different when considering the total model precision}. 

\paragraph{The case of identical agents}

To gain intuition, 
we first consider agents with identical monomial costs.
In this setting, the equilibrium for the $n$-agent game is obtained by scaling the solution of the optimization problem that would correspond to a single-agent game (the full proof is in Appendix~\ref{sec:proofs}): \looseness=-1
\begin{proposition}
  \label{thm.scaling}
  Consider a linear regression game satisfying Assumptions~\ref{structure.information} and \ref{scalarization.type} and such that for all agent $\play$ and precision $\ell$: $\cost_\play(\ell) = \ell^p$ with $p\ge1$. Let $\stratsingle = \argmin_{\strat \in \R_+^{|\Xset|}}\esp{\strat(x)^p} + \Cpublic{\lambda}$. \vspace{-.6cm}
  \begin{itemize}
  \item[(i)] The precision profile $\stratbeq$ with
    $\lambdaeq_i=n^{-\frac{1+q}{p+q}}\stratsingle$ for all $i=1,\dotsc,n$ is a Nash equilibrium. \vspace{-.3cm}
  \item[(ii)] The estimation cost at equilibrium is
    $\Cpublic{\lambdabeq}=n^{-q\frac{p-1}{p+q}}\Cpublic{\stratsingle}$.
  \end{itemize}\vspace{-.3cm}
\end{proposition}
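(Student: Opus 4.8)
The plan is to combine Proposition~\ref{thm.uniqueness}---which identifies the Nash equilibria of the game with the minimizers of the convex potential $\potent$---with the symmetry of the game (all agents carry the same cost $\cost_\play(\ell)=\ell^{\phomo}$) to reduce the statement to a one-dimensional scaling computation. First I would argue that it suffices to minimize $\potent$ over \emph{symmetric} precision profiles $(\strat,\dots,\strat)$. Indeed, $\potent$ is convex (it is a sum of the convex maps $\stratb\mapsto\esp{\cost_j(\strat_j(x))}$ and the convex map $\Cpublic{\cdot}$, cf.\ the discussion after Proposition~\ref{thm.uniqueness}) and invariant under permutations of the agent labels, so averaging any minimizer over the $\nbplay!$ coordinate permutations yields a symmetric minimizer. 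Writing $\psi(\strat):=\potent(\strat,\dots,\strat)$, this gives $\min_{\stratb}\potent(\stratb)=\min_{\strat}\psi(\strat)$, so any symmetric profile whose common strategy minimizes $\psi$ is a global minimizer of $\potent$, hence a Nash equilibrium by Proposition~\ref{thm.uniqueness}.

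Next I would carry out the rescaling. Writing $g(\strat):=\Cpublic{\strat}=\scalar\p{\p{\sum_{x\in\Xset}\strat(x)xx^{\top}\mu(x)}^{-1}}$ for the estimation cost of the associated single-agent problem, the information matrix of the symmetric profile $(\strat,\dots,\strat)$ equals $\nbplay\sum_{x\in\Xset}\strat(x)xx^{\top}\mu(x)$, so homogeneity of $\scalar$ of degree $\qhomo$ (Assumption~\ref{scalarization.type}) gives both $\Cpublic{(\strat,\dots,\strat)}=\nbplay^{-\qhomo}g(\strat)$ and $g(t\strat)=t^{-\qhomo}g(\strat)$ for every $t>0$. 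Hence $\psi(\strat)=\nbplay\,\esp{\strat(x)^{\phomo}}+\nbplay^{-\qhomo}g(\strat)$. Substituting $\strat=\nbplay^{-\frac{1+\qhomo}{\phomo+\qhomo}}\rho$ and using these two homogeneities, one checks that the two terms of $\psi$ pick up the \emph{same} power of $\nbplay$, namely $\nbplay^{-\qhomo\frac{\phomo-1}{\phomo+\qhomo}}$, so that $\psi\p{\nbplay^{-\frac{1+\qhomo}{\phomo+\qhomo}}\rho}=\nbplay^{-\qhomo\frac{\phomo-1}{\phomo+\qhomo}}\p{\esp{\rho(x)^{\phomo}}+g(\rho)}$. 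Since $\rho\mapsto\nbplay^{-\frac{1+\qhomo}{\phomo+\qhomo}}\rho$ is a bijection of $\R_+^{|\Xset|}$, the minimizers of $\psi$ are exactly $\nbplay^{-\frac{1+\qhomo}{\phomo+\qhomo}}$ times the minimizers of $\strat\mapsto\esp{\strat(x)^{\phomo}}+g(\strat)$; in particular $\strateq=\nbplay^{-\frac{1+\qhomo}{\phomo+\qhomo}}\stratsingle$ yields a symmetric Nash equilibrium, which is~(i). For~(ii) I would plug this equilibrium back into $g$: using $g(\strateq)=\nbplay^{\qhomo\frac{1+\qhomo}{\phomo+\qhomo}}g(\stratsingle)$, we get $\Cpublic{\stratbeq}=\nbplay^{-\qhomo}g(\strateq)=\nbplay^{-\qhomo\frac{\phomo-1}{\phomo+\qhomo}}g(\stratsingle)=\nbplay^{-\qhomo\frac{\phomo-1}{\phomo+\qhomo}}\Cpublic{\stratsingle}$.

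The step I expect to be the main obstacle is the reduction to symmetric profiles: one must justify carefully that $\potent$ is convex and permutation-invariant (so that symmetrization of a minimizer is legitimate), and one must check that the information matrix remains invertible along the rescaling so that $g$ stays finite and the homogeneity of $\scalar$ applies. The latter holds because $\stratsingle$ necessarily makes $\sum_{x\in\Xset}\strat(x)xx^{\top}\mu(x)$ invertible---otherwise $g(\stratsingle)=+\infty$ would contradict optimality, since by Assumption~\ref{structure.information} a strictly positive constant precision profile has finite, hence strictly smaller, cost---and invertibility is preserved under multiplication by the positive scalar $\nbplay^{-\frac{1+\qhomo}{\phomo+\qhomo}}$. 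Everything else is homogeneity bookkeeping that relies only on convexity and degree-$\qhomo$ homogeneity of $\scalar$, never on its differentiability, so no first-order conditions are needed.
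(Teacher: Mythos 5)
Your proposal is correct and follows essentially the same route as the paper's proof: reduce to symmetric profiles via convexity and permutation invariance of the potential, rewrite $\potent(\lambda,\dots,\lambda)$ using the degree-$\qhomo$ homogeneity of $\scalar$, use the rescaling bijection $\lambda\mapsto n^{-\frac{1+q}{p+q}}\lambda$ to map minimizers of the single-agent objective to minimizers of the potential, and apply homogeneity once more for part (ii). The only difference is your added care about invertibility of the information matrix along the rescaling, which the paper leaves implicit.
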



Proposition~\ref{thm.scaling}$(i)$ illustrates a major difference between the strategic and non-strategic settings. Indeed, in a non-strategic setting, each agent would provide data with a fixed precision, say $\lambdans(x) = \lns$ for all $x$. By contrast, in the presence of strategic data sources,
\emph{the equilibrium precision given by each agent goes to $0$ when the number of agents grows.}
Moreover, the convergence rate is governed by the parameters $p$ and $\qhomo$:
when $p\to\infty$, the precision of each agent is almost constant, similar to the non-strategic case;
instead, with linear costs ($p = 1$), the precision given by each agent goes to $0$ at a $\Theta(1/n)$ rate.\looseness=-1

Thus, when aggregating the data from $n$ non-strategic data sources, the estimation cost would be
\begin{equation}
\Cpublic{\lambdabns}
	= n^{-q}\Cpublic{\lambdans}
\end{equation}
where $\lambdabns = (\lambdans, \cdots, \lambdans)$ (which corresponds to the standard $1/n$ rate if $q=1$). By contrast, when aggregating the data from $n$ strategic data sources, Proposition~\ref{thm.scaling}$(ii)$ shows that the rate of decrease is smaller, again governed by the parameters $p$ and $q$. In the extreme, when the costs are linear ($p = 1$),
\emph{the estimation cost does not even go to $0$ as $n\to\infty$.}
This shows that GLS is not consistent in the presence of strategic data sources with linear provision costs:
in this case, the estimator's covariance does not vanish as the number of data sources grows large.

To quantify how strategic considerations lead to a degradation of the GLS estimator,
we can consider the ratio between the strategic and non-strategic estimation
costs:
\begin{equation}
\Cpublic{\stratbeq}/\Cpublic{\lambdabns}
	= \Theta(\nbplay^{\frac{\qhomo(\qhomo + 1)}{p + \qhomo}}).
\end{equation}
This ratio goes to infinity for any possible value of the parameters, implying that strategic agents
\emph{always end up incurring an asymptotic degradation of the GLS estimator as $n\to\infty$.}
In particular, higher values of $\qhomo$ imply a more drastic degradation because the estimation cost is reduced in a neighborhood of $0$, which makes agents less willing to exert effort.
A high $\phomo$ implies a smaller degradation as agents are less sensitive to their provision costs as long as their precision is smaller than $1$. \looseness=-1

Figure~\ref{fig.conv} illustrates the convergence of the estimation cost and the degradation ratio for various values of $p$ and $q$. Figure~\ref{fig.p.vary} pictures the convergence of the estimation cost (in $\nbplay^{-\qhomo\frac{p - 1}{p + \qhomo}}$). It illustrates the inconsistency of GLS when provision costs are linear ($p=1$) and the better convergence rate with larger $p$ and $\qhomo$. In more detail, Figure~\ref{fig.q.vary} depicts the degradation of the estimation cost due to the presence of strategic agents. We observe that the relative position of the curves is different than in Figure~\ref{fig.p.vary}: the degradation ratio is higher for $(p = 2, \qhomo = 3)$ than for $(p = 1, \qhomo = 2)$, whereas the first case yields a consistent estimator and the second does not. This illustrates the dual impact of $\qhomo$ on the linear regression game: a lower $q$ implies a lower estimation cost but also implies a lower effort, making the estimation cost prohibitively high relative to the non-strategic setting.

\begin{figure}[ht]
\vspace{-.3cm}
  \centering
  \begin{subfigure}{0.37\textwidth}
    \centering
    \includegraphics[width=\textwidth]{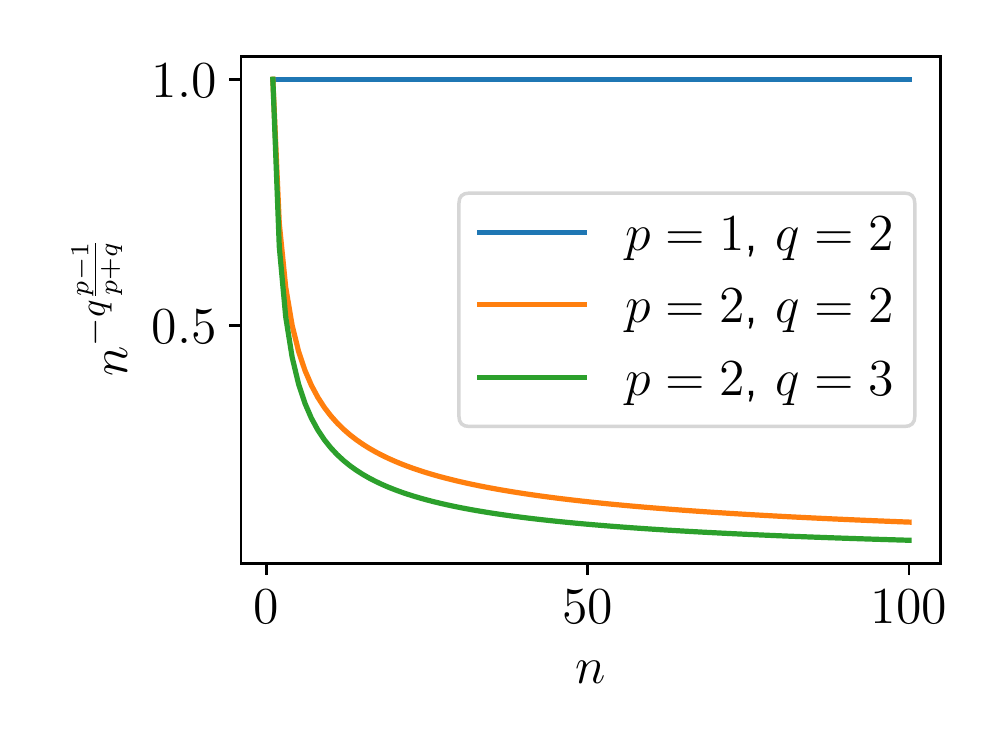}
\vspace{-8mm}
    \caption{Estimation cost $\Cpublic{\stratbeq}$}
    \label{fig.p.vary}
  \end{subfigure}
  \begin{subfigure}{0.37\textwidth}
    \centering
    \includegraphics[width=\textwidth]{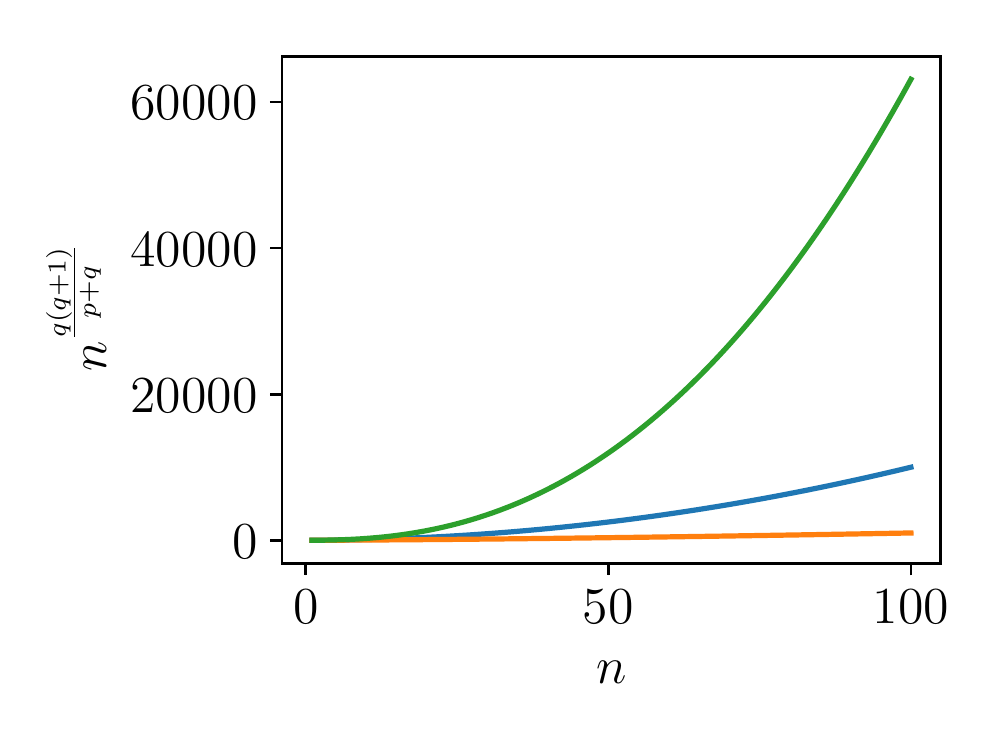}
\vspace{-8mm}
    \caption{\mbox{Degradation ratio $\Cpublic{\stratbeq}/\Cpublic{\lambdabns}$}}
    \label{fig.q.vary}
  \end{subfigure}
  \vspace{-2mm}
  \caption{Influence of $p$ and $q$ on (a) the estimation cost and (b) the degradation ratio.\vspace{-4mm}}
  \label{fig.conv}
\end{figure}

\paragraph{Main result: Asymptotic degradation of estimation cost in the general case} We are now ready to state the main result of the paper, which characterizes the asymptotic behavior of the estimation cost under non-identical and general provision costs. The next theorem provides upper and lower bounds on how the estimation cost decreases as $n\to\infty$. 
\begin{theorem}
  \label{thm.conv}
  Assume that Assumptions~\ref{structure.information}, \ref{scalarization.type} and \ref{privacy.type} hold. Additionally, assume that there exist $\pmin,\pmax\ge1$ and functions $c_{\min},c_{\max}:\R_+\to\R_+$ such that for all $i\in N$ and all $a>1, \ell>0$: $a^{\pmin}c_i(\ell) \le c_i(a\ell) \le a^{\pmax}c_i(\ell)$ and $0<c_{\min}(\ell)\le c_i(\ell)\le c_{\max}(\ell)<\infty$. Then there exist constants $\constmin, \constmax > 0$ that do not depend on $\nbplay$ and such that:\vspace{-1mm}
  \begin{equation}
    \label{eq.conv}
  \constmin \nbplay^{-\qhomo\frac{\pmin - 1}{\pmin + \qhomo}-\alpha} \le \Cpublic{\lambdabeq} \le  \constmax\nbplay^{-\qhomo\frac{\pmin - 1}{\pmin + \qhomo}}, \quad \textrm{ where } \quad \alpha=\qhomo\frac{(\pmax-\pmin)(q+1)}{\pmax(q+\pmin)}.
\end{equation}
\end{theorem}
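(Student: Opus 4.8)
\emph{Proof plan.} The idea is a two-sided comparison of the equilibrium potential $\potent(\stratbeq)$ with explicit competitor profiles, relying on Proposition~\ref{thm.uniqueness}, by which $\stratbeq$ minimizes $\potent$ globally. Write $\mu_{\min}=\min_{x\in\Xset}\mu(x)>0$ and $F_0=\scalar\big((\esp{xx^{\top}})^{-1}\big)$, which is finite (since $\esp{xx^{\top}}\succ0$ by Assumption~\ref{structure.information}) and positive (since $\scalar$ is not identically zero, using its homogeneity and monotonicity). The two consequences of the hypothesis $a^{\pmin}c_i(\ell)\le c_i(a\ell)\le a^{\pmax}c_i(\ell)$ that I would record first, taking $a\ell=1$ and using $0<c_{\min}(1)\le c_i(1)\le c_{\max}(1)<\infty$, are: \textrm{(i)} $c_i(\ell)\le c_{\max}(1)\,\ell^{\pmin}$ whenever $0\le\ell\le1$; and \textrm{(ii)} $c_i(\ell)\ge c_{\min}(1)\min(\ell^{\pmin},\ell^{\pmax})$ for all $\ell\ge0$.

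\emph{Upper bound.} I would test the competitor in which every agent uses the constant precision $\lambda_i(x)\equiv t$, for a scalar $t\in(0,1]$. Then $\esp{\sum_i\lambda_i(x_i)x_ix_i^{\top}}=nt\,\esp{xx^{\top}}$, so homogeneity of $\scalar$ makes the estimation cost equal to $(nt)^{-\qhomo}F_0$, while (i) bounds the provision term by $\sum_i c_i(t)\le n\,c_{\max}(1)\,t^{\pmin}$. Minimising $n c_{\max}(1)t^{\pmin}+n^{-\qhomo}t^{-\qhomo}F_0$ over $t$ gives $t_\ast=\Theta\big(n^{-(\qhomo+1)/(\pmin+\qhomo)}\big)$ (which lies in $(0,1]$ for $n$ large) and optimal value $\Theta\big(n^{\,1-\pmin(\qhomo+1)/(\pmin+\qhomo)}\big)=\Theta\big(n^{-\qhomo(\pmin-1)/(\pmin+\qhomo)}\big)$. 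Since $\Cpublic{\stratbeq}\le\potent(\stratbeq)\le\potent(\text{this profile})$, this gives the right-hand inequality of \eqref{eq.conv}, with $\constmax$ the resulting constant.

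\emph{Lower bound.} Let $P=\sum_j\esp{c_j(\strateq_j(x))}$. Because $\potent(\stratbeq)=P+\Cpublic{\stratbeq}$ with both terms nonnegative, the previous step already yields $P\le\potent(\stratbeq)=\bigoh\big(n^{-\qhomo(\pmin-1)/(\pmin+\qhomo)}\big)$. Fix $x\in\Xset$; since $P$ is a sum of nonnegative contributions over $\Xset$, its contribution at $x$ obeys $\mu(x)\sum_j c_j(\strateq_j(x))\le P$, so with (ii), $\sum_j\min\big(\strateq_j(x)^{\pmin},\strateq_j(x)^{\pmax}\big)\le S$ where $S:=P/(c_{\min}(1)\mu_{\min})=\bigoh\big(n^{-\qhomo(\pmin-1)/(\pmin+\qhomo)}\big)$. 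Splitting the agents according to whether $\strateq_j(x)\le1$, applying the power-mean inequality $\sum_{j:\ell_j\le1}\ell_j\le n^{1-1/\pmax}\big(\sum_j\ell_j^{\pmax}\big)^{1/\pmax}$ to the first group (concavity of $t\mapsto t^{1/\pmax}$) and $\ell_j\le\ell_j^{\pmin}$ to the second, I would obtain $\sum_j\strateq_j(x)\le n^{1-1/\pmax}S^{1/\pmax}+S=\bigoh\big(n^{\,1-\pmin(\qhomo+1)/(\pmax(\pmin+\qhomo))}\big)$, the first term dominating since its exponent is $\ge-\qhomo(\pmin-1)/(\pmin+\qhomo)$. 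Call this bound $L_n$; it does not depend on $x$, so $\esp{\sum_j\strateq_j(x_j)x_jx_j^{\top}}=\sum_{x\in\Xset}\mu(x)\big(\sum_j\strateq_j(x)\big)xx^{\top}\preceq L_n\,\esp{xx^{\top}}$. Monotonicity and homogeneity of $\scalar$ then give $\Cpublic{\stratbeq}\ge\scalar\big(L_n^{-1}(\esp{xx^{\top}})^{-1}\big)=L_n^{-\qhomo}F_0$, and the elementary identity $\qhomo\big(1-\tfrac{\pmin(\qhomo+1)}{\pmax(\pmin+\qhomo)}\big)=\qhomo\tfrac{\pmin-1}{\pmin+\qhomo}+\alpha$ turns this into the left-hand inequality of \eqref{eq.conv}, with $\constmin$ a constant multiple of $F_0$. (The finitely many small $\nbplay$ for which the asymptotic estimates are invoked are absorbed into $\constmin,\constmax$, since $\Cpublic{\stratbeq}\in(0,\infty)$ for each fixed $\nbplay$.)

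\emph{Main obstacle.} The lower bound is the delicate part: knowing that the aggregate provision cost $P$ is small is \emph{not} enough, because a small $P$ coexists with a huge total precision whenever very many agents each contribute a minuscule amount---which is exactly the free-riding configuration driving the whole phenomenon. The power-mean step is what identifies this worst case quantitatively, and it is there, and only there, that the gap between $\pmin$ and $\pmax$ enters and produces the correction $\alpha$ (consistently, $\alpha=0$ iff $\pmin=\pmax$, recovering Proposition~\ref{thm.scaling}). One also has to make sure that both the competitor precision $t_\ast$ and all equilibrium precisions are eventually $\le1$ so that the one-sided monomial bounds (i)--(ii) apply; the upper bound already forces these quantities toward $0$ for large $\nbplay$ when $\pmin>1$, and when $\pmin=1$ the precisions need not vanish but the two-group split handles that case directly.
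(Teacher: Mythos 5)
Your proof is correct, and its overall skeleton matches the paper's: an upper bound obtained by plugging a constant profile of order $n^{-(\qhomo+1)/(\pmin+\qhomo)}$ into the potential, and a lower bound obtained by first using that upper bound to control the aggregate provision cost at equilibrium, then converting this into a bound on the total precision and finally into a lower bound on the estimation cost via monotonicity and homogeneity of $\scalar$. Where you genuinely diverge is in the execution of the lower bound. The paper first proves that all equilibrium precisions are bounded by a constant $\ellmax$ independent of $n$, then applies Jensen's inequality to the concave map $t\mapsto t^{1/\pmax}$ on the averaged expected costs together with $c_i(\ellmax)\le(\ellmax/\lambda)^{\pmax}c_i(\lambda)$, and finally compares the (normalized) equilibrium information matrix with the \emph{optimal design} $\optdes$, so that its constant involves $\scalar\big(\big(\sum_x xx^\top\optdes(x)\big)^{-1}\big)$ and $\ellmax$. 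You instead use the pointwise sandwich $c_i(\ell)\ge c_{\min}(1)\min(\ell^{\pmin},\ell^{\pmax})$, a per-$x$ argument (hence the $\mu_{\min}$ factor), a two-group split at $\ell=1$ with a power-mean/Hölder step for the small precisions, and then the matrix domination $\esp{\sum_j\strateq_j(x_j)x_jx_j^{\top}}\preceq L_n\,\esp{xx^{\top}}$ rather than the optimal-design comparison. Both routes produce exactly the same exponent $\frac{\pmin-1}{\pmin+\qhomo}+\alpha/\qhomo$ for the total precision (your splitting plays precisely the role of the paper's Jensen step, and it is indeed where $\pmax$ and hence $\alpha$ enter). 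What each buys: your version is slightly more self-contained, since it needs neither the auxiliary uniform bound $\ellmax$ nor Theorem~\ref{thm.optim.design}-style reasoning, and it handles large equilibrium precisions directly through the split; the paper's version works with expectations over $\Xset$ rather than the worst-case attribute (so no $1/\mu_{\min}$ in the constant) and yields a constant with an interpretable meaning (the estimation cost of an optimal design), which the authors explicitly comment on. Minor points you handled adequately: positivity of $F_0$ (implicitly assumed in the paper as well) and the absorption of small $n$ into the constants.
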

\vspace{-4mm}

\begin{proof}[Sketch of proof]
A full proof is given in Appendix~\ref{sec:proofs}.
To get the upper bound, we first obtain an upper bound of the potential $\phi$ by evaluating it on a well-chosen precision profile $\lambdab$ inspired by Proposition~\ref{thm.scaling}. Combining this with the assumption that $a^{\pmin}c_i(\ell) \le c_i(a\ell), \forall \ell\in\R_+$ and with the homogeneity of the estimation cost gives the right-hand-side of \eqref{eq.conv}. 
The lower bound is harder to get. We first exploit the previous upper bound to get an upper bound on the total provision cost (the left part of the potential \eqref{avg.potential}). Using the assumption that  $c_i(a\ell) \le a^{\pmax}c_i(\ell), \forall \ell\in\R_+$, we deduce an upper bound on the total precision. We then consider an optimal design scaled with this total precision and show, using the estimation cost homogeneity, that it gives the left-hand-side of \eqref{eq.conv}. 
From this sketch of proof, observe that the constant $d$ involves the estimation cost of an optimal design while the constant $D$ involves the estimation cost of a non-strategic precision profile $\Cpublic{\lambdabns}$.
  %
\end{proof} \vspace{-.3cm}

Theorem~\ref{thm.conv} is our main result: it characterizes the decay of the GLS estimates covariance with strategic data sources for general data provision costs that satisfy a mild assumption governed by the two parameters $\pmin, \pmax$. This assumption roughly specifies that the provision costs grow faster than $\ell^{\pmin}$ and slower than $\ell^{\pmax}$; it is satisfied for instance by a sum of monomial terms with exponents between $\pmin$ and $\pmax$ and such that coefficients do not vanish or explode. Note that the result of Theorem~\ref{thm.conv} trivially implies that the precision of each agent goes to $0$ when the number of agents grow. Although we do not formally prove it, it is clear that the result remains valid even when the assumptions on costs are valid only in a neighborhood near $0$. We finally also note here that Theorem~\ref{thm.conv} remains valid when agents data point are not independent but produced by a joint distribution $\mu_{\mathrm{joint}}$. In such a case, the bounds would depend on $E_{\mu_{\mathrm{joint}}}[\frac{1}{n} \sum_i x_ix_i^{\top}]$ (instead of $E_\mu[xx^{\top}]$) which captures precisely the impact of correlation on the estimation cost---we provide details on this in Appendix~\ref{sec:joint}. \looseness=-1

In this degree of generality, it is no longer possible express the equilibrium precisions in closed form (as in Proposition~\ref{thm.scaling}).
Nevertheless, Theorem~\ref{thm.conv} shows that we are able to provide precise bounds for the estimation cost.
In particular, the upper bound in \eqref{eq.conv} shows that, as soon as $\pmin>1$ (i.e., data provision costs are superlinear), the estimation cost converges to zero for any scalarization, meaning that the consistency property of GLS is preserved. If $\pmin  =  1$ though, this is not guaranteed (and even guaranteed to fail if $\pmin  =  \pmax  =  1$, i.e., for linear costs). Even when convergence to zero is guaranteed ($\pmin > 1$), the lower bound in \eqref{eq.conv} shows that the convergence rate is slower that the standard rate of $\Theta(n^{-q})$ (or $\Theta(1/n)$ for scalarizations with $q=1$). \looseness=-1

We immediately see that for the case $\pmax = \pmin = p$, the exponent $\alpha$ is equal to $0$ and the exponents of the left-hand side and of the right hand-side of \eqref{eq.conv} coincide and are equal to the exponent of Proposition~\ref{thm.scaling}. When $\pmin$ and $\pmax$ are different, the bounds loosen. Intuitively, the upper bound then involves the parameter $\pmin$ because, when precisions are close to zero, the agents with exponent $\pmin$ are the ones that have the smallest precision at equilibrium due to larger marginal provision costs. The lower bound, however does not correspond exactly to the $\nbplay^{-\qhomo\frac{\pmax - 1}{\pmax + \qhomo}}$ that one could expect (in fact it decreases faster than $\nbplay^{-\qhomo\frac{\pmax - 1}{\pmax + \qhomo}}$). Whether this is a proof artifact or a consequence of our assumption on the provision costs (which is weak and allows for very diverse costs) remains an open question. We performed a numerical investigation of the result of Theorem~\ref{thm.conv} illustrating the lower and upper bounds---the results are deferred to Appendix~\ref{sec:more_experiments}.

\begin{remark*}
We should also note here that our model formally relies on the GLS estimator---which is based on a principle of truthful revelation of data and of its precision to the analyst.
This is a natural assumption to make for our envisioned applications where agents are motivated by the model's quality.
However, there are other settings where strategic considerations might lead agents to act in a different manner:
For instance, if the agents are rewarded as a function of the precision, they might be tempted to untruthfully disclose a higher precision;
as another example, agents may be unable to properly quantify the precision of their data points.
In such settings, an interesting alternative would be to consider the ordinary least squares (OLS) estimator instead of GLS, as OLS is oblivious to the disclosed precision of the data points. 
In this case, the conclusion of Theorem~\ref{thm.conv} would continue to hold;
due to space limitations, the detailed statement and proof are relegated to Appendix~\ref{sec:ols}.
Our analysis for OLS also reveals a potential shortfall of OLS: a single agent with a high provision cost can cause arbitrarily bad estimation cost (whereas GLS is robust to such agents).
We discuss this in detail in Appendix~\ref{sec:ols}. \looseness=-1
\end{remark*}

\paragraph{Comparison with Theorem~\ref{generalizedBound}}

Theorem~\ref{thm.conv} and Theorem~\ref{generalizedBound} both capture notions of efficiency of the game but they are hardly comparable because they characterize radically different types of inefficiencies. Theorem~\ref{thm.conv} characterizes the ratio of \emph{estimation cost} (the analyst's viewpoint) between the case of strategic agents and a non-strategic scenario where each agent would give a fixed exogenous precision $\lambdans$. In contrast, the $\poa$ of Theorem~\ref{generalizedBound} is a bound of the \emph{total cost} (the population viewpoint) and characterizes the inefficiency due to the self-interested agents by comparing the total cost at equilbrium and at social optimum.  These two situations are radically different and the $\poa$ result of Theorem~\ref{generalizedBound} does not hint at the convergence issues addressed in Theorem~\ref{thm.conv}, even in hindsight. For instance, in the case of linear provision costs ($\pmin=\pmax=1$), $\gls$ is inconsistent whereas Theorem~\ref{generalizedBound} shows that the price of anarchy always grows sublinearly in $n$, even in this case where $\poa \le n^{q/(q+1)}$. \looseness=-1

In addition, the proofs of the two theorems are fundamentally different. The proof of Theorem~\ref{generalizedBound} uses a scaling to transform the social optimum $\lambdab^{\opt}$ into an equilibrium $\lambdabeq$. This approach works because $\lambdab^{\opt}$ and $\lambdabeq$ are respectively the minimizers of the functions $\SC$ and of the potential $\phi$ and because these two functions are tightly related. Such an approach cannot be adapted to start from $\lambdabns$ to obtain an equilibrium $\lambdabeq$ as $\lambdabns$ is not a minimizer. Conversely, the proof of Theorem~\ref{thm.conv} could be adapted to obtain a result in the spirit of Theorem~\ref{generalizedBound} but would lead to a looser bound.

\section{Concluding discussion}
\label{sec:discussion}

In this paper, we show that the precision of GLS estimates for linear regression problems in the presence of strategic data sources is degraded compared to the standard case of non-strategic data sources. We characterize this degradation under mild assumptions. In particular we show that basic properties such as consistency no longer always hold with strategic data sources (if provision costs are linear)---and even when it holds the convergence rate is worsened. This points out a necessity to take into account strategic agents in statistical learning. For example in experiments that have a targeted precision to achieve and want to estimate the number of participants required, our results show that this number should be higher in strategic settings and one should design incentives to have participants with non-linear provision costs. Our work is a stepping stone in this direction.

The objective in our model was to include in the simplest possible way two key elements of learning from strategic data sources: the public good component of the model's precision and the uncertainty about other agents' data. \blue{Our model and results, however, are robust to small changes of assumptions. In particular, we assumed that agents obtain a single data point and that the resulting estimation cost depends on the expected information matrix. Both of these assumptions can be removed from our model at the cost of heavier notation. For instance, we could assume that each agent obtain a set of data points (and not a single datapoint), say distributed according to a probability distribution $\mu_{\mathrm{set}}$. In this case, our main bound of Theorem \ref{thm.conv} would simply depend on $\sum_{S_i \subseteq \Xset} \sum_{x_i \in S_i}x_ix_i^\top \mu_{\mathrm{set}}(S_i)$ instead of $\esp{xx^\top}$.}
We could also assume that agents have a (Bayesian) belief regarding other agents' provision costs. Such an extension would preserve the basic game's structure that leads to the convergence rates of Theorem~\ref{thm.conv}.

In our work, we did not consider incentives that the analyst could give agents to improve the estimation, such as monetary payments. In theory, it would be possible for the analyst to change the provision costs from $\cost_i$ to $\cost^{\prime}_i$ by introducing a payment that is the difference between the two; our results would then directly apply with the new costs $\cost^{\prime}_i$. For instance, if the costs $\cost_i$ are linear and hence GLS is not consistent, it would theoretically be possible for the analyst to select a target convergence rate and enforce it through appropriate new costs $\cost^{\prime}_i$ and the corresponding payments. We note, however, that besides requiring fine knowledge of the actual costs, it would also incur a budget for the analyst that could be arbitrarily large---this is therefore not a valid practical solution.

\acks{
This work was supported by MIAI @ Grenoble Alpes (ANR-19-P3IA-0003), by the French National Research Agency through the ``Investissements d’avenir'' program (ANR-15-IDEX-02 and ANR-11-LABX-0025-01) and through grant ANR-20-CE23-0007; and by the DGA.
}

\clearpage


\bibliography{bibliography,experimental_design_references,privacy_references}
\bibliographystyle{plain}

\numberwithin{lemma}{section}		
\numberwithin{proposition}{section}		
\numberwithin{equation}{section}		
\newpage
\appendix
\onecolumn
\section{Notation table}
\label{sec:notation}

%
%

To ease the reading, Table~\ref{tab:notation} summarizes the main notation introduced in the paper and used throughout the paper and the supplementary material. 

\begin{table}[h!]
\centering
\caption{Summary of the notation}
\label{tab:notation}
 \begin{tabular}{|c|l|} 
 \hline
 Symbol & Meaning \\
 \hline
 $\Xset$ & Finite set of possible attribute vectors $x$. \\ 
 $\proba(\cdot)$ & Probability distribution on attributes vectors $x$.  \\
 $\nbplay$ & Number of agents. \\
 $\strat_i(x)$ & Precision allocated to vector $x$ by player $i$. \\
 $\stratb_{-i}(x)$ & Vector of precisions allocated to vector $x$ by every player except $i$. \\
 $\cost_i(\ell)$ & Data provision cost of agent $i$ for a precision $\ell$ of the data provided.\\
 $\scalar(\info)$ & Scalarization mapping a (covariance) matrix $M$ to a cost.\\
 $\Cpublic{\stratb}$ & Estimation cost $= \scalar\p{\p{\esp{\sum_{i\in N} \lambda_i(x_i)
  x_ix_i^T}}^{-1}}$.\\
 $J_i(\lambda_i,\lambdab_{-i})$ & Payoff of agent $i$ considering the strategy profile $\lambdab = (\lambda_i, \lambdab_{-i})$.\\
 $\potent(\stratb)$ & Potential function of the linear regression game.\\
 $\nu_{\lambdabeq}(x)$ & Measure mapping a vector to its probability $\times$ the sum of precisions attributed by agents.\\
 $p, \pmin, \pmax$ & Homogeneity degrees of provision costs.\\ 
 $\qhomo$ & Homogeneity degree of a scalarization.\\
 $\optdes$ & Optimal design.\\
\hline
\end{tabular}
\end{table}

\section{Scalarizations}
\label{sec:scalarizations}

In this section, we detail some examples of usual matrix scalarizations mentioned briefly in the paper that fit Assumption~\ref{scalarization.type}  and are standard in optimal design. For further information on the subject, see \cite{atkinson2007optimum} and the references therein.

\subsection{Trace}

The trace trivially satisfies Assumption~\ref{scalarization.type} with $q = 1$. It is used in optimal design to minimize the average variance of the estimates of the regression coefficients and is known as the A-optimal design criterion.

\subsection{Squared Frobenius norm}

The squared Frobenius norm is defined on the set of matrices $V = [v_{ij}]$ of dimensions $\ndim \times \ndim$ as:
\begin{align*}
 ||V||_F^2 &= \sum_{i = 1}^\ndim \sum_{j = 1}^\ndim v_{ij}^2\\
 &= \trace(VV^T).
\end{align*}
It is easy to check that this scalarization satisfies Assumption~\ref{scalarization.type} with $q = 2$.

\subsection{Mean squared error}

We define the mean squared error of \emph{an estimator} $\betabh$ estimating a linear model $\betab$ as:
\begin{equation}
 \mse(\betabh) = \esp{(\betabh - \betab)(\betabh - \betab)^T}.
\end{equation}
This mean squared error is simply the estimator's covariance matrix. It is a property of the estimator and it is a classical proxy to assess its quality.\footnote{See Michel F. Dekking, Cornelis Kraaikamp, Paul H. Lopuhaä, and Ludolf Meester. A Modern Introduction to Probability and Statistics: Understanding why and how. Springer Science \& Business Media (2005).} In particular, in the linear regression setting, it does not depend on the realization of the values $\tilde{y}_i$ but only on the independent variables $x_i$ and on the precisions of the response variables $\tilde{y}_i$ (unlike the empirical mean squared error). 

A similar definition can also be applied to the predicted value for a given data point $x$. In this case it is referred to as the mean squared error of the predictor:
\begin{equation*}
  \mse(\betabh^T x) = \esp{(\betabh^T x - \betab^T x)^2}.
\end{equation*}
This quantity gives an indication on the average amount of error the estimator makes when predicting the value of the model on a given data point $x$. It is used in optimal design to define scalarizations by considering the average mean squared error made by the estimator on specific data points. To properly define these criteria, we first write this quantity in a more convenient form.

The mean squared error of the predictor of the linear model on a parameter $x$ is:
\begin{align*}
 \mse(\betabh^T x) = \var(\betabh^T x) + \bias(\betabh^T x, \betab^T x).
\end{align*}
As $\betabh$ is unbiased, we can rewrite the mean-squared error depending only on the variance. Let $\cov$ be the covariance matrix of a linear unbiased estimator $\betabh$. We then have:
\begin{align*}
 \mse(\betabh^T x) &= \var(\betabh^T x) \\
  &= x\cov x^T.
\end{align*}

We now define the two main design criteria (or scalarizations) that are based on this mean squared prediction error:
\begin{enumerate}[a.,leftmargin=22pt,itemsep=0pt,topsep=0pt]
\item \textbf{The average mean squared error.}
Given a set $\mathcal{V}$ and a probability distribution $\rho$ on $\mathcal{V}$, we define the average mean-square error scalarization as:
\begin{equation*}
 \scalar: \cov \rightarrow \int_{\mathcal{V}} x\cov x^T \rho(dx).
\end{equation*}

This scalarization is trivially convex, increasing in the positive semi-definite order and homogeneous of degree $q=1$. It is known in the optimal design litterature as the I (integrated) optimal design criterion and is used to minimize the average prediction error. In our setting this scalarization can be directly applied with $\mathcal{V} = \Xset$ and $\rho = \proba$.

\item \textbf{The mean squared error over a set of specific points.}
Given a finite set $\{x_1, \dots, x_m\}$ of possible attribute vectors, we define the mean-squared error on that specific set of points as: 
\begin{equation*}
 \scalar: \cov \rightarrow \sum_{i = 1}^m x_i \cov x_i^T.
\end{equation*}

This scalarization is similar to the previous one and has the same properties but is used to minimize the prediction error only on a specific set of points of interest. It is known in the optimal design litterature as the V optimal design criteria.

\end{enumerate}


\section{Proofs}
\label{sec:proofs}
%
%

\subsection{Proof of Proposition~\ref{thm.uniqueness}}
\label{proof.uniqueness}

Recall that a strategy $\lambda$ is a function from the finite set $\Xset$ to $\R_+$. Hence, a strategy $\lambda$ is an element of the finite dimensional space $\R^{\Xset}$ and a precision profile $\lambdab$ is essentially a vector (of dimension $n|\Xset|$).

\textbf{Step 1: The potential function is convex.} By Assumption~\ref{privacy.type}, the data provision costs are convex. Additionally, $\Cpublic{\stratb}$ is a composition of the function $\stratb \rightarrow \esp{\sum_i \strat_i(x_i)x_ix_i^T}$, the matrix inverse function, and the scalarization. The matrix inverse function is a convex function. As the scalarization $F$ is non-decreasing and convex (by Assumption~\ref{scalarization.type}), $\Cpublic{\stratb}$ is convex in $\stratb$. This shows that the potential is convex; hence a strategy profile is a Nash equilibrium if and only if it is a minimum of the potential.

\textbf{Step 2: The potential admits a minimum.} Let $\potent(\mathbf{1}) = \sum_i \esp{c_i(1)} + \scalar((\esp{\sum_i x_ix_i^T})^{-1})$. By Assumption~\ref{privacy.type}, $\lim_{\ell \to +\infty}\cost_i(\ell) = +\infty$. Recall that $\mu$ has full support on $\Xset$ (Assumption~\ref{structure.information}). Then, for all $x\in\Xset$, $\lim_{\ell \to +\infty}\cost_i(\ell)\mu(x) = +\infty$. Hence, there exists $\ellmax$ such that for all $i$ and all $x$, $c_i(\ellmax)\mu(x) > \potent(\mathbf{1})$. This shows that if $\lambdab$ is a precision profile such that $\lambda_i(x)>\ellmax$ for some $i$ and $x$, then $\phi(\lambdab)\ge\phi(1)$.

As $\Xset$ is finite, the set of precision profile such that for all $i$, $\strat_i: \Xset \rightarrow [0, \ellmax]$ is a compact set. As $\potent$ is convex, it admits a minimum on this set. By definition of $\ellmax$, this minimum is a global minimum. This concludes the proof that there exists an equilibrium. If in addition all data provision costs are strictly convex, then the potential is strictly convex; hence this minimum is unique and there exists a unique equilibrium.

\textbf{Step 3: If different equilibria exist, they have the same estimation cost.}
As shown before, an equilibrium is a minimum of the potential function $\phi$ defined for all precision profiles $\lambdab$ as 
\begin{align*}
  \phi(\lambdab)=\sum_{i}\esp{c_i(\lambda_i(x))}+\Cpublic{\lambdab}.
\end{align*}
In the above equation, $\Cpublic{}$ is not necessarily strictly
convex. Recall indeed that $\Cpublic{\lambdab}$ is defined as
\begin{align*}
  \Cpublic{\lambdab} = \scalar\Bigg(\p{\esp{\sum_{i} \lambda_i(x_i) x_ix_i^T}}^{-1}\Bigg).
\end{align*}
If there exist $\lambdab\ne\lambdab'$ (which is the case for any linear regression game with $n\ge2$ players) such that $\esp{\sum_{i} \lambda_i(x_i) x_ix_i^T}=\esp{\sum_{i} \lambda'_i(x_i) x_ix_i^T}$, then $\Cpublic{\lambdab}=\Cpublic{\lambdab'}=\Cpublic{(\lambdab+\lambdab')/2}$ and $\Cpublic{}$ is not strictly convex.

Yet, we show below that $\Cpublic{\cdot}$ is strictly convex when viewed as a function of $M(\lambdab)=\esp{\sum_{i} \lambda_i(x_i) x_ix_i^T}$. Indeed $F$ is an increasing convex function (by Assumption~\ref{scalarization.type}, see erratum in Appendix~\ref{sec:notation}) and $M\mapsto M^{-1}$ is a strictly convex function, the function $M\mapsto F(M^{-1})$ is a strictly convex function.

Assume that there exist two equilibria $\stratbeq$ and $\stratbeqt$ and assume by contradiction that $\esp{\sum_i \strateq_i(x_i)x_ix_i^T} \neq \esp{\sum_i \strateqt_i(x_i)x_ix_i^T}$. Let $\stratb^\prime=(\stratbeq+\stratbeqt)/2$. The strict convexity of $M\mapsto F(M^{-1})$ implies that $\Cpublic{\stratb^\prime}<(\Cpublic{\stratbeq}+\Cpublic{\stratbeqt})/2$. This implies that $\potent(\stratb^\prime) < (\potent(\stratbeq)+\potent(\stratbeqt))/2$, which contradicts the fact that $\stratbeq$ and $\stratbeqt$ are minima of the potential function $\phi$. Thus, if two different equilibria exist, they have the same information matrix and yield the same estimation cost.

\subsection{Proof of Theorem~\ref{generalizedBound}}

This proof relies on adapting the proof of \cite{Gast20a} to our setting. For completeness, we redo this proof using the notations of our model.

\medskip

\noindent\textbf{Upper Bound.} 
To simplify the notation, in this proof, we write $p$ instead of $\pmin$; hence we show that $\poa \le n^{\frac{q}{p + q}}$.
Suppose that $\poa > n^{\frac{q}{p + q}}$. 
This implies that there exists an equilibrium $\lambdabeq$ such that
\begin{align*}
  \SC(\lambdabeq)&\ge \sum_{i\in N} \esp{c_i(\lambdaeq_i(x_i))} + n\Cpublic{\lambdabeq} \\
  &> n^{\frac{q}{q + p}} (\sum_{i\in N} \esp{c_i(\lambda_i^\opt(x_i)} + n\Cpublic{\lambdab^\opt})\\
  &= n^{\frac{q}{q + p}}\SC({\lambdab}^{\opt}).
\end{align*}
We will show that this implies that $\lambdabeq$ is not an equilibrium, which is a contradiction. 

By using that $c_i(\lambdaeq_i)\ge 0$ and dividing the above inequality
by $n$, we obtain:
\begin{align*}
  \Phi(\lambdabeq)& = \sum_{i\in N} \esp{c_i(\lambdaeq_i(x_i))} + \Cpublic{\lambdabeq}\\
  &\ge \frac1n\left(\sum_{i\in N} \esp{c_i(\lambdaeq_i(x_i))} + n\Cpublic{\lambdabeq}\right) 
  & (= \frac1n\SC(\lambdabeq))\\
   &> n^{-\frac{p}{q + p}}\sum_{i\in N} \esp{c_i(\lambda_i^\opt(x_i))} + n^{\frac{q}{p + q}}\Cpublic{\lambdab^\opt}
   &(= \frac1n n^{\frac{q}{q + p}}\SC({\lambdab}^{\opt}))\\
  &\ge \sum_{i\in N} \esp{c_i\left(\frac{\lambda_i^\opt(x_i)}{n^{\frac{1}{p + q}}}\right)} + 
    \Cpublic{\frac{\lambdab^\opt}{n^{\frac{1}{p + q}}}}& (=\Phi(\lambdab^\opt/(n^{1/(p+q)}))),
\end{align*}
where we used the homogeneity assumptions for the last inequality.

To conclude the proof, we remark that
$\frac{\lambdab^\opt}{n^{1/(p + q)}}$ is a valid strategy profile. This
would imply that $\lambdabeq$ is not a minimum of the potential
function, which is a contradiction. Thus, we have
$\poa \le n^{\frac{q}{p + q}}$.


\medskip

\noindent\textbf{Lower Bound.} Let $p,q\ge1$. Consider the linear regression game where $\Xset = \{1\}$, $\mu(1) = 1$, $c_i(\ell)=\ell^p$ and $F(V)=\mathrm{trace}(V)^q=V^q$. As $\mu$ is a deterministic measure, this game is also a valid game in the setting of \cite{Gast20a}. It is straightforward to see that our game has a unique Nash equilibrium $\lambdabeq$ that corresponds to the unique non-trivial Nash equilibrium of the corresponding game of \cite{Gast20a}. Hence, the price of anarchy of our game coincides with the price of stability of the corresponding game of \cite{Gast20a}. Hence, the computation of \cite{Gast20a} show that, for all $\epsilon$, there exist $n$ such that this game has a price of anarchy larger than $n^{q/{p+q}}(1-\epsilon$). 

\subsection{Proof of Theorem~\ref{thm.optim.design}}
\label{proof.optim.design}

Recall that the provision cost of a player $i$ is
$c_i(\ell)=a_i\ell$ and assume without loss of generality that $a_1\le
a_2\le \dots \le a_n$.


Let $\lambdabeq$ be an equilibrium of the game and let $\optdes$ be an optimal design. Recall that $\nu_{\lambdabeq}(x)=\sum_{i\in N}\lambdaeq_i(x)\mu(x)$ for all $x\in \Xset$. Let $b=\sum_{x\in\Xset}\nu_{\lambdabeq}(x)$.  Let $\lambda_{\optdes}$ be the strategy such that $\lambda_{\optdes}(x)=b\optdes(x)/\mu(x)$ for all $x$ and consider the precision profile $\lambdab_{\optdes}=(\lambda_{\optdes},0,\cdots,0)$. We have:
\begin{align}
  \phi(\lambdabeq) &= F((\sum_xxx^T\nu_{\lambdabeq}(x))^{-1}) +
                     \sum_i a_i\sum_x\lambdaeq_i(x)\mu(x)\nonumber\\
                   &\ge F( (\sum_xxx^T\nu_{\lambdabeq}(x))^{-1} )
                     + a_1b \label{eq:ineq1}\\
                   &= b^{-q}F( (\sum_xxx^T\nu_{\lambdabeq}(x)/b)^{-1} )
                      + a_1b  \label{eq:ega1}\\
                   &\ge b^{-q}F( (\sum_xxx^T\optdes(x))^{-1} ) +
                     a_1b \label{eq:ineq2}\\
                   &=F((\sum_xxx^T\lambda_{\optdes}(x)\mu(x))^{-1}) +
                     a_1\sum_x\lambda_{\optdes}(x)\mu(x)\label{eq:ega2}\\
                   &=  \phi(\lambdab_{\optdes}),\nonumber
\end{align}
where the first inequality \eqref{eq:ineq1} is because $a_1\le a_i$ for all $i$, and 
the second inequality \eqref{eq:ineq2} is because $\nu^*$ is an
optimal design. The equalities \eqref{eq:ega1} and \eqref{eq:ega2} are
due to the homogeneity of $F$ (Assumption~\ref{scalarization.type}
implies that $F((bM)^{-1})=b^{-q}F(M^{-1})$), and in \eqref{eq:ega2} we also use that by definition of $\lambda_{\optdes}$ and since $\sum_x \optdes = 1$ we have $\sum_x\lambda_{\optdes}(x)\mu(x) = b$.

If $\nu_{\lambdabeq}/b$ was not an optimal design, the inequality \eqref{eq:ineq2} would be strict which would imply that $\phi(\lambdabeq)>\phi(\lambdab_{\optdes})$ which would contradict the fact that $\lambdabeq$ is a minimum of the potential. This implies that \eqref{eq:ineq2} is an equality which means that $\nu_{\lambdabeq}(x)/b$ is an optimal design.

\subsection{Proof of Proposition~\ref{thm.scaling}}
\label{proof.scaling}

An equilibrium is a minimum of the potential function $\phi$. When all costs are identical, this function is symmetric. As $\phi$ is a convex function, this implies that there exists a minimum of $\phi$ that is symmetric.  A symmetric precision profile $\lambdab=(\lambda,\dots\lambda)$ is a Nash equilibrium if and only if it minimizes the potential $\phi$. By symmetry, this potential can be rewritten as:
\begin{align*}
  \phi(\lambda,\dots,\lambda) &= n \esp{\lambda(x)^p}+
                                \Cpublic{n\lambda}
\end{align*}
Let us define the function $f:\R_+^{\Xset}\to\R_+$ that associates
to a strategy $\lambda$, the quantity
$f(\lambda)=\esp{\lambda(x)^p} + \Cpublic{\lambda}$. Recall that
$\stratsingle$ is the minimum of $f$. For a given strategy $\lambda$,
we have:
\begin{align*}
  \phi(\nbplay^{-\frac{\qhomo + 1}{\phomo +
  \qhomo}}\lambda,\dots,\nbplay^{-\frac{\qhomo + 1}{\phomo +
  \qhomo}}\lambda) 
  &=n \esp{\lambda(x)^pn^{-\frac{q+1}{p+q}p}}+
    \Cpublic{nn^{-\frac{q+1}{p+q}}\lambda}\\
  &= n^{q\frac{1-p}{p+q}} \esp{\lambda(x)^p} +
    n^{-q\frac{p-1}{p+q}}\Cpublic{\lambda} 
  \\
  &= \nbplay^{-\qhomo\frac{\phomo - 1}{\phomo + \qhomo}} \Cpublic{\lambda},
\end{align*}
where we used the homogeneity of $F$, which implies that
$\Cpublic{a\lambda}=a^{-q}\Cpublic{\lambda}$. 

For any $n\in\{1,2,\dots\}$, the function $\lambda \mapsto \nbplay^{-\frac{\qhomo + 1}{\phomo + \qhomo}} \lambda$ is a bijection from $\R_+^\Xset$ to $\R_+^\Xset$. Hence, $\lambda$ is a minimum of $f$ if and only if $(\nbplay^{-\frac{\qhomo + 1}{\phomo + \qhomo}}\lambda,\dots,\nbplay^{-\frac{\qhomo + 1}{\phomo + \qhomo}}\lambda)$ is a minimum of $\potent$.  Thus, the precision profile $\stratbeq$ such that $\forall i: \lambdaeq_i=n^{-\frac{1+q}{p+q}}\stratsingle$ is an equilibrium.

The second part of the proposition follows immediately from the homogeneity of $F$, which implies that for this equilibrium, $\Cpublic{\stratbeq}=n^{-q\frac{p-1}{p+q}}\Cpublic{\stratsingle}$. Moreover, all equilibria have the same estimation cost by Proposition~\ref{thm.uniqueness}.


\subsection{Proof of Theorem~\ref{thm.conv}}
\label{proof.conv}
\subsubsection{Upper bound}

In this first step, we compute the value of the potential function for a particular constant strategy in which all players use the precision $\lambda(x)=\nbplay^{-\frac{\qhomo + 1}{\pmin + \qhomo}}$ for all values of $x\in\Xset$.  By abuse of notation, we denote this precision profile by  $(\nbplay^{-\frac{\qhomo + 1}{\pmin + \qhomo}},\dots,\nbplay^{-\frac{\qhomo + 1}{\pmin + \qhomo}})$. The value of the potential for this precision profile is
\begin{align}
  \potent(\nbplay^{-\frac{\qhomo + 1}{\pmin +
  \qhomo}},\dots,\nbplay^{-\frac{\qhomo + 1}{\pmin + \qhomo}})
  &= \sum_{\play = 1}^{\nbplay} \esp{\cost_\play(\nbplay^{-\frac{\qhomo + 1}{\pmin + \qhomo}})} + \scalar((\sum_{\play = 1}^{\nbplay} \esp{xx^T \nbplay^{-\frac{\qhomo + 1}{\pmin + \qhomo}}})^{-1})\nonumber\\
  &=\sum_{\play = 1}^{\nbplay}\cost_\play(\nbplay^{-\frac{\qhomo + 1}{\pmin + \qhomo}})
    +\scalar((n^{\frac{\pmin-1}{\pmin + \qhomo}} \esp{xx^T})^{-1})\nonumber\\
  & \le \sum_{\play = 1}^\nbplay \nbplay^{-\pmin\frac{\qhomo + 1}{\pmin + q}} \cost_i(1) + \scalar((\nbplay^{\frac{\pmin-1}{\pmin + q}} \esp{xx^T})^{-1})\label{eq:line2}\\
  &=\nbplay^{-\pmin\frac{\qhomo + 1}{\pmin + q}} \sum_{\play = 1}^\nbplay \cost_i(1) + \nbplay^{\frac{\qhomo(1 - \pmin)}{\pmin + \qhomo}}\scalar((\esp{xx^T})^{-1})\label{eq:line2bis}\\
  &\le \nbplay^{-\frac{\qhomo(\pmin-1)}{\pmin + \qhomo}}c_{\max}(1) + \nbplay^{\frac{\qhomo(1 - \pmin)}{\pmin + \qhomo}}\scalar((\esp{xx^T})^{-1})\label{eq:line3}\\
  &= \nbplay^{-\frac{\qhomo(\pmin-1)}{\pmin + \qhomo}}\p{c_{\max}(1) + \scalar((\esp{xx^T})^{-1})}\label{eq:line4},
 \end{align}
 where we use that $c_i(1)\ge a^{\pmin}c_i(1/a)$ with $a=n^{\frac{q+1}{\pmin+q}}$ (from the theorem's assumption) in \eqref{eq:line2}, the homogeneity of $F$ (Assumption~\ref{scalarization.type}) in \eqref{eq:line2bis}, and the theorem's assumption, which implies that $c_i(1)\le c_{\max}(1)$ for all $i$, in \eqref{eq:line3}.

 As $c_i(\ell)\ge0$ and $\lambdabeq$ is a minimum of the potential, it
 holds that
 \begin{align*}
   \Cpublic{\lambdabeq}&\le\potent(\lambdabeq)\le\potent(\nbplay^{-\frac{\qhomo + 1}{\pmin +
  \qhomo}},\dots,\nbplay^{-\frac{\qhomo + 1}{\pmin + \qhomo}}).
 \end{align*}
 Hence, the right-hand-side of \eqref{eq.conv} holds with $D=\p{c_{\max}(1) + \scalar((\esp{xx^T})^{-1})}$. 

\subsubsection{Lower bound}

By \eqref{eq:line4}, $\phi(\lambdabeq)\le \nbplay^{-\frac{\qhomo(\pmin-1)}{\pmin + \qhomo}} (c_{\max}(1) + \scalar((\esp{xx^T})^{-1}))$. Recall
that all $c_i$ are increasing convex and
$\inf_i c_i(1)\ge c_{\min}(1)>0$. This implies that
$\lim_{\ell\to\infty}\inf_ic_{i}(\ell)=\infty$ as
$\inf_ic_{i}(\ell)>\ell^{\pmin}c_{\min}(1)$. As $\lambdabeq$ is a
minimum of the potential, this implies that there exists a value
$\ellmax$ independent of $n$ such that $\lambdaeq_i(x)\le\ellmax$. 

 We first obtain a bound on the total amount of precision given by all
 players. To do that we use Jensen's inequality for concave function
 in \eqref{eq:step2_1}. Then we use that $c_i(\ellmax)\le
 (\ellmax/\lambda_i(x))^{\pmax}c_i(\lambda_i(x))$ as 
 $\ellmax/\lambda_i(x)>1$ to obtain \eqref{eq:step2_2} and $c_i(\ellmax)\ge
 c_{\min}(\ellmax)$ to obtain \eqref{eq:step2_3}:
 \begin{align}
   \p{\sum_{\play = 1}^\nbplay \frac{1}{\nbplay} \esp{\cost_\play(\strat_\play(x))}}^{\frac{1}{\pmax}} & \ge \sum_{\play = 1}^\nbplay \frac{1}{\nbplay} \esp{\p{\cost_\play(\strat_\play(x))}^{\frac{1}{\pmax}}}\label{eq:step2_1}\\
  &\ge \sum_{\play = 1}^\nbplay  \frac{1}{\nbplay}\esp{((\strat_\play(x)/\ellmax)^{\pmax}\cost_\play(\ellmax))^\frac{1}{\pmax}} \label{eq:step2_2}\\
  &\ge \frac{(c_{\min}(\ellmax))^{\frac{1}{\pmax}}}{\ellmax} \frac{1}{\nbplay} \sum_{\play = 1}^\nbplay \esp{\strat_\play(x)}.\label{eq:step2_3}
 \end{align}

This shows that %
 \begin{align}
   \sum_{\play = 1}^\nbplay \esp{ \strateq_\play(x)}
   &\le\frac{n\ellmax}{(c_{\min}(\ellmax))^{1/\pmax}} \p{\sum_{\play = 1}^\nbplay
     \frac{1}{\nbplay}
     \esp{\cost_\play(\strateq_\play(x))}}^{\frac{1}{\pmax}} \nonumber\\
   &\le\frac{n\ellmax}{(c_{\min}(\ellmax))^{1/\pmax}} \p{\frac1n\phi(\lambdabeq)}^{1/\pmax}\nonumber\\
     &\le
       \frac{n\ellmax}{(c_{\min}(\ellmax))^{1/\pmax}}\p{\frac1n\nbplay^{-\frac{\qhomo(\pmin-1)}{\pmin
       + \qhomo}}\p{c_{\max}(\ellmax) + \scalar((\esp{xx^T})^{-1})}}^{1/\pmax},
       \label{eq:step2_4}
 \end{align}
 where we used \eqref{eq:step2_3} for the first inequality, the fact
 that $\Cpublic{\lambdab}\ge0$ for the second and \eqref{eq:line4} to obtain
 the last inequality. 

 Note that the exponent of $n$ in \eqref{eq:step2_4} is 
 \begin{align*}
   1-1/\pmax-\frac{q(\pmin-1)}{\pmax(\pmin+q)}
   &=\frac{\pmax(\pmin+q)-(\pmin+q)-q(\pmin-1)}{\pmax(\pmin+q)}\\
   &=\frac{\pmax(\pmin+q)-\pmin(1+q)}{\pmax(\pmin+q)}\\
   &=\frac{\pmax(\pmin-1)+(\pmax-\pmin)(1+q)}{\pmax(\pmin+q)}\\
   &=\frac{\pmin-1}{\pmin+q} + \alpha/q,
 \end{align*}
 where $\alpha=\qhomo\frac{(\pmax-\pmin)(q+1)}{\pmax(q+\pmin)}$ is the 
 same $\alpha$ as in Theorem~\ref{thm.conv}.
 
Plugging this into \eqref{eq:step2_4} yields the upper bound on the total amount of precision given by all
 players:
 \begin{equation}
 \label{precision.major}
  \sum_{\play = 1}^\nbplay\esp{\strateq_\play(x)} \le \ellmax\p{1 +
  \frac{\scalar((\esp{xx^T})^{-1})}{c_{\min}(\ellmax)}}^{\frac{1}{\pmax}} \nbplay^{\frac{\pmin-1}{\pmin+q}+\alpha/q}.
 \end{equation}

 Recall that $\nu_{\lambdabeq}(x)=\sum_i\lambda_i(x)\mu(x)$.
 Following what we did in \eqref{eq:ineq2} with the notation
 $b=\sum_{x\in\Xset}\nu_{\lambdabeq}(x)=\esp{\sum_{i}\lambdaeq_i(x)}$,
 we have
 \begin{align}
   \Cpublic{\lambdabeq} &\ge \p{\esp{\sum_{i}\lambdaeq_i(x)}}^{-q}
                          F\p{\p{\sum_{x\in\Xset}
                          xx^T\optdes(x)}^{-1}}.
                          \label{eq:optdes_proof}
 \end{align}
 Combining \eqref{eq:optdes_proof} and \eqref{precision.major} shows that the
 right-hand-side of \eqref{eq.conv} holds with 
 \begin{align*}
   d=F\p{\p{\sum_{x\in\Xset}
   xx^T\optdes(x)}^{-1}}\ellmax\p{1 +
  \frac{\scalar((\esp{xx^T})^{-1})}{c_{\min}(\ellmax)}}^{-\frac{q}{\pmax}}.
 \end{align*}


%

\section{Equivalence}
\label{sec:equiv}

In this section, we show that our model is equivalent to the complete information model defined in \cite{Gast20a}, when the number of player goes to infinity. We consider a model with $n$ agents in which the feature of agent $i$ are chosen \emph{i.i.d.}. The only difference between the two models is that: 
\begin{itemize}
  \item In our model, an agent $i$ does not know the exact feature $x_{-i}$ of the other individual but only knows the distribution $\mu$ from which they are drawn. As a result, an player $i$ seeks to minimize
  \begin{align*}
    J_i(\lambda_i,\lambdab_{-i})=\esp{c_i(\lambda_i(x_i))} + \scalar\Bigg(\p{\esp{\sum_{i\in N} \lambda_i(x_i) x_ix_i^{\top}}}^{-1}\Bigg),
  \end{align*}
  where $\lambda_i:\Xset\to\R^+$ is a function that associates to each possible feature $x\in\Xset$ a precision $\lambda_i(x)$. 
  \item In the model of \cite{Gast20a}, a player $i$ knows the exact features of other players. As a result, its cost function is
  \begin{equation}
    \label{payoff.complete}
    \payoffci_i(\ell_i, \ell_{-i}, X) = c_i(\ell_i) + \scalar((\sum_{i = 1}^n \ell_ix_ix_i^T)^{-1}).
  \end{equation}
  In the above definition, we emphasize that the cost of a player depends on $X=(x_i)_{i\in N}$ which is the matrix of features of all players. In particular, the equilibrium of the complete information game is well defined only if $\sum_i x_i x_i^T \succ 0$. This assumption simply states that the data points held by the agents span $\R^\ndim$ so that the corresponding linear regression is well defined. 
We refer to \cite{Gast20a} for technical results regarding the existence of a Nash equilibrium. The complete information game is a potential game with the  potential:
\begin{equation}
 \label{potential.complete}
 \potentci(\bell, X) = \sum_{i = 1}^nc_i(\ell_i) + \scalar((\sum_{i = 1}^n \ell_ix_ix_i^T)^{-1}),
\end{equation}
and the Nash equilibrium of the game is the minimum of the potential function.
\end{itemize}

In this section, we show that when $n \rightarrow +\infty$, the equilibrium of the complete information game, that we denote by $\bell^{\ci*}$, and the equilibrium of our linear regression $\stratbeq$ are equivalent and can be exchanged.

\textbf{Notations and assumptions:} We assume the same assumptions as Theorem~\ref{thm.conv}. In addition, we assume that there is a finite number $T$ of provision cost functions and we denote by $n_t$ the number of agents having provision cost $c_t$ for $t\in\types:=\{1\dots T\}$.

\subsection{Comparison of equilibrium} 

To formally compare the equilibrium of the complete information game to the equilibrium of our linear regression game, we will use need the following lemma. This lemma states that there always exists a symmetric equilibrium of the games considered. Note that if provision costs are strictly convex, the equilibrium is unique. If provision costs are linear,  there might, however, exist an infinite number of equilibrium. 

\begin{lemma}
  \label{lem.eql}
  There exists an equilibrium of the complete information game $\ell^{\ci*}$ such that:
  \begin{equation}
  \forall i,i^\prime, x_i = x_{i^\prime} \text{ and } c_i = c_{i^\prime}\Rightarrow \ell^{\ci*}_i = \ell^{\ci*}_{i^\prime}
  \end{equation}
  There exists an equilibrium of the linear regression game $\stratbeq$ such that:
  \begin{equation}
  \forall i, i^\prime, c_i=c_{i^\prime} \Rightarrow \forall x \in \Xset, \strateq_i(x) = \strateq_{i^\prime}(x) 
  \end{equation}
\end{lemma}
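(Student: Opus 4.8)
The plan is to exploit the potential-game structure established earlier (Proposition~\ref{thm.uniqueness} and equation~\eqref{potential.complete}) together with a symmetrization argument. For both games, an equilibrium is exactly a minimizer of the relevant convex potential function, so it suffices to produce a minimizer that is invariant under the natural group of permutations swapping agents with identical cost functions (and, in the complete-information case, identical attribute vectors). Concretely, for the linear regression game, let $G$ be the group of permutations $\sigma$ of $N$ such that $c_{\sigma(i)} = c_i$ for all $i$; then the potential $\potent$ of \eqref{avg.potential} satisfies $\potent(\sigma\cdot\stratb) = \potent(\stratb)$, since permuting agents with equal provision costs leaves both $\sum_j \esp{c_j(\strat_j(x))}$ and $\Cpublic{\stratb} = \scalar((\esp{\sum_j \strat_j(x_j) x_jx_j^\top})^{-1})$ unchanged (the latter because the expected information matrix is a sum over agents). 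For the complete-information game one does the same with the group $G'$ of permutations preserving both $c_i$ and $x_i$, using invariance of the potential \eqref{potential.complete}.

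The key step is then the averaging argument: take any equilibrium $\stratb$ (which exists by Proposition~\ref{thm.uniqueness}, resp.\ by the cited results of \cite{Gast20a}), and form the average $\bar{\stratb} = \frac{1}{|G|}\sum_{\sigma \in G} \sigma\cdot\stratb$ over the orbit. Since the potential is convex and $G$-invariant, Jensen's inequality gives $\potent(\bar{\stratb}) \le \frac{1}{|G|}\sum_\sigma \potent(\sigma\cdot\stratb) = \potent(\stratb) = \min \potent$, so $\bar{\stratb}$ is again a minimizer, hence an equilibrium; and by construction $\bar{\stratb}$ is $G$-invariant, which for the linear regression game means $c_i = c_{i'} \implies \bar{\strat}_i(x) = \bar{\strat}_{i'}(x)$ for all $x\in\Xset$ (since the transposition of $i$ and $i'$ belongs to $G$ and fixes $\bar{\stratb}$). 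The same argument with $G'$ gives the complete-information statement: $x_i = x_{i'}$ and $c_i = c_{i'}$ imply $\ell^{\ci*}_i = \ell^{\ci*}_{i'}$. One should note that $\R_+^{\Xset}$ (resp.\ $\R_+^n$) is convex, so the averaged profile is still a valid strategy profile.

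I expect the only genuine subtlety to be ensuring the averaged profile lies in the feasible set and that the complete-information potential is indeed convex on the relevant domain (it is, since $M \mapsto \scalar(M^{-1})$ is convex where $\sum_i \ell_i x_i x_i^\top \succ 0$, which holds at any equilibrium because the potential is $+\infty$ otherwise — this is already implicit in \cite{Gast20a}); everything else is routine. A minor point worth spelling out is that when provision costs are strictly convex the equilibrium is unique (Proposition~\ref{thm.uniqueness}), so it must already equal its own symmetrization, and when costs are linear the averaging genuinely selects one symmetric representative among possibly infinitely many equilibria. The main obstacle, such as it is, is bookkeeping the two permutation groups correctly and being careful that in the linear regression game the symmetry is only over cost functions (not attributes), whereas in the complete-information game it is over the pair (cost, attribute).
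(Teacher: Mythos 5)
Your proposal is correct and matches the paper's proof in essence: the paper also symmetrizes a given equilibrium by replacing each agent's precision with the average over agents of the same type (same cost, and same attribute in the complete-information case), which is exactly your orbit average, and concludes via convexity that the symmetrized profile still minimizes the potential (the paper phrases Jensen as ``same estimation cost, no larger provision cost'' rather than applying it to the whole invariant potential, but the argument is the same). No gaps worth noting.
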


\begin{proof}
  Consider an equilibrium $\bell^{\ci*}$ of the complete information game. We define the following strategy profile:
  \begin{equation*}
  \forall i \in N, \ell_i = \sum_{i^\prime = 1}^n \mathbbm{1}_{c_i = c_t \text{ and } x_{i^\prime = x_i}} \frac{\ell^{\ci*}_{i^\prime}}{n_t^{x_i}},
  \end{equation*}
  where $n_t^{x_i}$ is the number of players with features $x_i$ and cost type $t$. 

  This strategy profile is simply that each agent provides data with the precision being the average of the precision of similar agents in the equilibrium. It achieves the same estimation cost as the equilibrium and with our convexity assumptions achieves a lower total provision cost. This is thus a minimum of the potential and an equilibrium.

  The proof for the linear regression game follows the same steps.
\end{proof}

As there is a symmetric equilbrium, this implies that instead of considering strategy profiles, we may restrict our attention to functions $\strat_t(x)$ that associate a type of cost and a data point to a precision.  This is true for the Bayesian game, in which $\strat_i(x)$ is replaced by $\strat_t(x)$ when $c_i=c_t$. This is also true for the complete information game, when $\ell_i$ is replaced by $\strat_t(x_i)$ when $c_i=c_t$.  We work with these functions for the rest of the section and by abuse of notation we redefine the potential of the games as follows:
\begin{align}
  \potentci(\stratb, X) &= \sum_{x \in \Xset}\sum_{t = 1}^T  c_t(\strat_t(x)) n_t^x + \scalar((\sum_{x \in \Xset} xx^T \sum_{t = 1}^T \strat_t(x)n_t^x)^{-1})
  \label{eq:equiv-phici}\\
  \potent(\stratb) &= \sum_{x \in \Xset}\sum_{t = 1}^T  c_t(\strat_t(x)) n_t\proba(x) + \scalar((\sum_{x \in \Xset} xx^T \sum_{t = 1}^T \strat_t(x)n_t\proba(x))^{-1}),
  \label{eq:equiv-phi}
\end{align}
where as before, $n_t$ is the number of players having cost function $c_t$ and $n_t^x$ is the number of player having cost function $c_t$ and features $x$ in the complete information game. 

By abuse of notation, we write $\stratbeq=(\strateq_t)_{t\in\types}$ the equilibrium of our linear regression game and by $\stratbeqci=(\strateqci_t)_{t\in\types}$ the equilibrium of the complete information game. They are the minimum of (respectively) the potential functions \eqref{eq:equiv-phici} and \eqref{eq:equiv-phi}.

\subsection{Main equivalence result}


The intuition behind the theorem is as follows. Equation~\eqref{eq.pot.equiv} states that the minimum of the potentials are equivalent with high probability. Thus, computing the equilibrium of our linear regression game gives a general result on how large complete information games behave. Equations~\eqref{eq.exchange1} and \eqref{eq.exchange2} state that the equilibrium are essentially equivalent. This means that agents can safely compute the equilibrium of the linear regression game without needing to acquire the information of all other agents. We remark that \eqref{eq.pot.equiv} applied with $\pmax = 1$ yields $\potent(\strat^*) = \potentci(\strat^{\ci*})$. Finally, we emphasize that the complexity of Theorem~\ref{thm.equiv} comes from the necessity to prove \emph{equivalence} of potential to show that our results are also valid for the complete information game. Indeed, it is easy to show that both potential go to $0$ as long as $\pmin > 1$. Thus, any result simply stating that the potential of the complete information game converges to the potential of our model is meaningless. With our result, however, it is easy to show that Theorem~\ref{thm.conv} is valid in the complete information setting with high probability.

\begin{theorem}
  \label{thm.equiv}
  Let $\stratbeq$ be an equilibrium of the linear regression game and $\stratbeqci$ be an equilibrium of the complete information game. For all $ 0 < \epsilon < 1/2$, we have with probability at least $1 -|X|\sum_t 2 \exp(-2n_t^{2\epsilon})$:
  \begin{equation}
    \label{eq.pot.equiv}
    \frac{1}{\max_{x, t}\left(\frac{\proba(x) + n_t^{\epsilon - 1/2}}{\proba(x) }\right)^{\pmax - 1}}\potent(\strat^*) \le \potentci(\stratbeqci, X) \le \max_{x, t}\left(\frac{\proba(x)}{\proba(x) - n_t^{\epsilon - 1/2}}\right)^{\pmax - 1} \potent(\strat^*),
  \end{equation}
  \begin{equation}
    \label{eq.exchange1}
    \potentci(\stratbeq, X) \le D_n \max_{x, t}(\frac{\proba(x) + n_t^{\epsilon - 1/2}}{\proba(x) })^{\pmax - 1} \potentci(\stratbeqci, X),
  \end{equation}
  and
  \begin{equation}
    \label{eq.exchange2}
    \potent(\stratbeqci) \le D_n^\prime \max_{x, t}(\frac{\proba(x)}{\proba(x) - n_t^{\epsilon - 1/2}})^{\pmax - 1}\potent(\stratbeq);
  \end{equation}
where
  \begin{align*}
    &D_n = \max(\max_{x,t}(\frac{\proba(x) + n_t^{\epsilon - 1/2}}{\proba(x)n_t}), \frac{1}{(\min_{x,t}(\frac{\proba(x)}{\proba(x) - n_t^{\epsilon - 1/2}}))^q}) \quad\text{ and }\\
    & D_n^\prime = \max(\max_{x,t}(\frac{\proba(x)}{\proba(x) - n_t^{\epsilon - 1/2}}), \frac{1}{(\min_{x,t}(\frac{\proba(x)}{\proba(x) + n_t^{\epsilon - 1/2}}))^q}).
  \end{align*}
\end{theorem}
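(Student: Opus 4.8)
The plan is to reduce the whole statement to a single concentration estimate followed by a deterministic ``rescale-and-plug-in'' comparison of the two potentials \eqref{eq:equiv-phici} and \eqref{eq:equiv-phi}, which differ only in that the empirical count $n_t^x$ of type-$t$ agents with attribute $x$ is replaced by its mean $n_t\proba(x)$. For the concentration step: fix $t$; then $n_t^x$ is a sum of $n_t$ i.i.d.\ Bernoulli$(\proba(x))$ indicators, so $\esp{n_t^x}=n_t\proba(x)$, and Hoeffding's inequality gives $\Proba{|n_t^x-n_t\proba(x)|>n_t^{1/2+\epsilon}}\le 2e^{-2n_t^{2\epsilon}}$. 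A union bound over the $|\Xset|\,|\types|$ pairs $(x,t)$ shows that, with probability at least $1-|\Xset|\sum_t 2e^{-2n_t^{2\epsilon}}$, the event $\mathcal{G}_\epsilon$ that $|n_t^x-n_t\proba(x)|\le n_t^{1/2+\epsilon}$ for all $x,t$ holds, i.e.\ $n_t^x/(n_t\proba(x))\in[1-n_t^{\epsilon-1/2}/\proba(x),\,1+n_t^{\epsilon-1/2}/\proba(x)]$. Everything below is carried out on $\mathcal{G}_\epsilon$, and by Lemma~\ref{lem.eql} together with Proposition~\ref{thm.uniqueness} I take $\stratbeq$ and $\stratbeqci$ symmetric and regard them as per-type strategies minimising $\potent$ and $\potentci(\cdot,X)$ respectively.

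To prove \eqref{eq.pot.equiv} I bound $\potentci(\stratbeqci,X)=\min\potentci(\cdot,X)$ from above by rescaling the Bayesian equilibrium into a competitor for $\potentci$: put $\stratt_t(x)=\strateq_t(x)\,n_t\proba(x)/n_t^x$. This makes the expected information matrices coincide, $\sum_x x x^{\top}\sum_t \stratt_t(x)n_t^x=\sum_x x x^{\top}\sum_t \strateq_t(x)n_t\proba(x)$, so the estimation-cost terms of $\potentci(\stratbt,X)$ and $\potent(\stratbeq)$ are equal and the comparison reduces to the provision-cost terms, which I treat pointwise. Write $a=n_t\proba(x)/n_t^x$. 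When $n_t^x\ge n_t\proba(x)$ we have $a\le1$, and the Theorem~\ref{thm.conv} hypothesis $b^{\pmin}c_t(\ell)\le c_t(b\ell)$ (with $b=1/a\ge1$) rearranges to $c_t(a\ell)\le a^{\pmin}c_t(\ell)\le a\,c_t(\ell)$ since $\pmin\ge1$, whence $n_t^x c_t(\stratt_t(x))\le n_t\proba(x)c_t(\strateq_t(x))$; when $n_t^x< n_t\proba(x)$ we have $a>1$, and $c_t(a\ell)\le a^{\pmax}c_t(\ell)$ gives $n_t^x c_t(\stratt_t(x))\le n_t\proba(x)\,a^{\pmax-1}c_t(\strateq_t(x))\le n_t\proba(x)\big(\proba(x)/(\proba(x)-n_t^{\epsilon-1/2})\big)^{\pmax-1}c_t(\strateq_t(x))$ on $\mathcal{G}_\epsilon$. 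Summing and using that $\scalar$ is increasing and $q$-homogeneous (Assumption~\ref{scalarization.type}) for the (unchanged) estimation term yields the right-hand inequality of \eqref{eq.pot.equiv}; the mirror construction $\stratt_t(x)=\strateqci_t(x)\,n_t^x/(n_t\proba(x))$ gives the left-hand one, with $\big(\proba(x)/(\proba(x)-n_t^{\epsilon-1/2})\big)^{\pmax-1}$ replaced by $\big((\proba(x)+n_t^{\epsilon-1/2})/\proba(x)\big)^{\pmax-1}$.

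For \eqref{eq.exchange1}–\eqref{eq.exchange2} I must control the value of one game's equilibrium inside the \emph{other} game's potential, now \emph{without} rescaling. Splitting $\potentci(\stratbeq,X)$ into its provision-cost part $\sum_{x,t}n_t^x c_t(\strateq_t(x))$ and its estimation-cost part: the first is at most $\max_{x,t}(n_t^x/(n_t\proba(x)))\le\max_{x,t}(\proba(x)+n_t^{\epsilon-1/2})/\proba(x)$ times the provision-cost part of $\potent(\stratbeq)$, while the second, since the expected information matrix shrinks by at most $\min_{x,t}(n_t^x/(n_t\proba(x)))$, is at most $\big(\max_{x,t}\proba(x)/(\proba(x)-n_t^{\epsilon-1/2})\big)^{q}$ times its counterpart, by $q$-homogeneity of $\scalar$. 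Taking the larger of these two per-part factors yields the constant $D_n$, and then substituting the bound on $\potent(\stratbeq)$ in terms of $\potentci(\stratbeqci,X)$ already supplied by \eqref{eq.pot.equiv} contributes the extra $(\cdot)^{\pmax-1}$ factor; \eqref{eq.exchange2} is the symmetric statement, with $D_n'$. Since $n_t^{\epsilon-1/2}\to0$ as the $n_t$ grow, $D_n,D_n'$ and all the $(\cdot)^{\pmax-1}$ factors tend to $1$, which is the announced asymptotic equivalence; tracking the exact form of $D_n,D_n'$ is bookkeeping.

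The one genuinely non-routine point is the rescaling for \eqref{eq.pot.equiv}: because $\potent$ and $\potentci(\cdot,X)$ are each a sum of a term scaling like the $\pmax$-th power of the precisions and a term scaling like the $(-q)$-th power, no single global rescaling can control both; rescaling \emph{per attribute–type pair} so that the two expected information matrices become identical is what makes the estimation term cancel (hence only $\pmax$, and not $q$, appears in \eqref{eq.pot.equiv}), and this in turn is what forces using \emph{both} growth bounds on $c_t$ — the lower one $c_t(a\ell)\le a^{\pmin}c_t(\ell)$ to obtain a clean factor $\le1$ when $n_t^x\ge n_t\proba(x)$, and the upper one $c_t(a\ell)\le a^{\pmax}c_t(\ell)$ in the complementary case. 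Everything else — the union bound, the homogeneity manipulations, and the precise constants — is routine.
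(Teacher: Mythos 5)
Your proposal is correct and takes essentially the same route as the paper: Hoeffding plus a union bound to control $n_t^x/(n_t\proba(x))$, the per-$(x,t)$ rescaling $\stratt_t(x)=\strateq_t(x)\,n_t\proba(x)/n_t^x$ (and its mirror) so that the information matrices coincide and only the provision-cost terms need the growth bounds on $c_t$, and then the direct plug-in of each equilibrium into the other potential with separate factors for the provision part and the ($q$-homogeneous) estimation part, combined with \eqref{eq.pot.equiv}, for the two exchange inequalities. Your explicit case split — using the $\pmin$ bound when $n_t\proba(x)/n_t^x\le 1$ and the $\pmax$ bound otherwise — is in fact a slightly more careful rendering of the paper's corresponding step.
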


\begin{proof}
  The equilibrium are defined as $\stratbeqci \in \arg\min(\potentci(\stratb, X))$ and $\stratbeq \in \arg\min(\potent(\stratb))$, where the potential functions are defined in Equations \eqref{eq:equiv-phici} and \eqref{eq:equiv-phi}. 

We define $\stratbeqt(x) = \stratbeq(x)\frac{\proba(x)n_t}{n_t^x}$. As $\stratbeqci$ attains the minimum of $\potentci$, we have:
\begin{align}
  \potentci(\stratbeqci, X) &\le \potentci(\stratbeqt, X)\nonumber\\
  &= \sum_x\sum_t  c_t(\strat_t^*(x)\frac{\proba(x)n_t}{n_t^x}) n_t^x + F((\sum_x xx^T\sum_t \strat_t^*(x) \proba(x))^{-1})\nonumber\\
  & \le \sum_x \sum_t (\frac{\proba(x)n_t}{n_t^x})^\pmax c_t(\strat_t^*(x)) n_t^x + F((\sum_x xx^T\sum_t \strat_t^*(x) \proba(x))^{-1})\label{eq:equiv-ineq2}\\
  & = \sum_x \sum_t (\frac{\proba(x)n_t}{n_t^x})^{\pmax - 1} c_t(\strat_t^*(x)) n_t^x \proba(x) + F((\sum_x xx^T\sum_t \strat_t^*(x) \proba(x))^{-1})\nonumber\\
  & \le \max_{x, t}(\frac{\proba(x)n_t}{n_t^x})^{\pmax - 1} \potent(\stratbeq),\label{eq:equiv-ineq3}
\end{align}
where the inequality \eqref{eq:equiv-ineq2} comes from the assumption on the costs and the inequality \eqref{eq:equiv-ineq3} comes from the fact that $\max_x (\frac{\proba(x)n_t}{n_t^x}) \ge 1$ (Indeed, we have by definition $\sum_x n_t^x = n_t = \sum_x \proba(x)n_t$. Thus, there exists $x \in \Xset$ such that $n_t^x \ge \proba(x)n_t$).

We can prove similarly that:
\begin{equation*}
 \potent(\stratbeq) \le \max_{x, t}(\frac{n_t^x}{\proba(x)n_t})^{\pmax - 1} \potentci(\stratbeqci)
\end{equation*}

We thus obtain that:
\begin{equation}
  \label{eq.equiv}
  \frac{1}{\max_{x, t}(\frac{n_t^x}{\proba(x)n_t})^{\pmax - 1}}\potent(\stratbeq) \le \potentci(\stratbeqci, X) \le \max_{x, t}(\frac{\proba(x)n_t}{n_t^x})^{\pmax - 1} \potent(\stratbeq)
\end{equation}

\noindent \textbf{High probability bound on $\frac{\proba(x) n_t}{n_t^x}$}

Hoeffding inequality implies that for all $t, x$, we have $P(|n_t^x - n_t\proba(x)| \ge k) \le 2 \exp(-\frac{2k^2}{n_t^2})$. We apply this with $k = n_t^{1/2 + \epsilon}$ for $0 < \epsilon < 1/2$ to obtain:
\begin{equation}
  P(|n_t^x - n_t\proba(x)| \ge n_t^{1/2 + \epsilon}) \le 2 \exp(-2n_t^{2\epsilon})
\end{equation}

We thus have $P(\cup_{t, x} (|n_t^x - n_t\proba(x)| \ge n_t^{1/2 + \epsilon})) \le |X|\sum_t 2 \exp(-2n_t^{2\epsilon})$.
We also note that if we have $|n_t^x - n_t\proba(x)| \le n_t^{1/2 + \epsilon}$, then:
\begin{equation*}
 \frac{\proba(x)n_t}{n_t\proba(x) + n_t^{1/2 + \epsilon}}\le \frac{\proba(x)n_t}{n_t^x} \le \frac{\proba(x)n_t}{n_t\proba(x) - n_t^{1/2 + \epsilon}},
\end{equation*}
which yields:
\begin{equation}
  \label{eq.proba.equiv}
  \frac{\proba(x)}{\proba(x) + n_t^{\epsilon - 1/2}}\le \frac{\proba(x)n_t}{n_t^x} \le \frac{\proba(x)}{\proba(x) - n_t^{\epsilon - 1/2}}.
\end{equation}

Combined with \eqref{eq.equiv}, this shows that with probability at least $|X|\sum_t 2 \exp(-2n_t^{2\epsilon})$, we have:
\begin{equation*}
  \frac{1}{\max_{x, t}(\frac{\proba(x) + n_t^{\epsilon - 1/2}}{\proba(x) })^{\pmax - 1}}\potent(\stratbeq) \le \potentci(\stratbeqci, X) \le \max_{x, t}(\frac{\proba(x)}{\proba(x) - n_t^{\epsilon - 1/2}})^{\pmax - 1} \potent(\stratbeq)
 \end{equation*}

We conclude this proof by computing the value of the potential of the complete information game with the linear regression game equilibrium:
\begin{align*}
\potentci(\stratbeq, X) &= \sum_x \sum_t c_t(\strateq_t(x)) n_t^x + \scalar((\sum_x xx^{\top} \sum_t \strateq_t(x) n_t^x)^{-1})\\
& = \sum_x \sum_t c_t(\strateq_t(x)) \frac{n_t^x}{\proba(x)n_t} n_t \proba(x) + \scalar((\sum_x xx^{\top} \sum_t \frac{n_t^x}{\proba(x)n_t} n_t \proba(x))^{-1}) \\
& \le \max_{x,t}(\frac{n_t^x}{\proba(x)n_t}) \sum_x \sum_c c_t(\strateq_t(x)) n_t \proba(x) + \frac{1}{(\min_{x,t}(\frac{n_t^x}{\proba(x)n_t}))^q}\scalar((\sum_x xx^{\top} \sum_t \strateq_t(x) n_t\proba(x))^{-1})\\
&\le D_n \potent(\stratbeq),
\end{align*}
where $D_n = \max(\max_{x,t}(\frac{\proba(x) + n_t^{\epsilon - 1/2}}{\proba(x)n_t}), \frac{1}{(\min_{x,t}(\frac{\proba(x)}{\proba(x) - n_t^{\epsilon - 1/2}}))^q})$.

Combined with the previous result, we obtain:
\begin{equation*}
\potentci(\stratbeq, X) \le D_n \max_{x, t}(\frac{\proba(x) + n_t^{\epsilon - 1/2}}{\proba(x) })^{\pmax - 1} \potentci(\stratbeqci, X).
\end{equation*}

We can show similarly that:
\begin{equation*}
\potent(\stratbeq_{\ci}) \le D_n^\prime \max_{x, t}(\frac{\proba(x)}{\proba(x) - n_t^{\epsilon - 1/2}})^{\pmax - 1}\potent(\stratbeq),
\end{equation*}
where $D_n^\prime = \max(\max_{x,t}(\frac{\proba(x)}{\proba(x) - n_t^{\epsilon - 1/2}}), \frac{1}{(\min_{x,t}(\frac{\proba(x)}{\proba(x) + n_t^{\epsilon - 1/2}}))^q})$.

\end{proof}

\section{Ordinary least squares}
\label{sec:ols}
%

In this section we present the model where the analyst uses the OLS estimator instead of the GLS estimator. We show that, while the use of the OLS estimator removes a strong assumption of our model (the knowledge of the variance of the data points by the analyst), the use of OLS might also highly degrade the estimation cost when agents are not identical. We show in particular that for any game using the OLS estimator, a single agent participating with prohibitively high provision cost can ruin the estimation.

Let us first define the strategic linear regression in the OLS setting. Formally, the analyst receives $\nbplay$ couples $(x_i,\hat{y}_i)$ and uses them to produce an estimate $\betabh$ that is then sent to the agents. Note that we do not assume in this setting that the analyst receives the precision associated to the data points as it is not needed for the estimation.
In what follows, we assume that the analyst computes this estimate by using \emph{ordinary least squares} (OLS) and we denote it by $\betabh_\ols$. OLS is the least squares regression which is optimal in the case of homoskedastic data. It is however sub-optimal when  data are heteroskedastic but still applicable. It is one of the most widespread estimators in general, in particular because, unlike GLS, it is easy to apply and does not require knowledge of the variance of the data points. The covariance of OLS is independent of $\hat{y}_i$ and is equal to $\p{\sum_{i \in N} x_ix_i^{\top}}^{-1}\sum_{i\in N} \frac{x_ix_i^{\top}}{\lambda_i(x_i)}\p{\sum_{i \in N} x_ix_i^{\top}}^{-1}$.
Note that this quantity is well defined only if each $\lambda_i(x_i)$ is strictly positive, unlike GLS that only requires the information matrix ($\sum_i \lambda_i(x_i) x_ix_i^{\top}$) to be invertible.

In a system where data point $\hat{y}_i$ is revealed with precision $\ell_i$, the covariance of $\betabh_\ols$ is
\begin{align*}
  \p{\sum_{i \in N} x_ix_i^{\top}}^{-1}\sum_{i\in N} \frac{x_ix_i^{\top}}{\ell_i}\p{\sum_{i \in N} x_ix_i^{\top}}^{-1}
\end{align*}
In our model, the values of $x_i$ are generated randomly according to a common underlying distribution $\mu$ on $\Xset$. Hence, we define the OLS estimation cost as
\begin{align}
  \label{eq:Uols}
  \Cestimols(\lambdab) &= \scalar\Bigg(\esp{\p{\sum_{i \in N} x_ix_i^{\top}}^{-1}\sum_{i\in N} \frac{x_ix_i^{\top}}{\lambda_i(x_i)}\p{\sum_{i \in N} x_ix_i^{\top}}^{-1}}\Bigg).
\end{align}

We denote $\Gamma_\ols$ (resp. $\Gamma_\gls$) an instance of the game where the analyst uses the OLS (resp. GLS) estimator. For a given precision profile, we define $\potent_\ols(\lambda_i,\lambdab_{-i})$ 
\begin{equation}
  \label{avg.potential.ols}
  \potent_\ols(\stratb) = \sum_{j = 1}^\nbplay
  \esp{\cost_j(\strat_j(x))} + \Cestimols(\stratb).
\end{equation}

We show that our main results still holds in this setting.

\begin{proposition}
Under Assumptions~\ref{structure.information}, \ref{scalarization.type}, and \ref{privacy.type}, a precision profile $\lambdabeq$ is a Nash equilibrium of the OLS linear regression game if and only if it minimizes $\potent_\ols$.
 Such an equilibrium exists.
It is unique if all provision cost functions $c_i$ are strictly convex.
 When there are multiple equilibria, the estimation cost $\Cestimols(\lambdabeq)$ does not depend on the equilibrium.
\end{proposition}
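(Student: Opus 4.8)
The plan is to mirror, almost verbatim, the three-step proof of Proposition~\ref{thm.uniqueness} in Appendix~\ref{proof.uniqueness}, the only genuinely new ingredient being that convexity of the estimation cost must now be read off the OLS covariance formula rather than off the operator convexity of $M\mapsto M^{-1}$. First I would note that $\Gamma_\ols$ is again a potential game with potential $\potent_\ols$: since the public-good term $\Cestimols(\lambdab)$ in \eqref{avg.potential.ols} is common to all agents, $J_\play^{\ols}(\lambda_\play,\lambdab_{-\play})-J_\play^{\ols}(\lambda_\play^\prime,\lambdab_{-\play})=\potent_\ols(\lambda_\play,\lambdab_{-\play})-\potent_\ols(\lambda_\play^\prime,\lambdab_{-\play})$ for every $\play$, $\lambda_\play$, $\lambda_\play^\prime$ and $\lambdab_{-\play}$; hence a profile is a Nash equilibrium iff each of its components is a best response against the others, which---once $\potent_\ols$ has been shown to be convex---is equivalent to the profile minimizing $\potent_\ols$, exactly as in Proposition~\ref{thm.uniqueness}.

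For the convexity, I would fix a realization $(x_1,\dots,x_\nbplay)$ for which $A:=\sum_\play x_\play x_\play^\top$ is invertible (assumed, as is implicit when OLS is used, to hold almost surely). The realized OLS covariance is $\sum_\play\lambda_\play(x_\play)^{-1}A^{-1}x_\play x_\play^\top A^{-1}$, a sum of terms of the form $\lambda_\play(x_\play)^{-1}B$ with $B\succeq0$ fixed; since $\ell\mapsto1/\ell$ is convex and $\lambda_\play(x_\play)$ is a coordinate of $\lambdab$, each term---hence the sum, and hence its expectation over the attributes---is convex in $\lambdab$ for the positive-semidefinite order. Composing with $F$ (convex and nondecreasing in the PSD order, Assumption~\ref{scalarization.type}) makes $\Cestimols$ convex, and adding the separable convex term $\sum_\play\esp{\cost_\play(\lambda_\play(x_\play))}$ (Assumption~\ref{privacy.type}) makes $\potent_\ols$ convex. (If $F\equiv0$ the statement is trivial; otherwise homogeneity and monotonicity force $F(M)>0$ for all $M\succ0$, which I use below.) Next I would establish coercivity on the open orthant of positive precisions: if some $\lambda_\play(x)\to\infty$ then the provision term is $\ge\mu(x)\cost_\play(\lambda_\play(x))\to\infty$; if some $\lambda_\play(x)\downarrow0$ with $x\ne0$ then on the positive-probability event $\{x_\play=x\}$ the realized covariance dominates $\lambda_\play(x)^{-1}A^{-1}xx^\top A^{-1}$, so $\Cestimols(\lambdab)\to\infty$ by monotonicity and homogeneity of $F$ (the coordinates $\lambda_\play(0)$, if $0\in\Xset$, do not enter the estimation cost and equal $0$ at a minimizer). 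The sublevel sets of $\potent_\ols$ are therefore compact and interior, so a minimizer exists, and every Nash equilibrium has all precisions positive---so the OLS covariance is well defined there; if all $\cost_\play$ are strictly convex then $\potent_\ols$ is strictly convex and the minimizer is unique.

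For the claim that all equilibria induce the same estimation cost, I would run the midpoint argument of Step~3 of Appendix~\ref{proof.uniqueness}. Write the argument of $F$ as $N(\lambdab)=\esp{A^{-1}\big(\sum_\play\lambda_\play(x_\play)^{-1}x_\play x_\play^\top\big)A^{-1}}=\sum_{\play,x}\lambda_\play(x)^{-1}K_{\play,x}$, where $K_{\play,x}\succeq0$ and $K_{\play,x}\ne0$ whenever $x\ne0$; the strict convexity of $\ell\mapsto1/\ell$ then makes $N$ strictly matrix-convex along any direction that changes some $\lambda_\play(x)$ with $x\ne0$, and since $F$ is strictly monotone for the PSD order for all the scalarizations of interest (trace, squared Frobenius norm, weighted mean-squared error; cf.\ the erratum in Appendix~\ref{sec:notation}), $\Cestimols$ is strictly convex in those directions. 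Consequently, if two equilibria $\stratbeq\ne\stratbeqt$ satisfied $N(\stratbeq)\ne N(\stratbeqt)$, their midpoint would strictly decrease $\Cestimols$ while not increasing the average provision cost, hence would strictly decrease $\potent_\ols$ below its minimum---a contradiction; so $N(\stratbeq)=N(\stratbeqt)$ and the two equilibria yield the same estimation cost.

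The hard part is Step~1 and the strictness it must supply in Step~3: for GLS this came for free from the operator convexity of $M\mapsto M^{-1}$ composed with a linear map of $\lambdab$, whereas for OLS one must instead exploit the scalar convexity of $\ell\mapsto1/\ell$ nested inside the random congruence $A^{-1}(\cdot)A^{-1}$; and, an extra complication absent from the GLS case, one must check that $\potent_\ols$ is finite and well defined at candidate equilibria (no vanishing precision, and $\sum_\play x_\play x_\play^\top$ invertible almost surely). Everything else is a verbatim adaptation of Appendix~\ref{proof.uniqueness}.
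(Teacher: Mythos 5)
Your proposal is correct and follows essentially the same route as the paper, which itself only states that the proof is a ``trivial adaptation'' of the proof of Proposition~\ref{thm.uniqueness}; your three steps (the potential-game identity, convexity of $\Cestimols$ obtained from the scalar convexity of $\ell\mapsto 1/\ell$ inside the fixed congruence rather than from operator convexity of the inverse, and the midpoint argument for equality of estimation costs) supply exactly the details that this adaptation requires. The only caveats---your appeal to strict monotonicity of $F$ beyond Assumption~\ref{scalarization.type} and to almost-sure invertibility of $\sum_{i} x_i x_i^{\top}$---mirror points the paper itself leaves implicit (the erratum referenced in its own Step~3 and the very definition of $\Cestimols$), so they do not constitute a divergence from the paper's argument.
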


The proof of this proposition is a trivial adaptation of Proof~\ref{proof.uniqueness}. The game $\Gamma_\ols$ thus has the same basic properties as $\Gamma_\gls$ and we can now state our main result in this new model:

\begin{proposition}
 Let $\Gamma_\ols$ be a game satisfying the Assumptions of Theorem~\ref{thm.conv}. Then, with the same constants $\constmin, \constmax > 0$ as Theorem~\ref{thm.conv} that do not depend on $\nbplay$, we have that:
  \begin{equation}
    \label{eq.conv.ols}
  \constmin \nbplay^{-\qhomo\frac{\pmin - 1}{\pmin + \qhomo}-\alpha} \le \Cestimols(\lambdabeq)  \le  \constmax\nbplay^{-\qhomo\frac{\pmin - 1}{\pmin + \qhomo}},
\end{equation}
where $\alpha=\qhomo\frac{(\pmax-\pmin)(q+1)}{\pmax(q+\pmin)}$.
\end{proposition}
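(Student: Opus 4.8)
The plan is to recycle the proof of Theorem~\ref{thm.conv} almost verbatim, since that argument uses only two structural properties of the estimation cost: that it is convex and $\qhomo$-homogeneous in the precision profile, and that it dominates a suitable scaling of an optimal-design value from below. So the first step is to check that $\Cestimols$ has these properties. For a fixed realization $X=(x_i)_{i\in N}$, the map sending $\lambdab$ to the OLS covariance $(\sum_ix_ix_i^\top)^{-1}\big(\sum_ix_ix_i^\top/\strat_i(x_i)\big)(\sum_ix_ix_i^\top)^{-1}$ is matrix-convex: each $\ell\mapsto1/\ell$ is convex, so $\strat_i(x_i)\mapsto x_ix_i^\top/\strat_i(x_i)$ is matrix-convex, the sum over $i$ remains matrix-convex, and the congruence $N\mapsto(\sum_ix_ix_i^\top)^{-1}N(\sum_ix_ix_i^\top)^{-1}$ is linear and monotone for the Löwner order. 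Taking the expectation over $X$ preserves matrix-convexity, and composing with the convex, Löwner-monotone scalarization $\scalar$ yields convexity of $\Cestimols$; homogeneity is immediate, since rescaling $\lambdab$ by $a>0$ rescales the OLS covariance by $1/a$ and $\scalar$ is homogeneous of degree $\qhomo$, whence $\Cestimols(a\lambdab)=a^{-\qhomo}\Cestimols(\lambdab)$. These are exactly the hypotheses of the preceding proposition (the OLS analogue of Proposition~\ref{thm.uniqueness}), so $\stratbeq$ is the (essentially unique) minimizer of $\potent_\ols$ and all of the ``minimizer of the potential'' machinery carries over.

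For the \emph{upper bound} I would evaluate $\potent_\ols$ at the same constant profile as in Theorem~\ref{thm.conv}, namely $\strat_i(x)=\nbplay^{-(\qhomo+1)/(\pmin+\qhomo)}$ for all $i,x$. The provision-cost term is literally the same expression as in the GLS proof, so it is bounded by $\nbplay^{-\qhomo(\pmin-1)/(\pmin+\qhomo)}c_{\max}(1)$ using only $a^{\pmin}c_i(\ell)\le c_i(a\ell)$ and $c_i\le c_{\max}$. At a constant profile the OLS covariance reduces to $\nbplay^{(\qhomo+1)/(\pmin+\qhomo)}(\sum_ix_ix_i^\top)^{-1}$, so by homogeneity the estimation term equals $\nbplay^{-\qhomo(\pmin-1)/(\pmin+\qhomo)}\,\scalar\big(\esp{(\tfrac1\nbplay\sum_ix_ix_i^\top)^{-1}}\big)$; since the empirical second-moment matrix $\tfrac1\nbplay\sum_ix_ix_i^\top$ concentrates around $\esp{xx^\top}\succ0$, the scalarization factor is bounded uniformly in $\nbplay$, giving $\Cestimols(\stratbeq)\le\potent_\ols(\stratbeq)\le\constmax\,\nbplay^{-\qhomo(\pmin-1)/(\pmin+\qhomo)}$.

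For the \emph{lower bound}, the one genuinely new ingredient is Aitken's/the Gauss--Markov theorem: on the event $\sum_ix_ix_i^\top\succ0$ the OLS covariance dominates the GLS covariance $(\sum_i\strat_i(x_i)x_ix_i^\top)^{-1}$ in the Löwner order, so by monotonicity of $\scalar$ and then Jensen's inequality for the operator-convex map $A\mapsto A^{-1}$ one gets $\Cestimols(\lambdab)\ge\scalar\big(\esp{(\sum_i\strat_i(x_i)x_ix_i^\top)^{-1}}\big)\ge\Cpublic{\lambdab}$ for every profile $\lambdab$. From here the lower-bound half of the proof of Theorem~\ref{thm.conv} transfers line by line with $\potent_\ols$ replacing $\potent$: the upper bound on $\potent_\ols(\stratbeq)$ just obtained caps the total provision cost, which through $c_i(a\ell)\le a^{\pmax}c_i(\ell)$, Jensen for $t\mapsto t^{1/\pmax}$, $c_i\ge c_{\min}$, and the $\nbplay$-uniform bound $\strateq_i(x)\le\ellmax$ (itself a consequence of $\potent_\ols(\stratbeq)$ being bounded in $\nbplay$) bounds the total precision $\sum_i\esp{\strateq_i(x)}$ by a constant times $\nbplay^{(\pmin-1)/(\pmin+\qhomo)+\alpha/\qhomo}$; combining this with $\Cestimols(\stratbeq)\ge\Cpublic{\stratbeq}\ge\big(\sum_i\esp{\strateq_i(x)}\big)^{-\qhomo}\scalar\big((\sum_{x\in\Xset}xx^\top\optdes(x))^{-1}\big)$ produces the left-hand side $\constmin\,\nbplay^{-\qhomo(\pmin-1)/(\pmin+\qhomo)-\alpha}$.

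The step I expect to be the main obstacle is the single new estimate, namely controlling $\scalar\big(\esp{(\tfrac1\nbplay\sum_ix_ix_i^\top)^{-1}}\big)$ in the upper bound: unlike GLS, OLS needs $\sum_ix_ix_i^\top$ to be invertible, yet for finite $\nbplay$ the i.i.d.\ attributes fail to span $\R^{\ndim}$ with small (exponentially decaying in $\nbplay$) but positive probability. I would handle this by splitting on the concentration event $\tfrac1\nbplay\sum_ix_ix_i^\top\succeq\tfrac12\esp{xx^\top}$ and using, on its complement, the minimum-norm/pseudo-inverse convention for $\betabh_\ols$ together with the fact that the nonzero eigenvalues of $\sum_ix_ix_i^\top$ are bounded below by a constant depending only on the finite set $\Xset$, so the contribution of the bad event is $O(\nbplay e^{-c\nbplay})=o(1)$ and is absorbed into the constants. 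Once this point is dispatched, every remaining step is a transcription of the proof of Theorem~\ref{thm.conv}.
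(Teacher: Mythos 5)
Your proposal is correct in substance, and on the lower bound it is exactly the paper's argument: the paper's entire lower-bound proof is the observation that for every profile $\lambdab$ one has $\esp{(\sum_{i} x_ix_i^{\top})^{-1}\sum_{i} x_ix_i^{\top}/\lambda_i(x_i)\,(\sum_{i} x_ix_i^{\top})^{-1}}\succeq(\esp{\sum_{i}\lambda_i(x_i)x_ix_i^{\top}})^{-1}$ (attributed to Aitken, and really Aitken conditional on $X$ plus Jensen for the matrix inverse, exactly as you decompose it), after which the lower-bound half of the proof of Theorem~\ref{thm.conv} is rerun for the OLS equilibrium using the upper bound on $\potent_\ols(\lambdabeq)$ --- you merely spell out what the paper leaves implicit. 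Where you genuinely diverge is the upper bound. The paper evaluates $\potent_\ols$ at the same constant profile and simply notes that this profile is homoskedastic, so OLS and GLS coincide there and the Theorem~\ref{thm.conv} algebra is reused verbatim, which is how it gets \emph{literally the same} constants $\constmin,\constmax$ asserted in the statement. You instead keep the expectation outside the inverse, reach $\scalar\big(\esp{\big(\frac{1}{\nbplay}\sum_{i} x_ix_i^{\top}\big)^{-1}}\big)$, and control it by matrix concentration plus a pseudo-inverse convention on the (positive-probability, exponentially rare) singular-design event. This is a legitimate route, and it honestly confronts a point the paper's one-liner glosses over: with the expectation placed outside the inverse in \eqref{eq:Uols}, the homoskedastic identification equates the \emph{estimators} but not the two estimation costs as defined, and for $d\ge 2$ the empirical design is singular with positive probability. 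The price is that your upper-bound constant is $\scalar$ of an expected inverse rather than the inverse of an expectation, hence larger than the $\constmax$ of Theorem~\ref{thm.conv}, and this discrepancy propagates into the lower-bound constant through the total-precision bound; so your argument yields both bounds with $\nbplay$-independent constants and the correct rates, but not literally ``the same constants as Theorem~\ref{thm.conv}'' claimed in the statement. To get the statement verbatim, follow the paper's shortcut (homoskedasticity at the benchmark profile, so the GLS computation applies unchanged); otherwise keep your route but restate the constants accordingly.
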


\begin{proof}
 \noindent \textbf{Upper bound}

 The proof of the upper bound is the same as in the proof of Theorem~\ref{thm.conv}. Indeed,
 we compute the value of the potential function for a particular constant strategy in which all players use the precision $\lambda(x)=\nbplay^{-\frac{\qhomo + 1}{\pmin + \qhomo}}$ for all values of $x\in\Xset$. It is then sufficient to observe that for such a strategy, we have homoskedasticity of the data points. Thus, the GLS estimator and the OLS estimator are the same and the algebra of the proof can trivially be applied.

\noindent \textbf{Lower bound}

It is sufficient to observe that for all $\lambdab$, we have $\esp{\p{\sum_{i \in N} x_ix_i^{\top}}^{-1}\sum_{i\in N} \frac{x_ix_i^{\top}}{\lambda_i(x_i)}\p{\sum_{i \in N} x_ix_i^{\top}}^{-1}} \succeq \p{\esp{\sum_{i\in N} \lambda_i(x_i) x_ix_i^{\top}}}^{-1}$ by Aitken's theorem of optimality of GLS.

\end{proof}

\subsection*{Differences between the asymptotic behavior of GLS and OLS}

In this section, we show that, while our main result holds when the analyst uses the OLS estimator, $\Gamma_\gls$ and $\Gamma_\ols$ behave fundamentally differently when only subsets of agents satisfy our non-trivial assumptions.

\begin{proposition}
  \label{thm.conv.sub}
  Assume that Assumptions~\ref{structure.information}, \ref{scalarization.type} and \ref{privacy.type} hold. Assume that for all $i \in N$, we have $c_i(0) = 0$. Additionally, assume that there exist $\pmin\ge1$ a function $c_{\max}:\R_+\to\R_+$ and $\subplayset \subseteq N$ such that for all $i\in \subplayset$ and all $a>1, \ell>0$: $a^{\pmin}c_i(\ell) \le c_i(a\ell)$ and $ c_i(\ell)\le c_{\max}(\ell)<\infty$. Then there exists a constant $\constmax > 0$ that does not depend on $|\subplayset|$ and such that:
  \begin{equation}
    \label{eq.conv.sub}
\Cpublic{\lambdabeq} \le  \constmax |\subplayset|^{-\qhomo\frac{\pmin - 1}{\pmin + \qhomo}},
\end{equation}
\end{proposition}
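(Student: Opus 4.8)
The plan is to reuse the upper-bound argument from the proof of Theorem~\ref{thm.conv}, but to place all the ``useful'' precision on the agents of $\subplayset$ and let every agent outside $\subplayset$ contribute nothing; this is harmless precisely because $c_i(0)=0$, so an agent outside $\subplayset$ can be assigned zero precision at no provision cost. Concretely, I would set $m=|\subplayset|$ (assuming $\subplayset\neq\emptyset$) and consider the precision profile $\lambdab$ given by $\strat_i(x)=m^{-\frac{\qhomo+1}{\pmin+\qhomo}}$ for every $x\in\Xset$ when $i\in\subplayset$, and $\strat_i\equiv 0$ when $i\notin\subplayset$. Since $\lambdabeq$ minimizes $\potent$ (Proposition~\ref{thm.uniqueness}) and the provision costs are non-negative (Assumption~\ref{privacy.type}), we have $\Cpublic{\lambdabeq}\le\potent(\lambdabeq)\le\potent(\lambdab)$, so the task reduces to bounding $\potent(\lambdab)$.

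For the provision-cost part, every $i\notin\subplayset$ contributes $\esp{c_i(0)}=0$, so only the $m$ agents of $\subplayset$ matter. For such an $i$, applying the hypothesis $a^{\pmin}c_i(\ell)\le c_i(a\ell)$ with $a=m^{\frac{\qhomo+1}{\pmin+\qhomo}}$ and $\ell=m^{-\frac{\qhomo+1}{\pmin+\qhomo}}$ gives $c_i\big(m^{-\frac{\qhomo+1}{\pmin+\qhomo}}\big)\le m^{-\pmin\frac{\qhomo+1}{\pmin+\qhomo}}c_i(1)\le m^{-\pmin\frac{\qhomo+1}{\pmin+\qhomo}}c_{\max}(1)$; summing over the $m$ agents and simplifying the exponent via $1-\pmin\frac{\qhomo+1}{\pmin+\qhomo}=-\qhomo\frac{\pmin-1}{\pmin+\qhomo}$ bounds the total provision cost by $m^{-\qhomo\frac{\pmin-1}{\pmin+\qhomo}}c_{\max}(1)$. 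For the estimation-cost part, the expected information matrix at $\lambdab$ is $\esp{\sum_i\strat_i(x_i)x_ix_i^{\top}}=m^{1-\frac{\qhomo+1}{\pmin+\qhomo}}\esp{xx^{\top}}=m^{\frac{\pmin-1}{\pmin+\qhomo}}\esp{xx^{\top}}$, which is invertible because $\esp{xx^{\top}}\succ 0$ (Assumption~\ref{structure.information}), and then homogeneity of degree $\qhomo$ of $\scalar$ (Assumption~\ref{scalarization.type}) yields $\Cpublic{\lambdab}=m^{-\qhomo\frac{\pmin-1}{\pmin+\qhomo}}\scalar\big((\esp{xx^{\top}})^{-1}\big)$. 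Summing, $\potent(\lambdab)\le m^{-\qhomo\frac{\pmin-1}{\pmin+\qhomo}}\big(c_{\max}(1)+\scalar((\esp{xx^{\top}})^{-1})\big)$, so \eqref{eq.conv.sub} holds with $\constmax=c_{\max}(1)+\scalar((\esp{xx^{\top}})^{-1})$, which depends only on $c_{\max}$, $\scalar$ and $\proba$ and not on $|\subplayset|$ (nor on $\nbplay$).

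I do not expect a real obstacle here: the only things to check carefully are that the test profile is admissible (it is, since $0\in\R_+$), that the expected information matrix stays invertible (it does, being a positive multiple of $\esp{xx^{\top}}\succ 0$), and the elementary exponent algebra. The conceptual point worth stressing once the proof is written is robustness: the decay rate is governed by $|\subplayset|$ rather than by $\nbplay$, so even a large number of badly-behaved agents---including agents with linear or arbitrarily steep provision costs---cannot worsen the GLS estimation cost by more than the constant factor $\constmax$. This is exactly the contrast with the OLS estimator developed in the remainder of the section, where a single agent with a prohibitive provision cost can make the estimation cost arbitrarily large.
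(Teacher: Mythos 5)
Your proof is correct and is essentially identical to the paper's: the paper likewise evaluates the potential at the profile assigning precision $|\subplayset|^{-\frac{\qhomo+1}{\pmin+\qhomo}}$ to agents in $\subplayset$ and $0$ elsewhere, and then reuses the upper-bound algebra from the proof of Theorem~\ref{thm.conv}, yielding the same constant $\constmax = c_{\max}(1)+\scalar\big((\esp{xx^{\top}})^{-1}\big)$.
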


 \begin{proof}

We define the particular constant strategy $$\lambda_i(x)= \left\{
    \begin{array}{ll}
        |\subplayset|^{-\frac{\qhomo + 1}{\pmin + \qhomo}} & \mbox{if } i \in \subplayset \\
        0 & \mbox{Otherwise.}
    \end{array}
    \right.$$
    
The algebra to obtain the bound is then exactly the same as in Section~\ref{proof.conv}.
 
 \end{proof}

This proposition states that for any subset of agents, the convergence rate of the estimation cost is at least as good as if only those agents participated. For example, if half a population suffers from linear provision cost $c_i(\lambda) = \lambda$ while the other half of the population has highly convex provision costs $c_i(\lambda) = \lambda^p$, the estimation cost will converge to $0$ with rate at least $n^{-\qhomo\frac{p - 1}{p + \qhomo}}$. This is significant as we have previously proved that if only agents with linear provision costs participate, GLS is not consistent and the estimation cost does not go to $0$. This property is tightly linked to the GLS estimator. Indeed, GLS weights the data points according to their precision and low precision data points do not hinder the estimation. Formally, for any $\lambdab, \lambda_{n + 1}$, we have $\sum_i \lambda_{i = 1}^{n + 1}(x_i) x_ix_i^{\top} \succeq \sum_{i = 1}^{n} \lambda_i(x_i) x_ix_i^{\top}$ thus adding a data point can only improve the information matrix of the estimator. This is no longer true when using the OLS estimator as it gives the same weight to widely inaccurate data points as to very precise data points.

 We show this difference on an example. We consider an OLS regression game where $n$ agents are willing to give precise data (they have low provision cost) while one agent suffers from prohibitively high provision cost. Formally, let us consider $\Gamma_\ols$ the game where $\Xset = \{1\}$, $n + 1$ agents participate, $c_i(\lambda) = \lambda^p$ for all $i$ in $\{1, \dots, n\}$ and $c_{n + 1}(\lambda) = (n + 1)^2\lambda$. In the following game, we also consider the scalariation $\scalar(\cdot)$ to be the trace which in this case is the identity function.  We have in this game the following potential:
 \begin{equation}
  \potent_\ols(\lambdab) = \sum_{i = 1}^n \lambda_i^p + (n + 1)^2 \lambda_{n + 1} + \frac{1}{(n + 1)^2}\sum_{i = 1}^{n + 1} \frac{1}{\lambda_i}
 \end{equation}
 
 It is then easy to show that at equilibrium, we have $\lambda^*_i = (n + 1)^{-2/(p + 1)}$ for all $i$ in $\{1, \dots, N\}$ and $\lambda^*_{n + 1} = (n + 1)^{-2}$ . This implies that the equilibrium achieves the following estimation cost:
 \begin{equation}
  \label{estim.cost.example}
  \Cestimols(\lambdabeq) = \frac{1}{(n + 1)^2} n (n + 1)^{2/(p + 1)} + 1
 \end{equation}

This estimation cost does not converge to $0$ when $n + 1$ grows large. Also note that even if $p$ grows large meaning that $n$ of the $n + 1$ agents almost do not suffer any cost for providing data, the estimation cost still does not converge to $0$. In contrast, 
the cost functions we defined satisfy the assumptions of Proposition~\ref{thm.conv.sub} meaning that if the analyst used the GLS estimator, they would obtain a consistent estimator with convergence rate at least $n^{-\qhomo\frac{p - 1}{p + \qhomo}}$. Alternatively, if the analyst refused the participation of agent $n + 1$, they would also obtain a consistent estimator. This implies that designing a mechanism to control participants in the OLS model could greatly improve the estimation cost at equilibrium in some cases. This remains an open problem.

\section{Extension to joint distributions}
\label{sec:joint}
In this section, we show how our main result can be extended to a setting where the data points $x_i$ of agents are not independent and identically distributed but are distributed according to a joint distribution $\muj$. We make the following assumption on this joint distribution to ensure the non-triviality of the game:
\begin{assumption}
  \label{structure.information.2}
  The set $\Xset$ is finite and $\espj{\sum_{i \in N}x_ix_i^{\top}}$ is positive definite.
\end{assumption}
For the rest of this section, we omit the subscript denoting that the expected value is taken with regard to the distribution $\muj$.

Having a joint distribution does not change the basic structure of the game. The game is still a potential game with potential
\begin{equation*}
  \potent(\stratb) = \esp{\sum_{j = 1}^\nbplay
  \cost_j(\strat_j(x_j))} + \Cpublic{\stratb}.
\end{equation*}
Note that we still assume that each agent strategy is a function $\lambda_i:\Xset\to \R_+$ for ease of notation. We do not assume, however, that each agent holds each vector with non-zero probability. This implies that if an equilibrium exists, there exists an infinite number of equilibrium as agents may freely choose the precision of the data points that hold with probability zero (without changing their payoffs). As these precision are a simple modeling artifact without any impact on payoffs, we set them to $0$ by convention.

Also note that there may now exist Nash equilibria $\lambdabeq$ for which $\Cpublic{\lambdabeq}=\infty$. For instance, if $\ndim\ge2$ and the joint distribution is such that $\muj(\xb) = 1$ for some $\xb = (x_1, \dots, x_n)$, then $\lambdabeq = 0$ is a Nash equilibrium. Indeed, in that case, no agent has an incentive to deviate since a single agent deviation still yields a non-invertible information matrix (recall that the information matrix is $\esp{\sum_{i\in N} \lambda_i(x_i) x_ix_i^{\top}}$ and that the covariance is the inverse of this matrix). More generally, any profile $\lambdab$ such that the information matrix is non-invertible and remains non-invertible under unilateral deviations is an equilibrium. 
Following \cite{Gast20a}, we call Nash equilibria at which the estimation cost is infinite ``trivial equilibria.'' These are not our focus as they can be avoided with model adjustments such as having $d$ non-strategic agents with data points spanning $\R^d$ guaranteeing a finite covariance.

We claim that Proposition~\ref{thm.uniqueness} (which states that the game has at least one equilibrium, and that if there are multiple equilibria, they have the same estimation cost) still holds for non-trivial equilibria under the extending model where the data points $x_i$ are distributed according to a joint distribution $\muj$ satisfying Assumption~\ref{structure.information.2}; with the following adapted proof. Note that the first step of this version of the proof is inspired from \cite{Gast20a} to handle trivial equilibrium.
\begin{proof}

\textbf{Step 1: The potential function is convex.} The potential function
$\potent(\lambdab)= \esp{\sum_{j = 1}^\nbplay c_j(\strat_j(x_j))} + \Cpublic{\stratb}$ takes values in the
extended positive real numbers line
$\bar{\R}_+=\R_+\cup\{+\infty\}$.

Recall that $\Cpublic{\stratb} = \scalar\Bigg(\p{\esp{\sum_{i\in N} \lambda_i(x_i) x_ix_i^{\top}}}^{-1}\Bigg)$. We denote $V(\lambdab) = \esp{\sum_{i\in N} \lambda_i(x_i) x_ix_i^{\top}}^{-1}$ and $M(\lambdab) = \esp{\sum_{i\in N} \lambda_i(x_i) x_ix_i^{\top}}$. We have that $V(\lambdab)$ is
strictly convex and goes to infinity when $M(\lambdab)$
goes to a non-invertible matrix (i.e., the largest eigenvalue of $V$ goes to infinity for any sequence $\lambdab_n$ that converges to a $\lambdab$ such that $M(\lambdab)$ is non-invertible). As $F$ is convex and increasing, this
shows that $\Cpublic{\lambdab}$ is strictly convex and goes to
$+\infty$ when $M(\lambdab)$ goes to a non-invertible matrix, which then 
implies that $\Cpublic{\lambdab}:\R_+^n\to\bar{\R}_+$ is
continuous. As the functions $c_i$ are convex, we conclude that the
potential function $\potent$ is strictly convex and continuous on
$\bar{\R}_+$.

\textbf{Step 2: The potential admits a minimum.} We first consider the potential evaluated at an arbitrary value and show that this implies boundedness of agents precision at equilibrium. Let $\potent(\mathbf{1}) = \esp{\sum_i c_i(1)} + \scalar((\esp{\sum_i x_ix_i^T})^{-1})$. By Assumption~\ref{privacy.type}, $\lim_{\ell \to +\infty}\cost_i(\ell) = +\infty$. For all $x \in \Xset$, we denote $\mu_i(x)$ the the probability that agent $i$ has data point $x$ when data points are generated with the joint distribution $\muj$. If $\mu_i(x) = 0$, then the value of $\strat_i(x)$ does not change the potential and we can set it to $0$. Otherwise, $\lim_{\ell \to +\infty}\cost_i(\ell)\mu(x) = +\infty$. Hence, there exists $\ellmax$ such that for all $i$ and all $x$, $c_i(\ellmax)\mu(x) > \potent(\mathbf{1})$. This shows that if $\lambdab$ is a precision profile such that $\lambda_i(x)>\ellmax$ for some $i$ and $x$, then $\potent(\lambdab)\ge\potent(\mathbf{1})$.

Let $B$ be the subset of $\lambdab$ such that
$\potent(\lambdab)\le\potent(\mathbf{1})$. By continuity and
convexity of $\potent$, $B$ is a non-empty convex and compact subset of
$[0,\ellmax]^n$ on which $\potent(\lambdab)<\infty$. This implies that there $\potent$ admits a minimum and that
all global minimum of $\potent$ are attained in $B$.

\textbf{If different non-trivial equilibria exist, they have the same estimation cost.} This step is strictly the same as the proof found in Section~\ref{proof.uniqueness}.
\end{proof}

We are now ready to state our main result adapted to this setting. In the following theorem, $\lambdabeq$ denotes any non-trivial equilibrium.

\begin{theorem}
  \label{thm.conv.2}
  Assume that Assumptions~\ref{scalarization.type}, \ref{privacy.type}, and \ref{structure.information.2} hold. Additionally, assume that there exist $\pmin,\pmax\ge1$ and functions $c_{\min},c_{\max}:\R_+\to\R_+$ such that for all $i\in N$ and all $a>1, \ell>0$: $a^{\pmin}c_i(\ell) \le c_i(a\ell) \le a^{\pmax}c_i(\ell)$ and $0<c_{\min}(\ell)\le c_i(\ell)\le c_{\max}(\ell)<\infty$. Then there exist constants $\constmin^\prime, \constmax^\prime > 0$ that depend on $\nbplay$ only through $\espj{\frac{1}{n}\sum_{i \in N}x_ix_i^{\top}}$ and such that:
  \begin{equation}
    \label{eq.conv.2}
  \constmin^\prime \nbplay^{-\qhomo\frac{\pmin - 1}{\pmin + \qhomo}-\alpha} \le \Cpublic{\lambdabeq} \le  \constmax^\prime\nbplay^{-\qhomo\frac{\pmin - 1}{\pmin + \qhomo}},
\end{equation}
where $\alpha=\qhomo\frac{(\pmax-\pmin)(q+1)}{\pmax(q+\pmin)}.$
\end{theorem}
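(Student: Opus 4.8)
The plan is to re-run the argument of Section~\ref{proof.conv} essentially verbatim, with two bookkeeping changes. First, under the joint law $\muj$ the expected information matrix reads $\espj{\sum_{i\in N}\strat_i(x_i)x_ix_i^\top}=\sum_{x\in\Xset}xx^\top\sum_{i\in N}\strat_i(x)\mu_i(x)$, where $\mu_i$ is the marginal law of $x_i$. Second, the role played by $\esp{xx^\top}$ in the i.i.d. proof is now played by $\Vm_n:=\espj{\tfrac1\nbplay\sum_{i\in N}x_ix_i^\top}$, which is positive definite by Assumption~\ref{structure.information.2}; every constant produced will depend on $\nbplay$ only through $\Vm_n$ (and on the optimal design $\optdes$, which is a function of $\Xset$ alone). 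Throughout I use the adapted version of Proposition~\ref{thm.uniqueness} established just above the theorem: every non-trivial equilibrium is a global minimiser of the convex, continuous potential $\potent$, and all non-trivial equilibria share the same estimation cost.

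For the upper bound I would evaluate $\potent$ at the constant profile $\strat_i(x)=\nbplay^{-\frac{\qhomo+1}{\pmin+\qhomo}}$ for all $i$ and $x$. The provision term is $\sum_{i}c_i(\nbplay^{-\frac{\qhomo+1}{\pmin+\qhomo}})\le \nbplay^{-\qhomo\frac{\pmin-1}{\pmin+\qhomo}}c_{\max}(1)$ exactly as in \eqref{eq:line2}--\eqref{eq:line4}, while the estimation term equals $\scalar\bigl((\nbplay^{\frac{\pmin-1}{\pmin+\qhomo}}\Vm_n)^{-1}\bigr)=\nbplay^{-\qhomo\frac{\pmin-1}{\pmin+\qhomo}}\scalar(\Vm_n^{-1})$ by homogeneity. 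Since $\lambdabeq$ minimises $\potent$ and $\Cpublic{\lambdabeq}\le\potent(\lambdabeq)$, the right inequality of \eqref{eq.conv.2} holds with $\constmax^\prime=c_{\max}(1)+\scalar(\Vm_n^{-1})$.

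For the lower bound I would first convert this into a bound on the total expected precision $b:=\espj{\sum_{i\in N}\strateq_i(x_i)}=\sum_{x\in\Xset}\sum_{i\in N}\strateq_i(x)\mu_i(x)$. Here is the only genuinely new point: the i.i.d. proof bounds each $\strateq_i(x)$ by a constant $\ellmax$ by using that $\mu$ has full support, which is no longer available, since a feature $x$ may now be held by agent $i$ with arbitrarily small probability. Instead I would use only the growth hypothesis: for every $\ell>0$ one has $\ell\le (c_i(\ell)/c_{\min}(1))^{1/\pmax}+c_i(\ell)/c_{\min}(1)$, the first term covering $\ell\le1$ (where $c_i(\ell)\ge\ell^{\pmax}c_i(1)\ge\ell^{\pmax}c_{\min}(1)$) and the second covering $\ell>1$ (where $c_i(\ell)\ge\ell^{\pmin}c_i(1)\ge\ell\,c_{\min}(1)$). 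Averaging over $i$ with weights $1/\nbplay$, applying Jensen's inequality to the concave map $t\mapsto t^{1/\pmax}$ on the first term, and using $\tfrac1\nbplay\potent(\lambdabeq)\le\constmax^\prime\nbplay^{-1-\qhomo\frac{\pmin-1}{\pmin+\qhomo}}$ together with $\Cpublic{\cdot}\ge0$, the exponent computation of Section~\ref{proof.conv} gives $b\le C\,\nbplay^{\frac{\pmin-1}{\pmin+\qhomo}+\alpha/\qhomo}$, with $\alpha=\qhomo\frac{(\pmax-\pmin)(\qhomo+1)}{\pmax(\qhomo+\pmin)}$ and $C$ depending on $\nbplay$ only through $\Vm_n$.

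Finally, set $\nu_{\lambdabeq}(x)=\sum_{i\in N}\strateq_i(x)\mu_i(x)$, so that $\sum_{x\in\Xset}\nu_{\lambdabeq}(x)=b$, which is strictly positive for a non-trivial equilibrium (otherwise the information matrix is singular and $\Cpublic{\lambdabeq}=\infty$). Then $\nu_{\lambdabeq}/b$ is a probability measure on $\Xset$, so by definition of the optimal design $\optdes$ for the scalarization $\scalar$ and by homogeneity of $\scalar$ we get $\Cpublic{\lambdabeq}=\scalar\bigl((\sum_{x}xx^\top\nu_{\lambdabeq}(x))^{-1}\bigr)\ge b^{-\qhomo}\scalar\bigl((\sum_{x}xx^\top\optdes(x))^{-1}\bigr)$, exactly as in \eqref{eq:optdes_proof}; combining with the bound on $b$ yields the left inequality of \eqref{eq.conv.2} with $\constmin^\prime=C^{-\qhomo}\scalar\bigl((\sum_{x}xx^\top\optdes(x))^{-1}\bigr)$, which again depends on $\nbplay$ only through $\Vm_n$. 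I expect this replacement of the $\ellmax$/full-support step to be the only real obstacle: one must ensure that the bound on $b$ uses constants that do not covertly depend on the $\nbplay$-dependent marginals $\mu_i$, which is precisely why the direct $c_{\min}(1)$-based estimate above is preferable to literally copying the bounded-precision argument of the i.i.d. case.
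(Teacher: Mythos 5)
Your proof is correct and follows the same overall skeleton as the paper's: for the upper bound, evaluate the potential at the constant profile $\nbplay^{-\frac{\qhomo+1}{\pmin+\qhomo}}$ with $\esp{xx^{\top}}$ replaced by $\espj{\frac{1}{\nbplay}\sum_{i\in N}x_ix_i^{\top}}$, and for the lower bound, convert that upper bound into a bound on the total expected precision $b$ and then apply the optimal-design comparison together with the homogeneity of $\scalar$, exactly as in \eqref{eq:optdes_proof}. Where you genuinely depart from the paper is the intermediate step bounding $b$: the paper's appendix simply says to ``plug the new upper bound'' into the i.i.d.\ lower-bound proof, but that proof first bounds every equilibrium precision by an $\nbplay$-independent $\ellmax$, an argument that relies on the common distribution $\mu$ having full support on the finite set $\Xset$ (so that $\min_x\mu(x)>0$ is a fixed constant). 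In the joint setting the marginals $\mu_i$ change with $\nbplay$ and may put arbitrarily small or zero mass on some $x$, so a literal transcription would yield constants ($\ellmax$, $c_{\min}(\ellmax)$) that covertly depend on $\nbplay$, which would conflict with the claim that $\constmin^\prime$ depends on $\nbplay$ only through $\espj{\frac{1}{\nbplay}\sum_{i\in N}x_ix_i^{\top}}$. Your two-regime inequality $\ell\le \bigl(c_i(\ell)/c_{\min}(1)\bigr)^{1/\pmax}+c_i(\ell)/c_{\min}(1)$ (valid for all $\ell\ge 0$ from the growth hypotheses, the case $\ell=0$ being trivial), followed by Jensen's inequality on the first term, removes the need for $\ellmax$ altogether; the second, linear term contributes only a lower-order $\nbplay^{-\qhomo\frac{\pmin-1}{\pmin+\qhomo}}$ to $b$, so the exponent $\frac{\pmin-1}{\pmin+\qhomo}+\alpha/\qhomo$ and hence the final rate are unchanged, and your constants involve only $c_{\min}(1)$, $c_{\max}(1)$, the optimal-design value, and $\scalar\bigl((\espj{\frac{1}{\nbplay}\sum_{i\in N}x_ix_i^{\top}})^{-1}\bigr)$, exactly as the statement requires. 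In short, the paper's route buys brevity by reusing the i.i.d.\ computation verbatim while leaving the full-support issue implicit, whereas your variant makes the adaptation airtight at the cost of one extra elementary inequality.
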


\begin{proof}
In this first step, we compute the value of the potential function for a particular constant strategy in which all players use the precision $\lambda(x)=\nbplay^{-\frac{\qhomo + 1}{\pmin + \qhomo}}$ for all values of $x\in\Xset$.  By abuse of notation, we denote this precision profile by  $(\nbplay^{-\frac{\qhomo + 1}{\pmin + \qhomo}},\dots,\nbplay^{-\frac{\qhomo + 1}{\pmin + \qhomo}})$. The value of the potential for this precision profile is
\begin{align}
  \potent(\nbplay^{-\frac{\qhomo + 1}{\pmin +
  \qhomo}},\dots,\nbplay^{-\frac{\qhomo + 1}{\pmin + \qhomo}})
  &= \esp{\sum_{\play = 1}^{\nbplay} \cost_\play(\nbplay^{-\frac{\qhomo + 1}{\pmin + \qhomo}})} + \scalar((\esp{\sum_{\play = 1}^{\nbplay} x_ix_i^T \nbplay^{-\frac{\qhomo + 1}{\pmin + \qhomo}}})^{-1})\nonumber\\
  &=\sum_{\play = 1}^{\nbplay}\cost_\play(\nbplay^{-\frac{\qhomo + 1}{\pmin + \qhomo}})
    +\scalar((n^{\frac{\pmin-1}{\pmin + \qhomo}} \esp{\frac{1}{n}\sum_{i \in N}x_ix_i^{\top}})^{-1})\nonumber\\
  & \le \sum_{\play = 1}^\nbplay \nbplay^{-\pmin\frac{\qhomo + 1}{\pmin + q}} \cost_i(1) + \scalar((\nbplay^{\frac{\pmin-1}{\pmin + q}} \esp{\frac{1}{n}\sum_{i \in N}x_ix_i^{\top}})^{-1})\label{eq:line2.2}\\
  &=\nbplay^{-\pmin\frac{\qhomo + 1}{\pmin + q}} \sum_{\play = 1}^\nbplay \cost_i(1) + \nbplay^{\frac{\qhomo(1 - \pmin)}{\pmin + \qhomo}}\scalar((\esp{\frac{1}{n}\sum_{i \in N}x_ix_i^{\top}})^{-1})\label{eq:line2bis.2}\\
  &\le \nbplay^{-\frac{\qhomo(\pmin-1)}{\pmin + \qhomo}}c_{\max}(1) + \nbplay^{\frac{\qhomo(1 - \pmin)}{\pmin + \qhomo}}\scalar((\esp{\frac{1}{n}\sum_{i \in N}x_ix_i^{\top}})^{-1})\label{eq:line3.2}\\
  &= \nbplay^{-\frac{\qhomo(\pmin-1)}{\pmin + \qhomo}}\p{c_{\max}(1) + \scalar((\esp{\frac{1}{n}\sum_{i \in N}x_ix_i^{\top}})^{-1})}\label{eq:line4.2},
 \end{align}
 where we use that $c_i(1)\ge a^{\pmin}c_i(1/a)$ with $a=n^{\frac{q+1}{\pmin+q}}$ (from the theorem's assumption) in \eqref{eq:line2.2}, the homogeneity of $F$ (Assumption~\ref{scalarization.type}) in \eqref{eq:line2bis.2}, and the theorem's assumption, which implies that $c_i(1)\le c_{\max}(1)$ for all $i$, in \eqref{eq:line3.2}.

 As $c_i(\ell)\ge0$ and $\lambdabeq$ is a minimum of the potential, it
 holds that
 \begin{align*}
   \Cpublic{\lambdabeq}&\le\potent(\lambdabeq)\le\potent(\nbplay^{-\frac{\qhomo + 1}{\pmin +
  \qhomo}},\dots,\nbplay^{-\frac{\qhomo + 1}{\pmin + \qhomo}}).
 \end{align*}
 Hence, the right-hand-side of \eqref{eq.conv.2} holds with $D=\p{c_{\max}(1) + \scalar((\esp{\frac{1}{n}\sum_{i \in N}x_ix_i^{\top}})^{-1})}$. 

\paragraph{Lower bound.} The lower bound is then simply obtained by plugging the new upper bound of the potential to the proof of the lower bound obtained in Section~\ref{proof.conv}.
\end{proof}

The main difference between Theorem~\ref{thm.conv.2} and Theorem~\ref{thm.conv} is that in Theorem~\ref{thm.conv}, the constants $d$ and $D$ \emph{do} not depend $n$ whereas in Theorem~\ref{thm.conv.2}, the constants $d'$ and $D'$ do depend on $\espj{\frac{1}{n}\sum_{i \in N}x_ix_i^{\top}}$. This is because in Theorem~\ref{thm.conv.2}, we do not make any assumption on the joint distribution.  We thus do not have any guarantee that the joint distribution will have some stable property when the number of agents grow.  On the other hand, if $\espj{\frac{1}{n}\sum_{i \in N}x_ix_i^{\top}}$ is independent on $n$, the constants $d'$ and $D'$ will also not depend on $n$. 

In fact, the multiplicative terms of Theorem~\ref{thm.conv.2} are simply obtained by replacing $\esp{xx^T}$ with $\espj{\frac{1}{n}\sum_{i \in N}x_ix_i^{\top}}$ in the multiplicative terms of Theorem~\ref{thm.conv} (note that we retrieve Theorem~\ref{thm.conv} when data points are iid). This latter term captures precisely the impact of correlation on the estimation cost. For instance,if data points are highly correlated in a way that poorly represents the input space, $\scalar((\espj{\frac{1}{n}\sum_{i \in N}x_ix_i^{\top}})^{-1})$ can be arbitrarily large, leading to a commensurately large upper bound (the corresponding lower bound behavior is similar).

\section{Additional illustrations}
\label{sec:more_experiments}
%
%

\newcommand\fig[1]{
  \begin{minipage}{.2\linewidth}
    \includegraphics[width=\linewidth]{figures_paper/#1}
  \end{minipage}
}
\newcommand\figs[1]{
  \fig{#1_opt}&\fig{#1_101}&\fig{#1_120}&\fig{#1_150}\\
}
\newcommand\figdistrib[2]{
  \fig{#2}&\fig{#1_101}&\fig{#1_120}&\fig{#1_150}\\
}
\newcommand\figd[2]{
  $d=#1$&\figs{#2#1}
}

\subsection{Illustration of the equilibrium characterization}

In this section, we provide additional illustrations on the equilibrium characterization (Section~\ref{sec:equilibrium}), which complement Figure~\ref{fig.precs} and show that the discussion on that figure in the paper continues to apply in different settings, namely: 
\begin{enumerate}[a.,leftmargin=22pt,itemsep=0pt,topsep=0pt]
\item In Figure~\ref{fig:various_degree}, we vary the degree $d$ of the polynomial regression (Figure~\ref{fig.precs} has $d=4$). 

\item In Figure~\ref{fig:various_distribution}, we vary the distribution $\mu$ (Figure~\ref{fig.precs} has a uniform distribution that corresponds to the first row in Figure~\ref{fig:various_distribution}). Here, we fix $d=4$ and we do not plot the optimal design as it does not depend on $\mu$. 

\item In  Figure~\ref{fig:various_scalarization}, we use a different scalarization, the squared Frobenius norm ($F(M)=\sum_{ij}M_{ij}^2$), while keeping a uniform distribution $\mu$ and $d=4$. 
\end{enumerate}

\begin{figure}[ht]
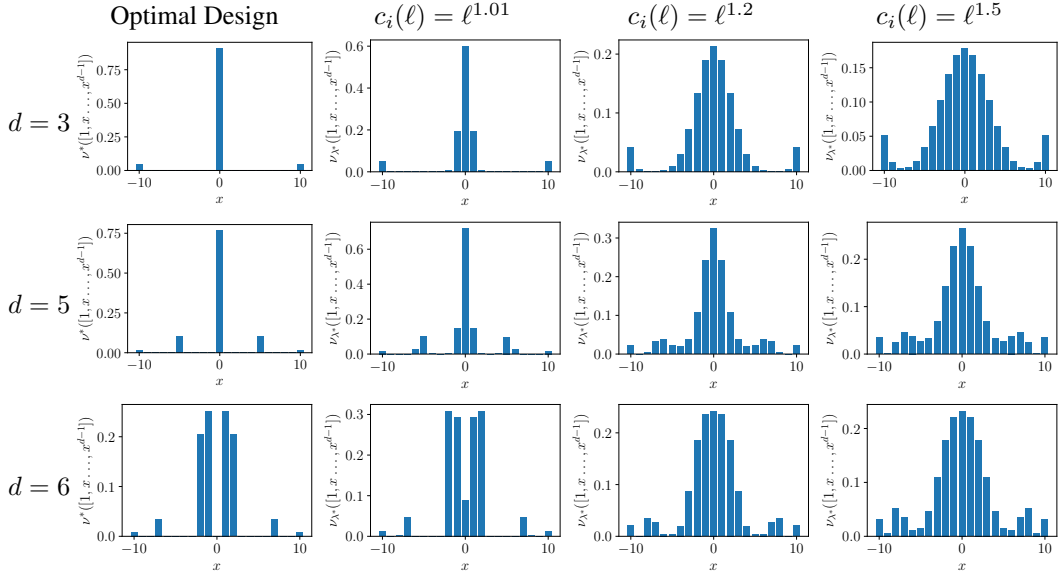

  \centering
  \begin{tabular}{c@{}c@{}c@{}c@{}c}
    &Optimal Design & $c_i(\ell)=\ell^{1.01}$ & $c_i(\ell)=\ell^{1.2}$ & $c_i(\ell)=\ell^{1.5}$\\
    \figd{3}{unif_deg}%
    \figd{5}{unif_deg}%
    \figd{6}{unif_deg}
  \end{tabular}
  \caption{Optimal design $\nu^*$ and allocation of precision at equilibrium $\nu_{\lambdabeq}$ with various degrees $d$ of the polynomial regression (here $\mu$ is uniform and the scalarization is the trace as in Figure~\ref{fig.precs}).}
  \label{fig:various_degree}
\end{figure}
\FloatBarrier

Figure~\ref{fig:various_degree} illustrates the optimal design $\nu^*$ and the allocation of precision at equilibrium $\nu_{\lambdabeq}$ as defined in Theorem~\ref{thm.optim.design} in the same setting as Figure~\ref{fig.precs} ($d=4$) with different degrees for the polynomial regression ($d=3, 5, 6$). We observe that for $d = 3$ and $d = 5$, the optimal design puts maximal weight on the central vector $[1, x, \cdots, x^{d-1}]$ with $x = 0$ while for $d = 4$ and $d=6$, this vector does not belong to the support of the optimal design. We observe a similar property for the equilibrium of games with near-linear data provision cost. The allocations of precision at equilibrium for $p = 1.2$ and $p=1.5$, however, are significantly different from the optimal design for all values of $d$ (in particular with a maximum of precision for the central vector with $x=0$), and they have a shape that does not significantly vary with  the degree $d$.

\begin{figure}[ht]
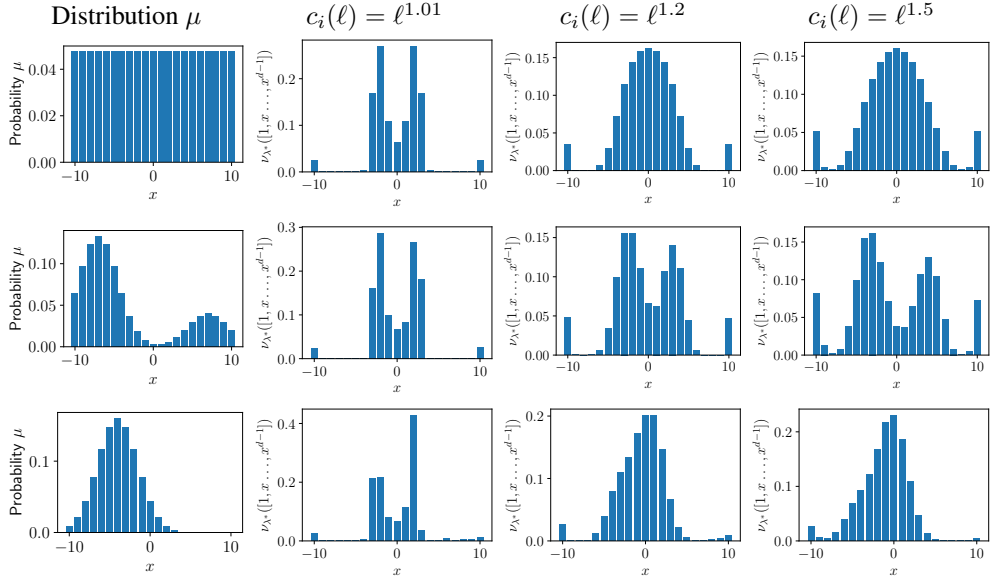

  \centering
  \begin{tabular}{c@{}c@{}c@{}c@{}c}
    Distribution $\mu$ & $c_i(\ell)=\ell^{1.01}$ & $c_i(\ell)=\ell^{1.2}$ &
                                                                            $c_i(\ell)=\ell^{1.5}$\\
    \figdistrib{unif_deg4}{unif_distrib}%
    \figdistrib{double_deg4}{double_distrib}%
    \figdistrib{simple_deg4}{simple_distrib}%
  \end{tabular}
  
  \caption{Allocation of precision at equilibrium $\nu_{\lambdabeq}$ with various distributions $\mu$ (here $d=4$ and the scalarization is the trace as in Figure~\ref{fig.precs}). The optimal design $\nu^*$ does not depend on $\mu$ and is therefore the same as in Figure~\ref{fig.precs}.}
  \label{fig:various_distribution}
\end{figure}
\FloatBarrier

Figure~\ref{fig:various_distribution} illustrates the allocation of precision at equilibrium $\nu_{\lambdabeq}$ as defined in Theorem~\ref{thm.optim.design} in the same setting as Figure~\ref{fig.precs} with various distributions $\mu$ of the agents' $x_i$ vectors. The first row of graphs corresponds to the exact same setting as Figure~\ref{fig.precs} (uniform distribution) while the next rows show the results for other distributions. In addition to Figure~\ref{fig.precs}, we plot the results for monomial costs of exponent $p=1.5$, but we do not plot the optimal design $\nu^*$ as it is the same for all distributions (and shown on Figure~\ref{fig.precs}). We first observe that, for all distributions, the allocation of precision at equilibrium is close to the optimal design (and hence almost independent of the distribution) for near-linear provision costs ($p=1.01$). For more convex provision costs however, the allocation of precision at equilibrium varies with $\mu$ in non-trivial ways. In the second row of Figure~\ref{fig:various_distribution} (compared to the first), we observe that $\nu_{\lambdabeq}([1, x, \cdots, x^{d-1}])$ shrinks for values of $x$ close to $0$. This is explained by two factors: i) vectors with $x$ close to $0$ have a low probability according to $\mu$ and ii) provision costs are superlinear meaning that the agent cannot compensate this probability by multiplying the precision attributed to this vector without prohibitively increasing its cost. We observe a similar behavior for the third row of Figure~\ref{fig:various_distribution} where $\nu_{\lambdabeq}$ has a shape similar to the first row with values skewed to the left where vectors have higher probability.

\begin{figure}[ht]
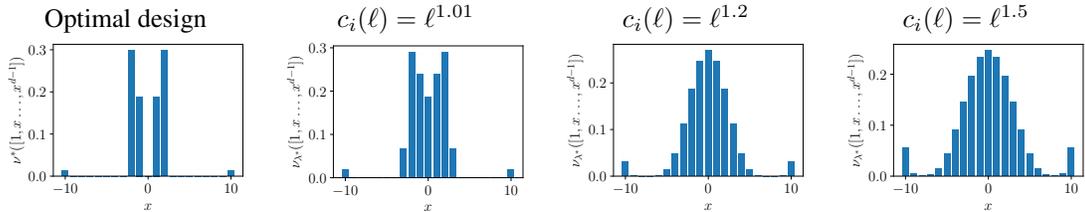

  \centering
  \begin{tabular}{ccccc}
    &Optimal design & $c_i(\ell)=\ell^{1.01}$ & $c_i(\ell)=\ell^{1.2}$ &
                                                                         $c_i(\ell)=\ell^{1.5}$\\
    &\figs{unif_frob}
  \end{tabular}
  
  \caption{Optimal design $\nu^*$ and allocation of precision at equilibrium $\nu_{\lambdabeq}$ with the squared Frobenius norm as a scalarization $F$ (here $\mu$ is uniform and $d=4$ as in Figure~\ref{fig.precs}).}
  \label{fig:various_scalarization}
\end{figure}

Figure~\ref{fig:various_scalarization} illustrates the optimal design $\nu^*$ and the allocation of precision at equilibrium $\nu_{\lambdabeq}$ as defined in Theorem~\ref{thm.optim.design} in the same setting as Figure~\ref{fig.precs} but when using the squared Frobenius norm as a scalarization to define the estimation cost instead of the trace. We observe that both figures show similar trends. In particular, Figure~\ref{fig:various_scalarization} with the squared Frobenius norm exhibits the same behaviors as discussed before on Figure~\ref{fig.precs} for the trace: the allocation of precision at equilibrium is close to the optimal design for $p=1.01$ while it departs significantly for $p=1.2$ and $p=1.5$ where the precision for the vector $[1,0,\dots,0]$ is maximal (instead of zero in the optimal design).

\subsection{Numerical exploration of Theorem~\ref{thm.conv}}

In this section, we explore the result of Theorem~\ref{thm.conv} through numerical simulations. We consider a one-dimensional model with $\Xset = \{1\}$. The scalarization is the trace (which satisfies Assumption~\ref{scalarization.type} with $q = 1$). This means that $\Cestim(\lambdab)=(\sum_i\lambda_i(1))^{-1}$. Recall that Theorem~\ref{thm.conv} shows that 
\begin{align*}
    d n^{-q\frac{\pmin - 1}{\pmin + 1} - \alpha} \le \Cestim(\lambdabeq)  \le 
    D n^{-q \frac{\pmin - 1}{\pmin + 1}}. 
\end{align*}
The goal of this section is to compare the upper and lower bounds of Theorem~\ref{thm.conv} to $\Cestim(\lambdabeq)$, to see if the true convergence rate is close to the lower or to the upper bound.

In the remaining of this subsection, we will display $\Cestim(\lambdabeq)$ as a function of $n$ in loglog-scale and compare it to three possible convergence rates: 
\begin{enumerate}[(a)]
    \item $n^{-q\frac{\pmin - 1}{\pmin + q} - \alpha}$ (the rate of the lower bound of Theorem~\ref{thm.conv});
    \item $n^{-q\frac{\pmin - 1}{\pmin + q}}$ (the rate of the upper bound of Theorem~\ref{thm.conv}, which is the convergence rate when all players have cost $c_i(\ell)=\ell^{\pmin}$);
    \item $n^{-q\frac{\pmax - 1}{\pmax + q}}$ (the convergence rate when all players have cost $c_i(\ell)=\ell^{\pmax}$).
\end{enumerate}
Note that (a) is the fastest convergence rate, followed by (c) and then by (b).

In all plots in this section, we normalize the values such that they all start at the same point for $n=3$ ($n=3$ is the smallest game for which we compute $\Cestim(\lambdabeq)$).



\subsubsection{Heterogeneous agents with different exponents}

We first consider heterogeneous agents. For a given $n$, $n/3$ agents have provision costs $c_i(\ell) = \ell^\pmax$ and $2n/3$ agents have provision costs $c_i(\ell) = \ell^\pmin$. This setup satisfies the assumptions of Theorem~\ref{thm.conv} with the corresponding $\pmin$ and $\pmax$. We consider two setups: $(\pmin,\pmax) = (1, 4)$ and $(\pmin, \pmax) = (2, 3)$.

Figure~\ref{fig.heterogeneous} compares the convergence rate of $\Cestim(\lambdabeq)$ to the three bounds defined above. This figure suggests that the estimation cost behaves as when all players have estimation cost $\ell^{\pmax}$. Intuitively, this is explained by the fact that in the game, an agent that has a cost $c_i(\ell)=\ell^\pmin$ will give a very small precision. Hence, the game will almost behave as if this agent was not in the game. This explains why the convergence rate of $\Cestim(\lambdabeq)$ is driven by agents having exponent $\pmax$.


\begin{figure}
    \centering
    \begin{subfigure}[b]{0.49\textwidth}
     \centering
    \includegraphics[width=\textwidth]{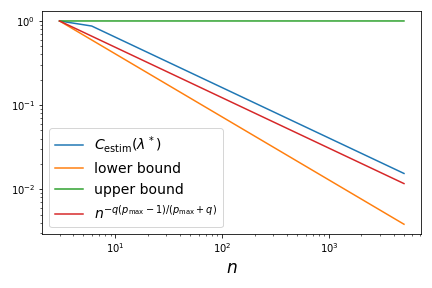}
    \caption{Comparison for $\pmin = 1$ and $\pmax=4$}
    \label{fig.thm.heterogeneous.log1}
    \end{subfigure}
      \begin{subfigure}[b]{0.49\textwidth}
     \centering
    \includegraphics[width=\textwidth]{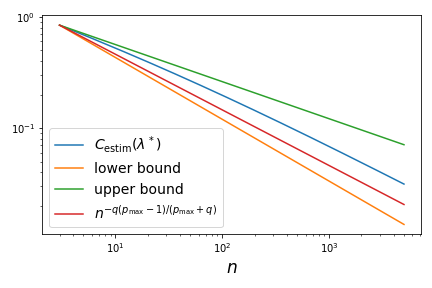}
    \caption{Comparison for $\pmin=2$ and $\pmax=3$}
    \label{fig.thm.heterogeneous.log2}
    \end{subfigure}
    \caption{Comparison of the rate of convergence of the estimation cost with different bounds for agents with heterogeneous costs}
    \label{fig.heterogeneous}
\end{figure}


\subsubsection{Agents with polynomial provision costs}

We then consider agents with polynomial provision costs. We assume that the $n$ agents have the same provision costs $c_i(\ell) = \sum_{k = \pmin}^\pmax \ell^k$. Again, these provision cost satisfy the assumptions of Theorem~\ref{thm.conv} with the corresponding $\pmin$ and $\pmax$.

Figure~\ref{fig.poly} compares the convergence rate of the covariance to the upper and lower bounds of Theorem~\ref{thm.conv}.  We observe that the convergence rate is close to the upper bound $n^{(\pmin - 1)/(\pmin + 1)}$. This result is natural as polynomials are sums of monomials and it is logical to expect the convergence rate to be according to the "worst" monomial of degree $\pmin$.

\begin{figure}
    \centering
    \begin{subfigure}[b]{0.49\textwidth}
     \centering
    \includegraphics[width=\textwidth]{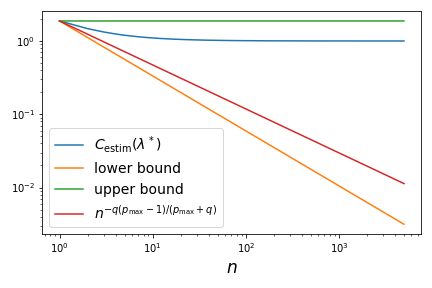}
    \caption{Comparison for $\pmin = 1$ and $\pmax=4$}
    \label{fig.thm.heterogeneous.polylog1}
    \end{subfigure}
      \begin{subfigure}[b]{0.49\textwidth}
     \centering
    \includegraphics[width=\textwidth]{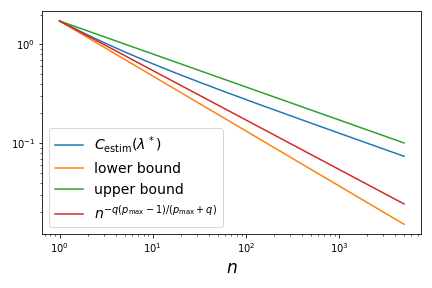}
    \caption{Comparison for $\pmin = 2$ and $\pmax = 3$}
    \label{fig.thm.heterogeneous.polylog2}
    \end{subfigure}
    \caption{Comparison of the rate of convergence of the estimation cost with different bounds for agents with polynomial costs}
    \label{fig.poly}
\end{figure}


\subsubsection{Agents with non-polynomial provision costs}

This result on polynomial functions alongside the fact that the precision of each agent goes to $0$ when the number of agents goes to infinity hints at the behavior of the estimation cost with more general provision costs. Indeed, if agents have provision cost which have a Taylor expansion at $0$, their cost can be well approximated by a polynomial function. The previous figure then suggests that the convergence rate in this case is driven by the first non-null term of the Taylor expansion of the function of degree $\pmin$.

We illustrate this in Figure~\ref{fig.cosh} where we consider homogeneous agents with provision costs $c_i(\ell) = \cosh(\ell) - 1$. Recall that $\cosh(\ell) - 1 = \sum_{k = 1}^\infty \frac{\ell^{2k}}{(2k)!}$. This model therefore satisfy our assumptions with $\pmin = 2$ and $\pmax = \infty$. According to our previous observations, we expect the convergence rate in this case to be the upper bound $(\pmin - 1)/(\pmin + 1)$ with $\pmin = 2$. Note that in this case our lower bound and $n^{-q(\pmax - 1)/(\pmax + 1)}$ both represent convergence rates of $n^{-q}$ corresponding to the non strategic setting.  Figure~\ref{fig.cosh} suggests indeed that the convergence rate is close to this upper bound.

\begin{figure}
    \centering
    \includegraphics[width=0.49\textwidth]{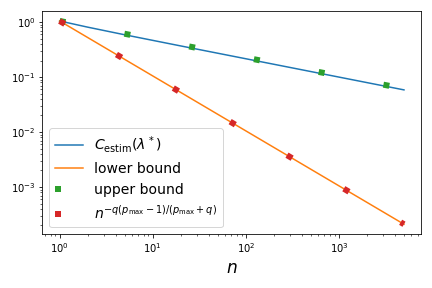}
    \caption{Comparison of the rate of convergence of the estimation cost with the upper bound of Theorem~\ref{thm.conv} for agents with hyperbolic cosine costs.}
    \label{fig.cosh}
\end{figure}

\section{Hardware and software used for experiments}
\label{app.reproducibility}

All experiments were run on a Dell xps-13 laptop with a Quad core Intel Core i7-8550U (-MT-MCP-) CPU under Ubuntu 18.04.
Experiments were made using Python 3 code which was submitted as supplementary material and  will be made publicly available with the full release of the paper. The main libraries used are presented in README.md and the versions used for the experiments are available in requirements.txt.

\end{document}